\newtheorem{example}{Example}
\newcommand{\GF}[2][\cc]{{\mathbb F}_{{#1}^{#2}}}
\newcommand{\GFm}{\@ifstar{\GF m^\star}{\GF m}}
\newcommand{\wt}{\mathrm{wt}}
\newcommand{\Rmnum}[1]{\expandafter\@slowromancap\romannumeral #1@}
\newcommand{\figcaption}{\def\@captype{figure}\caption}
\newcommand{\tabcaption}{\def\@captype{table}\caption}
\title{Codes, differentially $\delta$-uniform functions and $t$-designs\thanks
{The research of C. Tang was supported by National Natural Science Foundation
of China (Grant No. 11871058) and China West Normal University (14E013, CXTD2014-4
and the Meritocracy Research Funds).
The research of C. Ding was supported by The Hong Kong Research Grants Council, Project No. 16300418.
The research of M. Xiong was supported by The Hong Kong Research
Grants Council, Project No. NHKUST619/17.}}
\author{Chunming Tang \thanks{School of Mathematics  and Information,
China West Normal University, Nanchong 637002,  China, and also the Department of Mathematics, The Hong Kong University
of Science and Technology, Clear Water Bay, Kowloon, Hong Kong (\tt tangchunmingmath@163.com).}
\and Cunsheng Ding\thanks{Department of Computer Science and Engineering,
        The Hong Kong University of Science and Technology, Clear Water Bay,
        Kowloon, Hong Kong ({\tt cding@ust.hk}).}
\and    Maosheng Xiong\thanks{Department of Mathematics,
        The Hong Kong University of Science and Technology, Clear Water Bay,
        Kowloon, Hong Kong ({\tt mamsxiong@ust.hk}).}
        }
\begin{document}

\maketitle

\begin{abstract}
Special functions, coding theory and $t$-designs have close connections and interesting interplay.
A standard approach to constructing $t$-designs is the use of linear codes with certain regularity.
The Assmus-Mattson Theorem and the automorphism groups are two ways for proving that a code has
sufficient regularity for supporting $t$-designs. However, some linear codes hold $t$-designs,
although they do not satisfy the conditions in the Assmus-Mattson Theorem and do
not admit a $t$-transitive or $t$-homogeneous group as a subgroup of their automorphisms. The
major objective of this paper is to develop a theory for explaining such codes and obtaining
such new codes and hence new $t$-designs. To this end, a general theory for
punctured and shortened codes of linear codes supporting $t$-designs is established,
a generalized Assmus-Mattson
theorem is developed, and a link between $2$-designs and differentially $\delta$-uniform functions
and $2$-designs is built. With these general results, binary codes with new parameters and known
weight distributions are obtained, new $2$-designs and Steiner system $S(2, 4, 2^n)$ are produced
in this paper.
\end{abstract}

\begin{keywords}
Assmus-Mattson theorem, bent function,  differentially $\delta$-uniform function, linear code,  $t$-design.
\end{keywords}

\begin{AMS}
94B05, 05B05, 06E30.
\end{AMS}


\section{Introduction}\label{Sec-introduct}

We start with a brief recall of $t$-designs. Let $\mathcal P$ be a set of $\nu$ elements
and $\mathcal B$ a multiset of $b$ $k$-subsets of $\mathcal P$, where $\nu \ge 1$,
$b \ge 0$ and  $1 \le k \le \nu$.  Let $t$ be a positive integer satisfying $1 \le t \le \nu$.
The pair $\mathbb  D=(\mathcal P, \mathcal B)$ is called a  $t$-$(\nu, k, \lambda )$ \emph{design}, or simply \emph{$t$-design},
if  every $t$-subset of $\mathcal P$ is contained in exactly $\lambda$ elements of $\mathcal B$.  The elements of $\mathcal P$ are called \emph{points},
and those of $\mathcal B$ are referred to as \emph{blocks}.

When $\mathcal B=\emptyset$, i.e., $b=0$, we put $\lambda=0$ and call
$(\mathcal P, \emptyset)$ a $t$-$(\nu, k, 0)$ design for any $t$ and $k$
with $1 \leq t \leq \nu$ and $0 \leq k \leq \nu$. A $t$-$(\nu, k, \lambda)$
design with $t > k$ must have $\lambda=0$ and must be the design
$(\mathcal P, \emptyset)$. These designs are called trivial designs. We have these conventions for the easiness of
description in the sequel. A $t$-$(\nu, k, \lambda)$ design $(\mathcal P, \mathcal B)$
is also said to be trivial if every $k$-subset of $\mathcal P$ is a block.

A $t$-design is called \emph{simple} if $\mathcal B$ does not contain repeated blocks.
 A $t$-$(\nu, k, \lambda)$ design is called a \emph{Steiner system} and denoted by
$S(t,k,\nu)$ if $t \ge 2$ and $\lambda=1$. The parameters of a $t$-$(\nu,k, \lambda )$ design satisfy:
\begin{align*}
\binom{\nu}{t} \lambda =\binom{k}{t} b.
\end{align*}

Let $\mathrm{GF}(q)$ denote the finite field with $q$ elements, where
$q$ is a prime power.
A linear code $\mathcal C$ over $\mathrm{GF}(q)$ may induce a $t$-design under certain conditions, which is formed by
the supports of  codewords of a fixed Hamming weight in $\mathcal C$.
 Let
$\mathcal P(\mathcal C)=\{0,1, \dots, \nu-1\}$ be the set of the coordinate positions of $\mathcal C$, where $\nu$ is the length of $\mathcal C$.
For a codeword $\mathbf c =(c_0, \dots, c_{\nu-1})$ in $\mathcal C$, the \emph{support} of  $\mathbf c$
is defined by
\begin{align*}
\mathrm{Supp}(\mathbf c) = \{i: c_i \neq 0, i \in \mathcal P(\mathcal C)\}.
\end{align*}
Let $\mathcal B_{w}(\mathcal C)
=\frac{1}{q-1}\{\{   \mathrm{Supp}(\mathbf c): wt(\mathbf{c})=w
~\text{and}~\mathbf{c}\in \mathcal{C}\}\}$, here and hereafter $\{\{\}\}$ is the multiset notation and
$\frac{1}{q-1}S$
denotes the multiset obtained after dividing the multiplicity of each element in the multiset $S$
by $q-1$.  For some special $\mathcal C$, $\left (\mathcal P(\mathcal C),  \mathcal B_{w}(\mathcal C) \right)$
is a $t$-design. If  $\left (\mathcal P(\mathcal C),  \mathcal B_{w}(\mathcal C) \right)$ is a $t$-design for any $0\le w \le \nu$,
we say that the code $\mathcal C$ \emph{supports $t$-designs}. Notice that such design
$\left (\mathcal P(\mathcal C),  \mathcal B_{w}(\mathcal C) \right)$ may have repeated
blocks or may be simple or trivial.

With this approach, many $t$-designs have been obtained from linear codes
\cite{AK92,Ding18dcc,Ding18jcd,DLX17,HKM04,HMT05,KM00,MT04,Ton98,Ton07}.
A major approach to constructing $t$-designs from codes is the use of the
Assmus-Mattson Theorem \cite{AM74,HP10}.
Another major approach to constructing $t$-designs from linear codes is the use of linear codes with $t$-homogeneous or $t$-transitive automorphism groups \cite[Theorem 4.18]{Dingbk18}. Interplay between codes and designs could
be found in \cite{AK92,AK98,AM74,Ding15,Ding18dcc,Ding18jcd,Dingbk18,DL17,DLX17,HP10,KM00,MS77,MT04,Ton98,Ton07}.

In 2018, Ding, Munemasa and Tonchev \cite{DMT18} introduced a family of binary linear codes   based on bent vectorial functions. These codes
 support   $2$-designs, although they do not satisfy the conditions of the Assmus-Mattson theorem, and do not admit $2$-transitive
or $2$-homogeneous  automorphism groups in general. Recently, Tang, Ding and Xiong \cite{TDX19} proved that some ternary codes,
which do not satisfy the conditions of the Assmus-Mattson theorem and do not admit 2-transitive or 2-homogeneous automorphism groups in general, hold 2-designs. These works motivate us to develop a theory that can in one strike explain why these codes support $t$-designs on one hand, and may give new $t$-designs on the other hand.

In this paper, we first determine the parameters of some shortened and punctured codes of some codes supporting $t$-designs
and pay special attention to the codes from bent functions and bent vectorial functions.
Next, we give a characterization of  codes supporting $t$-designs via the weight distributions of
their shortened and punctured codes. Further, we present a generalization of the Assmus-Mattson theorem, which  provides
a unified explanation of the codes supporting $2$-designs in \cite{DMT18} and \cite{TDX19}.
Finally, we present  a design-theoretical characterization of differentially two-valued functions.
Based on the established results, we use special differentially two-valued functions to give new binary linear codes, which hold $2$-designs but do not satisfy
the conditions of the Assmus-Mattson theorem and do not admit $2$-transitive or $2$-homogeneous automorphism groups in general.

The rest of this paper is arranged as follows. Section \ref{sec:intr} introduces  definitions and results related to
linear codes, $t$-designs and differentially $\delta$-uniform functions. Section \ref{sec:shpu} investigates shortened and punctured
codes of some linear codes supporting $t$-designs. Section \ref{sec:chde} gives a characterization of  codes supporting $t$-designs
by means of their shorted and punctured codes. Section \ref{sec:geam} presents a generalization of the Assmus-Mattson theorem.
Section \ref{sec:dede} gives a design-theoretical characterization of differentially two-valued functions and presents
new codes that do not satisfy
the conditions of the Assmus-Mattson theorem and do not admit $2$-transitive or $2$-homogeneous automorphism groups in general,
but nevertheless hold $2$-designs.
Section \ref{sec:conc} concludes this paper and makes
concluding remarks.

\section{Preliminaries}\label{sec:intr}

In this section, we briefly recall some results  on the Pless power moments of linear codes, $t$-designs, differentially $\delta$-uniform functions, and shortened and punctured codes.

\subsection{The Pless power moments and the Assmus-Mattson theorem}

Let $\mathcal C$ be a $[\nu,m,d]$ \emph{linear code} over the finite field $\mathrm{GF}(q)$, where
$q$ is a prime power.  Denote by
$(A_0, A_1, \dots, A_\nu)$ and $(A_0^{\perp}, A_1^{\perp}, \dots, A_\nu^{\perp})$
 the weight distributions of $\mathcal C$ and its dual $\mathcal C^{\perp}$,
 respectively.
 The \emph{Pless power moments} \cite{HP10} are given by
 \begin{align}\label{eq:PPM}
 \sum_{i=0}^\nu  i^t A_i= \sum_{i=0}^t (-1)^i A_i^{\perp}   \left [  \sum_{j=i}^t   j ! S(t,j) q^{m-j} (q-1)^{j-i} \binom{\nu-i}{\nu -j}  \right ],
 \end{align}
where $0\le t \le \nu$ and $S(t,j)=\frac{1}{j!} \sum_{i=0}^j  (-1)^{j-i} \binom{j}{i} i^t$.
These power moments can be employed to prove the following theorem \cite[Theorem 7.3.1]{HP10}.
\begin{theorem}\label{eq:wt-wtd}
Let $S\subseteq \{1,2,\dots, \nu\}$ with $\# S =s$. Then the weight distributions of $\mathcal C$ and $\mathcal C^{\perp}$
are uniquely determined by $A_1^{\perp}, \dots, A_{s-1}^{\perp}$ and the $A_i$ with $i\not \in S$.
These values can be found from the first $s$ equations in (\ref{eq:PPM}).
\end{theorem}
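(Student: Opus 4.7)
\medskip

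The plan is to view the first $s$ Pless power moment identities, for $t=0,1,\dots,s-1$, as a linear system in the $s$ unknowns $\{A_i : i \in S\}$, with every other quantity treated as known, and then to show that the coefficient matrix is invertible.

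First I would rewrite the $t$-th identity by moving to the left-hand side only those $A_i$ with $i \in S$:
\begin{equation*}
\sum_{i \in S} i^t A_i \;=\; \sum_{i=0}^{t} (-1)^i A_i^\perp \Bigl[\,\sum_{j=i}^{t} j!\,S(t,j)\,q^{m-j}(q-1)^{j-i}\binom{\nu-i}{\nu-j}\Bigr] \;-\; \sum_{i \notin S} i^t A_i,
\end{equation*}
where on the right we use $A_0^\perp=1$, $A_0=1$, the hypothesized values $A_1^\perp,\dots,A_{s-1}^\perp$, and the $A_i$ for $i \notin S$. Since the first identity (for $t=0$) only involves $A_0^\perp$ and the second only involves $A_0^\perp,A_1^\perp$, and so on, the right-hand side for $t=0,1,\dots,s-1$ indeed involves only the listed known quantities.

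Next I would argue that the $s \times s$ coefficient matrix $M=(i^t)_{0\le t\le s-1,\ i\in S}$ is a Vandermonde matrix. Writing the elements of $S$ as $i_1<i_2<\dots<i_s$, the matrix $M$ is (the transpose of) the classical Vandermonde matrix in the nodes $i_1,\dots,i_s$, whose determinant equals $\prod_{1\le a<b\le s}(i_b-i_a)$. Because the elements of $S$ are distinct positive integers, this determinant is nonzero, so $M$ is invertible over $\mathbb{Q}$. Hence the values $\{A_i:i\in S\}$ are uniquely determined by the first $s$ Pless power moments and the data listed in the theorem.

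Once the full weight distribution $(A_0,A_1,\dots,A_\nu)$ of $\mathcal C$ has been recovered in this way, the weight distribution of $\mathcal C^\perp$ is determined by the MacWilliams identity (equivalently, by reading off $A_i^\perp$ from the remaining Pless power moment equations). I do not expect any real obstacle here: the entire content of the proof is the Vandermonde observation, and the only care needed is bookkeeping to confirm that the $t$-th equation introduces no dual weight beyond $A_t^\perp$, so that for $t\le s-1$ only $A_1^\perp,\dots,A_{s-1}^\perp$ appear.
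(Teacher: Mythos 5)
Your argument is correct and is exactly the standard one: the paper itself gives no proof of this theorem, citing it as \cite[Theorem 7.3.1]{HP10}, and the proof there is precisely your Vandermonde argument (the $t$-th power moment for $t\le s-1$ involves only $A_0^\perp,\dots,A_t^\perp$ on the right, and the coefficient matrix $(i^t)_{0\le t\le s-1,\,i\in S}$ is invertible because the elements of $S$ are distinct). The only remark worth adding is that recovering $(A_i^\perp)$ at the end via the MacWilliams identity is exactly the intended reading of ``the weight distributions of $\mathcal C$ and $\mathcal C^\perp$ are uniquely determined,'' so nothing is missing.
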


The following is a general version of the Assmus-Mattson Theorem.

\begin{theorem}\label{thm-AMTheoremExt}
Let $\mathcal C$ be a linear code over $\mathrm{GF}(q)$ with length $\nu$ and minimum weight $d$.
Let $\mathcal C^{\perp}$  with minimum weight $d^{\perp}$ denote the dual code of $\mathcal C$. Let  $t~(1\le t <\min \{d, d^{\perp}\})$  be an integer such that there are at most $d^{\perp}-t$ weights of $\mathcal C$ in $\{1,2, \ldots, \nu-t\}$.
Then $(\mathcal P(\mathcal C), \mathcal B_{k}(\mathcal C))$ and $(\mathcal P(\mathcal C^{\perp}), \mathcal B_{k}(\mathcal C^{\perp}))$
are  $t$-designs for all $k\in \{0, 1, \ldots, \nu\}$.
 \end{theorem}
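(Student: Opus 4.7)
The plan is to apply Theorem~\ref{eq:wt-wtd} to the shortened code $\mathcal{C}^T$ for every subset $T$ with $|T|\le t$, showing that its weight distribution depends only on $|T|$, and then to deduce the $t$-design property via inclusion--exclusion; the dual conclusion follows from running the mirror argument on $(\mathcal{C}^\perp)^T$.

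First I would fix $T\subseteq\mathcal{P}(\mathcal{C})$ with $|T|\le t$ and consider $\mathcal{C}^T$, of length $\nu-|T|$, whose dual is the punctured code $(\mathcal{C}^\perp)_T$. Since $|T|\le t<d^\perp$, no nonzero codeword of $\mathcal{C}^\perp$ has its support contained in $T$, so puncturing is injective on $\mathcal{C}^\perp$ and $(\mathcal{C}^\perp)_T$ has minimum weight at least $d^\perp-|T|$. Consequently $A_j((\mathcal{C}^\perp)_T)=0$ for every $1\le j\le d^\perp-|T|-1$, independently of the choice of $T$.

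Second, I would bound the number of distinct nonzero weights of $\mathcal{C}^T$ in $\{1,\dots,\nu-|T|\}$. These lie in the nonzero weights of $\mathcal{C}$ on that interval, which split into at most $d^\perp-t$ weights in $\{1,\dots,\nu-t\}$ (by hypothesis) plus at most $t-|T|$ additional weights in $\{\nu-t+1,\dots,\nu-|T|\}$, for a total of at most $d^\perp-|T|$. Invoking Theorem~\ref{eq:wt-wtd} on $\mathcal{C}^T$ with $S$ equal to this weight set, its first $|S|$ Pless equations uniquely determine the weight distribution of $\mathcal{C}^T$ from $A_0^\perp=1$ and $A_j^\perp=0$ for $1\le j\le |S|-1$. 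Because this input data does not depend on $T$, the weight distribution of $\mathcal{C}^T$ depends only on $|T|$.

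Third, for $T$ with $|T|=t$ and any weight $w$, inclusion--exclusion gives
\[
\#\{c\in\mathcal{C}:\mathrm{wt}(c)=w,\ T\subseteq\mathrm{Supp}(c)\}=\sum_{S\subseteq T}(-1)^{|S|}A_w(\mathcal{C}^S),
\]
which depends only on $|T|=t$; dividing by $q-1$ yields the $t$-design property for $\mathcal{B}_w(\mathcal{C})$ for every $w$. For $\mathcal{C}^\perp$ I would run the parallel argument on $(\mathcal{C}^\perp)^T$, whose dual is the punctured code $\mathcal{C}_T$ of minimum weight at least $d-|T|$ (using $|T|<d$); constancy of the weight distribution of $(\mathcal{C}^\perp)^T$ in $T$, followed by the same inclusion--exclusion, yields the $t$-design property for $\mathcal{B}_w(\mathcal{C}^\perp)$.

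I expect the main obstacle to be the dual-side weight bound: the hypothesis directly controls only the weights of $\mathcal{C}$ in $\{1,\dots,\nu-t\}$, so obtaining the corresponding bound on the number of nonzero weights of $\mathcal{C}^\perp$ in $\{1,\dots,\nu-t\}$ needed to run Pless uniqueness on $(\mathcal{C}^\perp)^T$ requires extracting this information indirectly, for instance by combining the already-established constancy of the weight distribution of $(\mathcal{C}^\perp)_T=(\mathcal{C}^T)^\perp$ (MacWilliams applied to the first half of the argument) with a careful count of achievable weights.
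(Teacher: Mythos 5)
Your treatment of $\mathcal C$ itself is sound and is essentially the classical Assmus--Mattson argument (note that your superscript/subscript convention for shortening versus puncturing is the reverse of the paper's): the shortened code $\mathcal C_T$ has all its nonzero weights among at most $d^\perp-|T|$ admissible values, its dual $(\mathcal C_T)^\perp=(\mathcal C^\perp)^T$ has no nonzero words of weight below $d^\perp-|T|$, so Theorem~\ref{eq:wt-wtd} pins down the weight distribution of $\mathcal C_T$ independently of $T$ for every $|T|\le t$, and inclusion--exclusion over subsets of $T$ then gives the $t$-design property for every $\mathcal B_w(\mathcal C)$. This is a more direct route than the paper's, which obtains the statement as a corollary of its generalized Assmus--Mattson theorem (Theorem~\ref{thm-designGAMtheorem}) by taking $S$ to be the set of nonzero weights of $\mathcal C$ in $\{d,\dots,\nu-t\}$, so that all the auxiliary designs required there are trivially empty; but the engine inside Theorem~\ref{thm-designGAMtheorem} (Lemmas~\ref{lem-C-shortened}, \ref{lem:C-DUAL}, \ref{lem:C-T} and Theorem~\ref{thm:tdesign-wtcode}) is the same Pless-moment computation you perform.

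The dual half, however, contains a genuine gap, which you have correctly located but not closed. The hypothesis bounds the number of weights of $\mathcal C$, not of $\mathcal C^\perp$, so the ``parallel argument'' cannot be run symmetrically: you have no bound on the number of nonzero weights of the shortened code of $\mathcal C^\perp$, and a ``careful count of achievable weights'' will not produce one in general. What MacWilliams gives you from the first half is constancy of the weight distribution of the \emph{punctured} dual code (the dual of the shortened primal), whereas your inclusion--exclusion for $\mathcal B_w(\mathcal C^\perp)$ needs constancy of the \emph{shortened} dual code, which by Lemma~\ref{lem:C-S-P} is the dual of the \emph{punctured} primal code $\mathcal C^T$ --- and constancy of the weight distribution of $\mathcal C^T$ does not follow from constancy of that of $\mathcal C_T$ alone. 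The missing ingredient is the step from the (now established) $t$-design property of $\mathcal C$ to the weight distribution of the punctured code $\mathcal C^T$, which requires the intersection-number computation of Theorem~\ref{thm:t to 1}; this is exactly the paper's Lemma~\ref{lem:P:k:k+t} and Theorem~\ref{thm:pct-code}. Once you insert it, the chain ``shortened primal constant $\Rightarrow$ $\mathcal C$ supports $t$-designs $\Rightarrow$ punctured primal constant $\Rightarrow$ shortened dual constant $\Rightarrow$ $\mathcal C^\perp$ supports $t$-designs'' closes the proof; this cycle is precisely the equivalence recorded in the paper's Theorem~\ref{thm:tdesign-wtcode}.
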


Notice that some of the designs in Theorem \ref{thm-AMTheoremExt} may have \textbf{repeated blocks} or may
be \textbf{trivial} in the senses defined in Section \ref{Sec-introduct}. The following lemma provides a criterion for
obtaining a simple block set $\mathcal B_k(\mathcal C)$  \cite[Lemma 4.1]{Dingbk18}.

\begin{lemma}\label{lem-simpled}
Let $\mathcal C$ be a linear code over $\mathrm{GF}(q)$ with length $\nu$ and minimum weight $d$. Let $w$ be the largest integer with $w \le \nu$ satisfying
\begin{align*}
w-\left \lfloor \frac{w+q-2}{q-1} \right  \rfloor <d.
\end{align*}
Then there are no repeated blocks in $\mathcal B_k(\mathcal C)$ for any $d\le k \le w$. Such a block set is
said to be simple.
\end{lemma}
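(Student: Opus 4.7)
The plan is to argue by contradiction: suppose some block of $\mathcal{B}_k(\mathcal{C})$ is repeated for some $k$ with $d\le k\le w$, meaning there exist codewords $\mathbf{c}_1,\mathbf{c}_2\in\mathcal{C}$ of weight $k$ with identical support $S$ but with $\mathbf{c}_2\notin \mathrm{GF}(q)^*\mathbf{c}_1$ (two codewords yield the same block in $\mathcal{B}_k(\mathcal{C})$ iff they are scalar multiples, and the factor $\tfrac{1}{q-1}$ exactly accounts for the scalar orbit). From this data I will manufacture a nonzero codeword of weight strictly less than $d$, contradicting the minimum distance.

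The mechanism is a pigeonhole on the pointwise ratios. For each $i\in S$ both coordinates are nonzero, so define $r_i:=(c_2)_i/(c_1)_i\in \mathrm{GF}(q)^*$. This gives $k$ values landing in the set $\mathrm{GF}(q)^*$ of size $q-1$, so some $\alpha\in \mathrm{GF}(q)^*$ is attained on a subset $T\subseteq S$ of size at least
\[
\left\lceil \frac{k}{q-1}\right\rceil \;=\; \left\lfloor \frac{k+q-2}{q-1}\right\rfloor .
\]
Now form $\mathbf{c}_3:=\mathbf{c}_2-\alpha\mathbf{c}_1\in\mathcal{C}$. By construction $\mathbf{c}_3$ vanishes outside $S$ (both summands do) and vanishes on $T$ (by the defining property of $\alpha$), and it is nonzero because $\mathbf{c}_2$ is not a scalar multiple of $\mathbf{c}_1$. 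Hence
\[
0<\mathrm{wt}(\mathbf{c}_3)\le k-\left\lfloor \frac{k+q-2}{q-1}\right\rfloor .
\]

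To close the argument I need the weight bound to be strictly less than $d$. Set $f(k):=k-\lfloor (k+q-2)/(q-1)\rfloor$. A quick check shows $f(k+1)-f(k)\in\{0,1\}$, since the floor increases by $0$ or $1$ when $k$ increases by $1$; in particular $f$ is non-decreasing. By the choice of $w$ we have $f(w)<d$, so $f(k)\le f(w)<d$ for every $k\le w$. Combining this with the previous display gives $\mathrm{wt}(\mathbf{c}_3)<d$ with $\mathbf{c}_3\neq 0$, which is the desired contradiction.

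The argument has no real obstacles — the central idea is recognising that the obstruction to $\mathcal{B}_k(\mathcal{C})$ being simple is precisely the existence of two non-proportional equisupported codewords, and that the pigeonhole bound on the multiplicity of some ratio forces a large cancellation in $\mathbf{c}_2-\alpha\mathbf{c}_1$. The only thing to verify carefully is the monotonicity of $f(k)$ so that the bound $f(w)<d$ transfers to all admissible $k$; this is a one-line check about the behaviour of the floor function.
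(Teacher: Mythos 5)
Your proof is correct. The paper does not actually prove this lemma---it is quoted from \cite[Lemma 4.1]{Dingbk18}---but your argument (reduce simplicity to the non-existence of two non-proportional equisupported codewords of weight $k$, pigeonhole the coordinate ratios $(c_2)_i/(c_1)_i$ over $\mathrm{GF}(q)^*$ to find a scalar $\alpha$ attained on at least $\lceil k/(q-1)\rceil = \lfloor (k+q-2)/(q-1)\rfloor$ positions, and observe that $\mathbf{c}_2-\alpha\mathbf{c}_1$ is a nonzero codeword of weight at most $k-\lfloor (k+q-2)/(q-1)\rfloor$) is exactly the standard proof of that cited result, and your monotonicity check on $f(k)=k-\lfloor (k+q-2)/(q-1)\rfloor$ correctly transfers the defining inequality at $w$ to all $k\le w$.
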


Combining Theorem \ref{thm-AMTheoremExt} and Lemma \ref{lem-simpled}, one obtains the
following Assmus-Mattson Theorem for constructing \textbf{simple} $t$-designs \cite{AM69}.

\begin{theorem}\label{thm-AMTheorem}
Let $\mathcal C$ be a linear code over $\mathrm{GF}(q)$ with length $\nu$ and minimum weight $d$.
Let $\mathcal C^{\perp}$  with minimum weight $d^{\perp}$ denote the dual code of $\mathcal C$. Let  $t~(1\le t <\min \{d, d^{\perp}\})$  be an integer such that there are at most $d^{\perp}-t$ weights of $\mathcal C$ in the range $\{1,2, \ldots, \nu-t\}$.
Then the following holds:
\begin{itemize}
\item $(\mathcal P(\mathcal C), \mathcal B_{k}(\mathcal C))$ is a simple $t$-design provided that
      $A_k \neq 0$ and $d \leq k \leq w$, where $w$ is defined to be the largest integer
      satisfying $w \leq \nu$ and
      $$
      w-\left\lfloor \frac{w+q-2}{q-1} \right\rfloor <d.
      $$
\item $(\mathcal P(\mathcal C^{\perp}), \mathcal B_{k}(\mathcal C^{\perp}))$ is a simple $t$-design provided that
      $A_k^\perp  \neq 0$ and $d^\perp \leq k \leq w^\perp$, where $w^\perp$ is defined to be the largest integer
      satisfying $w^\perp \leq \nu$ and
      $$
      w^\perp-\left\lfloor \frac{w^\perp+q-2}{q-1} \right\rfloor <d^\perp.
      $$
\end{itemize}
\end{theorem}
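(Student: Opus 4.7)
The plan is to derive this theorem as a direct consequence of the two preceding results: Theorem \ref{thm-AMTheoremExt} (the general Assmus--Mattson theorem producing $t$-designs that may have repeated blocks) and Lemma \ref{lem-simpled} (the weight-range criterion guaranteeing that $\mathcal{B}_k(\mathcal{C})$ is simple). Since the hypotheses of Theorem \ref{thm-AMTheorem} are exactly those of Theorem \ref{thm-AMTheoremExt}, one immediately obtains that $(\mathcal{P}(\mathcal{C}), \mathcal{B}_k(\mathcal{C}))$ is a $t$-$(\nu, k, \lambda_k)$ design and $(\mathcal{P}(\mathcal{C}^{\perp}), \mathcal{B}_k(\mathcal{C}^{\perp}))$ is a $t$-$(\nu, k, \lambda_k^{\perp})$ design for every $k \in \{0,1,\ldots,\nu\}$. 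The only additional work is to upgrade ``design'' to ``simple design'' for the relevant weight ranges.

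First I would handle the first bullet. Fix $k$ with $d \le k \le w$, where $w$ is the largest integer $\le \nu$ satisfying $w - \lfloor (w+q-2)/(q-1)\rfloor < d$. By Lemma \ref{lem-simpled}, $\mathcal{B}_k(\mathcal{C})$ contains no repeated blocks, so the $t$-design from Theorem \ref{thm-AMTheoremExt} is automatically simple. The hypothesis $A_k \neq 0$ just ensures that $\mathcal{B}_k(\mathcal{C}) \ne \emptyset$, so the design is nontrivial in the sense that it actually carries blocks (and the codewords of weight $k$ are partitioned into orbits of size $q-1$ under scalar multiplication, each orbit contributing a distinct support, which is why the multiset $\mathcal{B}_k(\mathcal{C})$ is well-defined without repeats in this range).

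Next I would repeat the identical argument verbatim for $\mathcal{C}^{\perp}$: the hypothesis $t < \min\{d, d^{\perp}\}$ and the weight-count condition on $\mathcal{C}$ are symmetric enough that Theorem \ref{thm-AMTheoremExt} applies to the dual as well, producing $t$-designs $(\mathcal{P}(\mathcal{C}^{\perp}), \mathcal{B}_k(\mathcal{C}^{\perp}))$ for all $k$. Applying Lemma \ref{lem-simpled} to $\mathcal{C}^{\perp}$ with its minimum distance $d^{\perp}$ and the corresponding integer $w^{\perp}$ shows that for $d^{\perp} \le k \le w^{\perp}$ the block set $\mathcal{B}_k(\mathcal{C}^{\perp})$ is simple, and $A_k^{\perp}\neq 0$ guarantees it is nonempty.

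There is no substantive obstacle: the theorem is essentially a packaging statement. The only point to watch carefully is the definition of $\mathcal{B}_k(\mathcal{C})$ as a multiset in which the contributions of the $q-1$ nonzero scalar multiples of each codeword have been collapsed (the factor $\frac{1}{q-1}$ in the definition from Section \ref{Sec-introduct}); one must verify that within the range $k \le w$ this collapsing indeed yields distinct supports, which is precisely the content of Lemma \ref{lem-simpled}. Everything else is a direct quotation of the two earlier results.
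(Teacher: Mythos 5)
Your proposal is correct and matches the paper exactly: the paper itself presents Theorem \ref{thm-AMTheorem} as the combination of Theorem \ref{thm-AMTheoremExt} (which yields the $t$-designs for all $k$) with Lemma \ref{lem-simpled} applied to $\mathcal C$ and to $\mathcal C^{\perp}$ (which yields simplicity on the stated weight ranges). No further comment is needed.
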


\subsection{Shortened codes and punctured codes}

Let $\mathcal  C$ be a $[\nu,m,d]$ linear code over $\mathrm{GF}(q)$ and $T$ a set of $t$
coordinate positions in $\mathcal  C$.
We use $\mathcal  C^T$ to denote the code obtained by puncturing $\mathcal  C$  on
$T$, which is called the  \emph{punctured code}  of $\mathcal C$ on $T$.
Let $\mathcal C(T)$ be the subcode of $\mathcal C$, which is the set of codewords which are
$\mathbf{0}$ on $T$.
We now puncture $\mathcal C(T)$ on $T$, and obtain a linear code $\mathcal C_{T}$, which is called the \emph{shortened code} of $\mathcal C$ on $T$.
We will need the following result on the punctured and shortened codes of $\mathcal{C}$  \cite[Theorem 1.5.7]{HP10}.

\begin{lemma}\label{lem:C-S-P}
Let $\mathcal C$ be a $[\nu,m,d]$ linear code over $\mathrm{GF}(q)$ and  $d^{\perp}$  the minimum distance of $\mathcal  C^{\perp}$.
Let $T$ be any set of $t$ coordinate positions. Then

(1) $\left ( \mathcal C_{T} \right )^{\perp} = \left ( \mathcal C^{\perp}  \right)^T$ and $\left ( \mathcal C^{T} \right )^{\perp} = \left ( \mathcal C^{\perp}  \right)_T$.

(2) If $t<\min \{d, d^{\perp} \}$, then the codes $\mathcal C_{T}$ and $\mathcal C^T$ have  dimension  $m-t$  and $m$, respectively.
\end{lemma}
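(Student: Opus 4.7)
The plan is to prove (1) first by a direct inner-product computation and then to derive the dimension statements of (2) via an injectivity argument combined with the duality of (1). Throughout, for any length-$\nu$ vector $\mathbf{v}$ write $\mathbf{v}^T$ for the length-$(\nu-t)$ vector obtained by deleting the coordinates in $T$, and for any length-$(\nu-t)$ vector $\mathbf{u}$ write $\widetilde{\mathbf{u}}$ for its zero-extension to length $\nu$ (inserting a $0$ at every position in $T$). The key observation is that $\widetilde{\mathbf{u}} \cdot \mathbf{v} = \mathbf{u} \cdot \mathbf{v}^T$ for every such $\mathbf{u}$ and $\mathbf{v}$, since the inserted zeros contribute nothing to the inner product.

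For part (1), I would establish $(\mathcal C^T)^{\perp} = (\mathcal C^{\perp})_T$ first. A length-$(\nu-t)$ vector $\mathbf{u}$ lies in $(\mathcal C^T)^{\perp}$ iff $\mathbf{u} \cdot \mathbf{c}^T = 0$ for every $\mathbf{c} \in \mathcal C$. By the observation above, this is equivalent to $\widetilde{\mathbf{u}} \cdot \mathbf{c} = 0$ for every $\mathbf{c} \in \mathcal C$, i.e., to $\widetilde{\mathbf{u}} \in \mathcal C^{\perp}$. Since $\widetilde{\mathbf{u}}$ is zero on $T$ by construction, this is the same as $\widetilde{\mathbf{u}} \in \mathcal C^{\perp}(T)$, i.e., $\mathbf{u} \in (\mathcal C^{\perp})_T$. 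Applying this identity with $\mathcal C^{\perp}$ in place of $\mathcal C$ and using $(\mathcal C^{\perp})^{\perp} = \mathcal C$, then taking the dual of both sides, yields the companion identity $(\mathcal C_T)^{\perp} = (\mathcal C^{\perp})^T$.

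For part (2), I would first show $\dim(\mathcal C^T) = m$. The puncture map $\mathcal C \to \mathcal C^T$, $\mathbf{c} \mapsto \mathbf{c}^T$, is linear with kernel equal to the set of codewords of $\mathcal C$ whose support is contained in $T$; since $|T| = t < d$, the minimum-weight hypothesis forces this kernel to be trivial, so the map is injective and $\dim(\mathcal C^T) = m$. The same argument applied to $\mathcal C^{\perp}$, which has minimum distance $d^{\perp} > t$, yields $\dim\bigl((\mathcal C^{\perp})^T\bigr) = \nu - m$. Combining this with part (1),
\begin{equation*}
\dim(\mathcal C_T) = (\nu - t) - \dim\bigl((\mathcal C_T)^{\perp}\bigr) = (\nu - t) - \dim\bigl((\mathcal C^{\perp})^T\bigr) = (\nu - t) - (\nu - m) = m - t.
\end{equation*}

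There is no real obstacle here: the statement is classical and the argument is essentially bookkeeping. The only places to stay careful are (a) verifying that zero-extension and the inner product interact as claimed, which is what makes the duality identity in (1) go through cleanly, and (b) invoking the nondegeneracy identity $\dim V + \dim V^{\perp} = n$ for subspaces of $\mathrm{GF}(q)^n$ in the length-$(\nu-t)$ ambient space when computing $\dim(\mathcal C_T)$ from $\dim\bigl((\mathcal C_T)^{\perp}\bigr)$.
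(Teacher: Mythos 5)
Your proof is correct; the paper itself gives no proof of this lemma, merely citing it as Theorem 1.5.7 of Huffman and Pless, and your argument is the standard one found there: the duality identity $(\mathcal C^{T})^{\perp}=(\mathcal C^{\perp})_{T}$ via the zero-extension/inner-product compatibility, followed by injectivity of the puncturing map when $t<d$ and a dimension count using $\dim V+\dim V^{\perp}=\nu-t$. Nothing further is needed.
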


\subsection{Combinatorial t-designs and their intersection numbers}

Let $\mathbb D=(\mathcal P, \mathcal B)$ be a $t$-$(\nu, k, \lambda)$ design. Let $T_0$ and $T_1$ be two disjoint subsets of $\mathcal P$
with $\# T_0=t_0$ and $\# T_1=t_1$. Denote by $\lambda_{T_1}^{T_0}$ the number of blocks in $\mathcal B$ that contain $T_1$ and are disjoint with $T_0$.
These numbers $\lambda_{T_1}^{T_0}$ are called \emph{intersection numbers}.
For convenience, $\lambda_{T_1}^{\emptyset}$ and $\lambda_{\emptyset}^{T_0}$ are also written as $\lambda_{T_1}$ and $\lambda^{T_0}$ respectively.
The next theorem will be useful in the sequel  \cite[Theorem 9.7]{Sti08}.
\begin{theorem}\label{thm:t to 1}
Let $(\mathcal P, \mathcal B)$ be  a $t$-$(\nu,k,\lambda)$ design. Let $T_0, T_1 \subseteq \mathcal P$, where
$T_0 \cap T_1=\emptyset$, $\# T_0 =t_0$, $\# T_1=t_1$, and $t_0+t_1\le t$.
Then the intersection numbers $\lambda^{T_0}_{T_1}$ are independent of the specific choice of the elements in $T_0$ and $T_1$, and depend only on
$t_0$ and $t_1$.
Specifically,
$$\lambda^{T_0}_{T_1}= \lambda(t_0, t_1), $$
where $\lambda(t_0, t_1)=\frac{\binom{\nu-t_0-t_1}{k-t_1}}{\binom{\nu-t}{k-t}} \lambda$.
\end{theorem}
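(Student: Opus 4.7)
The plan is to proceed by induction on $t_0$, establishing simultaneously that $\lambda^{T_0}_{T_1}$ depends only on $t_0,t_1$ and that it equals the claimed closed form. The inductive engine is the inclusion--exclusion identity
\begin{equation*}
\lambda^{T_0}_{T_1} \;=\; \lambda^{T_0\setminus\{x\}}_{T_1}\;-\;\lambda^{T_0\setminus\{x\}}_{T_1\cup\{x\}}
\end{equation*}
which holds for any $x\in T_0$, since a block containing $T_1$ and disjoint from $T_0\setminus\{x\}$ either avoids $x$ (the left side) or contains $x$ (the subtracted term). Thus once both numbers on the right are known to depend only on their parameter sizes, the same follows for the left, so independence of $T_0,T_1$ propagates inductively.

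For the base case $t_0=0$, I would prove the classical formula $\lambda_{T_1}=\lambda(0,t_1)$ by a standard double count. Fix $T_1$ of size $t_1\le t$, and count pairs $(B,U)$ where $B\in\mathcal B$ contains $T_1\cup U$ and $U$ is a $(t-t_1)$-subset of $\mathcal P\setminus T_1$. Counting by $U$ first and using the $t$-design property gives $\lambda\binom{\nu-t_1}{t-t_1}$; counting by $B$ first gives $\lambda_{T_1}\binom{k-t_1}{t-t_1}$. Equating and simplifying yields
\begin{equation*}
\lambda_{T_1} \;=\;\frac{\binom{\nu-t_1}{k-t_1}}{\binom{\nu-t}{k-t}}\lambda \;=\;\lambda(0,t_1),
\end{equation*}
which in particular is independent of the choice of $T_1$.

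For the inductive step, assume $t_0\ge 1$ and that the theorem holds for all smaller values of the first parameter. Pick any $x\in T_0$; by induction both $\lambda^{T_0\setminus\{x\}}_{T_1}=\lambda(t_0-1,t_1)$ and $\lambda^{T_0\setminus\{x\}}_{T_1\cup\{x\}}=\lambda(t_0-1,t_1+1)$ (note $(t_0-1)+(t_1+1)=t_0+t_1\le t$, so the inductive hypothesis applies to the second term). The identity above then gives
\begin{equation*}
\lambda^{T_0}_{T_1}\;=\;\lambda(t_0-1,t_1)-\lambda(t_0-1,t_1+1),
\end{equation*}
which depends only on $t_0,t_1$ and not on the specific points.

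It remains to check that the closed form $\lambda(t_0,t_1)=\binom{\nu-t_0-t_1}{k-t_1}\lambda/\binom{\nu-t}{k-t}$ satisfies the same recursion. This is where Pascal's identity does the work: writing
\begin{equation*}
\binom{\nu-t_0-t_1+1}{k-t_1}\;=\;\binom{\nu-t_0-t_1}{k-t_1}+\binom{\nu-t_0-t_1}{k-t_1-1},
\end{equation*}
a direct subtraction shows $\lambda(t_0-1,t_1)-\lambda(t_0-1,t_1+1)=\lambda(t_0,t_1)$, completing the induction. The only real bookkeeping obstacle is verifying that the parameter constraint $t_0+t_1\le t$ is preserved under the inductive step for \emph{both} terms on the right, which the calculation above confirms; everything else is routine double counting and a Pascal identity.
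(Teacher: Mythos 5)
Your proof is correct: the base case double count gives $\lambda_{T_1}=\lambda\binom{\nu-t_1}{t-t_1}/\binom{k-t_1}{t-t_1}$, which does simplify to $\binom{\nu-t_1}{k-t_1}\lambda/\binom{\nu-t}{k-t}$, the inclusion--exclusion recursion $\lambda^{T_0}_{T_1}=\lambda^{T_0\setminus\{x\}}_{T_1}-\lambda^{T_0\setminus\{x\}}_{T_1\cup\{x\}}$ is valid (also for multisets of blocks), the constraint $t_0+t_1\le t$ is preserved for both terms, and Pascal's identity closes the induction. The paper does not prove this statement at all --- it quotes it as Theorem 9.7 of Stinson --- and your argument is essentially the standard textbook derivation of the generalized intersection numbers, so there is nothing to reconcile.
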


\subsection{Differentially $\delta$-uniform functions}\label{sec-duf}

Let $F$ be a vectorial Boolean function from $\mathrm{GF}(2^n)$ to $\mathrm{GF}(2^m)$.
If we use the function $F$ in a S-box of a cryptosystem, the efficiency of differential
cryptanalysis is measured by the maximum of the cardinality of the set of elements $x$  in $ \mathrm{GF}(2^n)$ such that
\begin{align*}
F(x+a)+F(x)=b,
\end{align*}
 where $a\in \mathrm{GF}(2^n)^*$ and $b\in \mathrm{GF}(2^m)$.
 The function $F$ is called a \emph{differentially $\delta$-uniform function} if
 \begin{align*}
 \max_{a\in \mathrm{GF}(2^n)^*, b \in \mathrm{GF}(2^m)} \delta(a,b)=\delta,
 \end{align*}
 where $\delta(a,b)=\# \{x\in \mathrm{GF}(2^n) : F(x+a)+F(x)=b\}$. The function
 $F$ is said to be \emph{differentially two-valued} if $\# \{\delta(a,b): a\in \mathrm{GF}(2^n)^*, b \in \mathrm{GF}(2^m) \}=2$.
 The following result can be found in \cite{BCC10}.

 \begin{proposition}\label{prop:delta=2}
 Let $F$ be a differentially $\delta$-uniform function from $\mathrm{GF}(2^n)$ to itself. Assume that $F$ is differentially two-valued.
 Then $\delta=2^s$ for some $s$, where $1\le s \le n$.
 \end{proposition}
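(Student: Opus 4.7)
The plan is to exploit two elementary counting facts about the difference distribution of $F : \mathrm{GF}(2^n) \to \mathrm{GF}(2^n)$. First, for any nonzero $a$, the equation $F(x+a)+F(x)=b$ is invariant under $x \mapsto x+a$ in characteristic $2$, so its solutions come in pairs $\{x,x+a\}$ and every $\delta(a,b)$ is even. Second, summing over $b$ yields $\sum_{b\in \mathrm{GF}(2^n)} \delta(a,b) = 2^n$, since each $x\in \mathrm{GF}(2^n)$ contributes to exactly one value of the right-hand side.

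From the two-valued hypothesis, I would next identify the two values as $0$ and $\delta$. That $0$ is one of them follows from the parity observation: for fixed $a\neq 0$, the image of $D_a F(x)=F(x+a)+F(x)$ has size at most $2^{n-1}$, so at least $2^{n-1}$ values of $b$ yield $\delta(a,b)=0$. The other value is then the maximum of the spectrum, which equals $\delta$ by the definition of differential $\delta$-uniformity; moreover $\delta$ is even and positive, hence $\delta \geq 2$.

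Finally, I would fix any nonzero $a$ and let $N_a = \#\{b\in \mathrm{GF}(2^n) : \delta(a,b)=\delta\}$. The row-sum identity then gives $N_a \cdot \delta = 2^n$, so $\delta$ divides $2^n$. Combined with $2 \le \delta \le 2^n$, this forces $\delta = 2^s$ with $1 \le s \le n$.

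I do not expect a real obstacle: the argument is essentially bookkeeping from the two standard properties of the difference distribution table. The only step that needs a brief justification is the identification of the two values as exactly $\{0,\delta\}$ rather than two arbitrary positive integers, which the image-size bound from the pairing observation handles cleanly.
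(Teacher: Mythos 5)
Your argument is correct. The paper itself gives no proof of this proposition, citing only Blondeau--Canteaut--Charpin \cite{BCC10}; your write-up supplies the standard self-contained argument, and all the steps check out: the pairing $\{x,x+a\}$ gives that every $\delta(a,b)$ is even and that the image of $x\mapsto F(x+a)+F(x)$ has at most $2^{n-1}$ elements (so $0$ is indeed one of the two values and the other is necessarily the attained maximum $\delta$), and the row-sum identity $\sum_b \delta(a,b)=2^n$ with all entries in $\{0,\delta\}$ forces $\delta\mid 2^n$, hence $\delta=2^s$ with $1\le s\le n$. The only point worth making explicit is that the two-valued hypothesis is a statement about the full set $\{\delta(a,b): a\ne 0,\, b\}$, so for a \emph{fixed} $a$ each entry still lies in $\{0,\delta\}$, which is exactly what the row-sum step uses; you implicitly rely on this and it is valid.
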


Due to Proposition \ref{prop:delta=2}, we say that $F$ is \emph{differentially two-valued}
with $\{0, 2^s\}$
if
$$
\{\delta(a,b): a\in \mathrm{GF}(2^n)^*, b \in \mathrm{GF}(2^m) \} = \{0, 2^s\}.
$$
Results about differentially two-valued functions could be found in \cite{CP,CP19}.
When $n=m$, differentially $2$-uniform functions are also called \emph{almost perfect nonlinear} (APN) functions.

For any function $F$ from $\mathrm{GF}(2^n)$ to itself, the \emph{Walsh transform} of $F$ at $(\lambda, \mu)\in \mathrm{GF}(2^n)^* \times \mathrm{GF}(2^n)$  is defined as
\begin{align*}
\mathcal W_F(\lambda, \mu)=\sum_{x\in \mathrm{GF}(2^n)} (-1)^{ \mathrm{Tr}_{2^n/2}\left( \lambda F(x)+\mu x\right) },
\end{align*}
where $ \mathrm{Tr}_{2^n/2}(\cdot)$   is the absolute trace function from $\mathrm{GF}(2^n)$ to $\mathrm{GF}(2)$.
$\mathcal W_F(\lambda, \mu)$ are also called the  \emph{Walsh coefficients} of $F$.
 The \emph{component functions} of $F$ are the Boolean functions $\mathrm{Tr}(\lambda F(x))$, where $\lambda \in \mathrm{GF}(2^n)$.
A component function $\mathrm{Tr}(\lambda F(x))$  is said to be \emph{bent} if  $\mathcal W_F(\lambda, \mu)= \pm 2^{\frac{n}{2}}$, for all $\mu \in  \mathrm{GF}(2^n)$.
In this case, $\mathrm{Tr}(\lambda F(x))$ is also called a \emph{bent component} of $F$.

\section{Shortened and punctured codes of linear codes supporting $t$-designs}\label{sec:shpu}

Linear codes supporting $t$-designs usually have special properties \cite{Dingbk18}.
The automorphism group of
such code may be $t$-transitive or $t$-homogeneous. Such code may satisfy the conditions in
the Assmus-Mattson Theorems. Such code could be distance-optimal or dimension-optimal. In general, linear codes that support a $t$-design should have a certain kind of regularity.
Hence, one would expect that some punctured and shortened codes of such linear code would be
also attractive in certain sense. By puncturing or shortening such code, one may obtain
linear codes with different parameters and interesting properties.
This is one of the motivations of studying the punctured
and shortened codes of linear codes supporting $t$-designs. A more important motivation is for
developing a characterisation of $t$-designs supported by linear codes in Section \ref{sec:chde}.

In this section, we will first develop some general theory for some shortened and punctured
codes of linear codes supporting $t$-designs, and will then use the general theory
to determine the parameters and weight distributions of some shortened and punctured
codes of two families of binary linear codes supporting $2$-designs.

\subsection{General results for shortened and punctured codes of linear codes supporting
$t$-designs}\label{sec-pscode1}

In this subsection, we establish general results about shortened and punctured codes of linear codes supporting $t$-designs.

Recall that the binomial coefficient $\binom{a}{b}$ equals $0$ when $a<b$ or $b<0$.
Let $\mathcal W_i(\mathcal C)$  denote the set of codewords of weight $i$ in a code $\mathcal  C$ and $A_i(\mathcal{C})$ be the number of elements of $\mathcal W_i(\mathcal C)$.
We first give some results on parameters  and the weight distributions of shortened codes and punctured codes of linear codes supporting $t$-designs.
\begin{lemma}\label{lem:P:k:k+t}
Let $\mathcal C$ be a  linear code of length $\nu$ and minimum distance $d$ over $\mathrm{GF}(q)$ and  $d^{\perp}$  the minimum distance of $\mathcal  C^{\perp}$.
 Let  $t$ and $k$ be two positive integers with  $0< t <\min \{d, d^{\perp}\}$ and $1\le k\le  \nu-t$.
Let $T$ be  a  set of $t$ coordinate positions in $\mathcal  C$.
Suppose that $\left ( \mathcal P(\mathcal C) , \mathcal B_i(\mathcal C) \right )$ is a $t$-design for all $i$ with $k\le i \le k+t$.
Then
$$A_k(\mathcal C^T) =\sum_{i=0}^t  \frac{\binom{\nu-t}{k} \binom{k+i}{t} \binom{t}{i} }{\binom{\nu-t}{k-t+i} \binom{\nu}{t}} A_{k+i}(\mathcal C).$$
\end{lemma}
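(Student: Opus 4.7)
The plan is to count codewords of weight $k$ in $\mathcal{C}^T$ by lifting them to codewords of $\mathcal{C}$ and classifying them by how their support intersects the puncturing set $T$. The three ingredients are injectivity of the puncturing map, the $(q-1)$-to-$1$ correspondence between codewords of a fixed nonzero weight in $\mathcal{C}$ and blocks of $\mathcal{B}_{w}(\mathcal{C})$, and the intersection-number formula from Theorem \ref{thm:t to 1}.

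First I would use the hypothesis $t<d$ (equivalently, Lemma \ref{lem:C-S-P}(2)) to conclude that puncturing on $T$ is injective on $\mathcal{C}$: no nonzero codeword can have its support contained in $T$, since that would force its weight to be at most $t<d$. Thus every weight-$k$ codeword of $\mathcal{C}^T$ lifts uniquely to some $\mathbf{c}\in\mathcal{C}$ with $|\mathrm{Supp}(\mathbf{c})\setminus T|=k$. Writing $i:=|\mathrm{Supp}(\mathbf{c})\cap T|\in\{0,1,\ldots,t\}$, we have $\mathrm{wt}(\mathbf{c})=k+i$, and so
$$A_k(\mathcal{C}^T)\;=\;\sum_{i=0}^{t}N_i,\qquad N_i\;:=\;\#\bigl\{\mathbf{c}\in\mathcal{C}:\mathrm{wt}(\mathbf{c})=k+i,\;|\mathrm{Supp}(\mathbf{c})\cap T|=i\bigr\}.$$

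Next I would convert $N_i$ into a block count in the design $(\mathcal{P}(\mathcal{C}),\mathcal{B}_{k+i}(\mathcal{C}))$, which by hypothesis is a $t$-$(\nu,k+i,\lambda_{k+i})$ design. Since the $q-1$ nonzero scalar multiples of any codeword share the same support, $N_i=(q-1)M_i$, where $M_i$ is the number of blocks $B\in\mathcal{B}_{k+i}(\mathcal{C})$ with $|B\cap T|=i$. Partitioning these blocks by which $i$-subset $T_1\subseteq T$ is equal to $B\cap T$ gives
$$M_i\;=\;\sum_{T_1\subseteq T,\,|T_1|=i}\lambda_{T_1}^{T\setminus T_1}.$$
Applying Theorem \ref{thm:t to 1} with parameters $t_0=t-i$, $t_1=i$, and block size $k+i$ evaluates each intersection number to $\binom{\nu-t}{k}\lambda_{k+i}/\binom{\nu-t}{k-t+i}$, and summing over the $\binom{t}{i}$ choices of $T_1$ yields
$$M_i\;=\;\binom{t}{i}\,\frac{\binom{\nu-t}{k}}{\binom{\nu-t}{k-t+i}}\,\lambda_{k+i}.$$

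Finally I would eliminate $\lambda_{k+i}$ in favor of $A_{k+i}(\mathcal{C})$ via the standard design identity $\binom{\nu}{t}\lambda_{k+i}=\binom{k+i}{t}\,|\mathcal{B}_{k+i}(\mathcal{C})|$ together with $|\mathcal{B}_{k+i}(\mathcal{C})|=A_{k+i}(\mathcal{C})/(q-1)$. Substituting these into $A_k(\mathcal{C}^T)=(q-1)\sum_i M_i$ causes the $(q-1)$ factors to cancel and delivers precisely the claimed expression. The work is essentially bookkeeping; the only points requiring care are that the block size in Theorem \ref{thm:t to 1} must be instantiated as $k+i$ (not as the $k$ of our statement), that the $(q-1)$ factors arising from codeword-to-block conversion are tracked so they cancel at the end, and that degenerate indices (when $k+i<t$, where both $\binom{k+i}{t}$ and $\binom{\nu-t}{k-t+i}$ vanish) are interpreted via the design hypothesis, which in those cases forces $A_{k+i}=0$ and makes the corresponding summand contribute $0$.
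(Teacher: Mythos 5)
Your proposal is correct and follows essentially the same route as the paper's proof: both decompose the weight-$k$ codewords of $\mathcal{C}^T$ according to the intersection of the support of the (unique) lift with $T$, apply Theorem \ref{thm:t to 1} with block size $k+i$ and $(t_0,t_1)=(t-i,i)$, and convert between codeword counts and block counts via the factor $q-1$. Your explicit handling of the degenerate terms with $k+i<t$ is a small point of extra care not spelled out in the paper, but the argument is otherwise identical.
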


\begin{proof}
Let $\pi^{T}$ be the map from $\mathcal C$ to $\mathcal C^{T}$ defined as
\begin{align*}
\pi^{T}: \mathcal C & \longrightarrow \mathcal C^T,\\
(c_i)_{i \in \mathcal P(\mathcal C)} & \longmapsto (c_i)_{i\in \mathcal P(\mathcal C) \setminus T}.
\end{align*}
By Lemma \ref{lem:C-S-P}, $\pi^T$ is a one-to-one linear transformation.
Then
\begin{align*}
A_k(\mathcal C^T)= \sum_{t_1=0}^t \sum_{T_1 \subseteq T, \# T_1=t_1} \mu_{T_1}(\mathcal W_{k+t_1}(\mathcal C)),
\end{align*}
where $\mu_{T_1}(\mathcal W_{k+t_1}(\mathcal C))$
 is equal to the number of codewords in $\mathcal W_{k+t_1}(\mathcal C)$ that satisfy the conditions $c_i=0$ if $i \in T \setminus T_1 $
 and $c_i\neq 0$ if $i \in  T_1 $.
 Note that $\left ( \mathcal P(\mathcal C) , \mathcal B_{k+t_1}(\mathcal C) \right )$ is a $t$-$(\nu, k+t_1, \lambda)$
 design with $\frac{1}{q-1} A_{k+t_1}(\mathcal C)$ blocks, where $\lambda=\frac{\binom{k+t_1}{t}}{\binom{\nu}{t}} \frac{1}{q-1}  A_{k+t_1}(\mathcal C)$.
 Let $ \lambda_{T_1}^{T\setminus T_1}$ be the intersection number
 of the $t$-design $\left ( \mathcal P(\mathcal C) , \mathcal B_{k+t_1}(\mathcal C) \right )$.
 By Theorem \ref{thm:t to 1}, one has
 \begin{align*}
 \mu_{T_1}(\mathcal W_{k+t_1}(\mathcal C))=&(q-1) \lambda_{T_1}^{T\setminus T_1}\\
 =& (q-1) \frac{\binom{\nu-t}{k+t_1-t_1}}{ \binom{\nu-t}{k+t_1-t}} \lambda\\
 =& \frac{\binom{\nu-t}{k}\binom{k+t_1}{t}}{ \binom{\nu-t}{k-t+t_1} \binom{\nu}{t}} A_{k+t_1}(\mathcal C).
 \end{align*}
It then follows that
$$
A_k(\mathcal C^T)= \sum_{t_1=0}^t \binom{t}{t_1}  \frac{\binom{\nu-t}{k}\binom{k+t_1}{t}}{ \binom{\nu-t}{k-t+t_1} \binom{\nu}{t}} A_{k+t_1}(\mathcal C).
$$
\end{proof}

\begin{theorem}\label{thm:sct-code}
Let $\mathcal C$ be a $[\nu, m, d]$ linear code  over $\mathrm{GF}(q)$ and  $d^{\perp}$  the minimum distance of $\mathcal  C^{\perp}$.
 Let  $t$ be a positive integer with  $0< t <\min \{d, d^{\perp}\}$.
Let $T$ be  a  set of $t$ coordinate positions in $\mathcal  C$.
Suppose that $\left ( \mathcal P(\mathcal C) , \mathcal B_i(\mathcal C) \right )$ is a $t$-design for any $i$ with $d \le i \le \nu-t$.
Then the shortened code $\mathcal C_T$ is a linear code of length $\nu-t$ and dimension $m-t$. The weight distribution
$\left ( A_k(\mathcal C_T) \right )_{k=0}^{\nu-t}$ of $\mathcal C_T$ is independent of the specific choice of the elements
in $T$. Specifically,
$$A_k(\mathcal C_T) =\frac{ \binom{k}{t} \binom{\nu-t}{k}}{ \binom{\nu }{t} \binom{\nu-t}{k-t}}A_k(\mathcal C).$$
\end{theorem}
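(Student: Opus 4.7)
The plan is to reduce the claim to a counting problem controlled by the $t$-design structure of $\mathcal{B}_k(\mathcal{C})$. The length and dimension statements are immediate: since $t < \min\{d, d^\perp\}$, Lemma \ref{lem:C-S-P}(2) gives that $\mathcal{C}_T$ has length $\nu - t$ and dimension $m - t$, independently of the choice of $T$.

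For the weight distribution, I first note the natural weight-preserving identification $\mathcal{C}_T \cong \mathcal{C}(T)$, where $\mathcal{C}(T)$ denotes the subcode of codewords that vanish on $T$; puncturing on $T$ only removes coordinates that are already zero, so $\wt$ is unchanged under this identification. Hence $A_k(\mathcal{C}_T)$ equals the number of codewords $\mathbf{c} \in \mathcal{C}$ of weight $k$ whose support is disjoint from $T$.

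Next, I invoke the $t$-design hypothesis. For $d \le k \le \nu - t$, the pair $\left(\mathcal{P}(\mathcal{C}), \mathcal{B}_k(\mathcal{C})\right)$ is a $t$-$(\nu, k, \lambda)$ design with $\lambda = \frac{\binom{k}{t}}{\binom{\nu}{t}} \cdot \frac{A_k(\mathcal{C})}{q-1}$. The codewords of weight $k$ in $\mathcal{C}$ whose support avoids $T$ correspond, after dividing out by the scalars in $\mathrm{GF}(q)^*$, to the blocks of $\mathcal{B}_k(\mathcal{C})$ that are disjoint from $T$. Applying Theorem \ref{thm:t to 1} with $t_0 = t$ and $t_1 = 0$ yields
\[
\lambda^{T}_{\emptyset} \;=\; \frac{\binom{\nu - t}{k}}{\binom{\nu - t}{k - t}} \,\lambda,
\]
a quantity that depends only on $|T| = t$, not on the specific set $T$. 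Multiplying by $q - 1$ to undo the scaling in the definition of $\mathcal{B}_k(\mathcal{C})$ reproduces exactly the formula in the statement.

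The remaining subtlety, rather than a real obstacle, is to handle the values of $k$ outside the hypothesized range $d \le k \le \nu - t$. For $1 \le k < d$ we have $A_k(\mathcal{C}) = 0$, while $\binom{k}{t} = 0$ whenever $k < t$ (recall $t < d$), so both sides vanish; and $k = 0$ gives $A_0(\mathcal{C}_T) = 1 = A_0(\mathcal{C})$ trivially. With these conventions in place, both the independence of the weight distribution from the choice of $T$ and the explicit formula follow at once from the single intersection-number identity above.
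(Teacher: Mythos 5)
Your proposal is correct and follows essentially the same route as the paper's proof: identify $\mathcal C_T$ with the subcode $\mathcal C(T)$ via the weight-preserving puncturing map, count weight-$k$ codewords avoiding $T$ as $(q-1)$ times the intersection number $\lambda^{T}$ of the $t$-design $(\mathcal P(\mathcal C), \mathcal B_k(\mathcal C))$, and apply Theorem \ref{thm:t to 1}. Your explicit treatment of the boundary cases $0\le k<d$ is a small addition the paper leaves implicit, but the core argument is identical.
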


\begin{proof}
Let $\mathcal C(T)=\{(c_i)_{i \in \mathcal C} \in \mathcal C: c_i=0 \text{ for any } i \in T \}$.
Let $\pi_{T}$ be the map from $\mathcal C(T)$ to $\mathcal C_{T}$ defined as
\begin{align*}
\pi_{T}: \mathcal C(T) & \longrightarrow \mathcal C_T,\\
(c_i)_{i \in \mathcal P(\mathcal C)} & \longmapsto (c_i)_{i\in \mathcal P(\mathcal C) \setminus T}.
\end{align*}
By the definition of $\mathcal C(T)$ and $\mathcal C_{T}$, the map $\pi_T$ is a  one-to-one linear transformation.
Then
\begin{align*}
A_k(\mathcal C_T)=    \mu^{T}(\mathcal W_{k}(\mathcal C)),
\end{align*}
where $\mu^{T}(\mathcal W_{k}(\mathcal C))$
 is equal to the number of codewords in $\mathcal W_{k}(\mathcal C)$ that satisfy the conditions $c_i=0$ if $i \in T $.
 Note that $\left ( \mathcal P(\mathcal C) , \mathcal B_{k}(\mathcal C) \right )$ is a $t$-$(\nu, k, \lambda)$
 design with $\frac{1}{q-1} A_{k}(\mathcal C)$ blocks, where $\lambda=\frac{\binom{k}{t}}{\binom{\nu}{t}} \frac{1}{q-1}  A_{k}(\mathcal C)$.
 Let $ \lambda^{T}$ be the intersection number
 of the $t$-design $\left ( \mathcal P(\mathcal C) , \mathcal B_{k}(\mathcal C) \right )$. By  Theorem \ref{thm:t to 1}, one has
 \begin{align*}
 \mu^{T}(\mathcal W_{k}(\mathcal C))=&(q-1) \lambda^{T}\\
 =& (q-1) \frac{\binom{\nu-t}{k}}{\binom{\nu-t}{k-t}} \lambda\\
 =& \frac{ \binom{k}{t} \binom{\nu-t}{k}}{ \binom{\nu }{t} \binom{\nu-t}{k-t}}    A_{k}(\mathcal C).
 \end{align*}
The desired conclusion then follows from
$
A_k(\mathcal C_T)=    \mu^{T}(\mathcal W_{k}(\mathcal C))
$
and Lemma \ref{lem:C-S-P}.
\end{proof}

\begin{theorem}\label{thm:pct-code}
Let $\mathcal C$ be a $[\nu, m, d]$ linear code over $\mathrm{GF}(q)$ and  $d^{\perp}$  the minimum distance of $\mathcal  C^{\perp}$.
 Let  $t$ be a positive integer with  $0< t <d^{\perp}$.
Let $T$ be  a  set of $t$ coordinate positions in $\mathcal  C$.
Suppose that $\left ( \mathcal P(\mathcal C) , \mathcal B_i(\mathcal C) \right )$ is a $t$-design for any $i$ with $d \le i \le \nu$.
Then the punctured code $\mathcal C^T$ is a linear code of length $\nu-t$ and dimension $m$. The weight distribution
$\left ( A_k(\mathcal C^T) \right )_{k=0}^{\nu-t}$ of $\mathcal C^T$ is independent of the specific choice of the elements
in $T$. Specifically,
$$A_k(\mathcal C^T) =\sum_{i=0}^t  \frac{\binom{\nu-t}{k} \binom{k+i}{t} \binom{t}{i} }{\binom{\nu-t}{k-t+i} \binom{\nu}{t}} A_{k+i}(\mathcal C).$$
\end{theorem}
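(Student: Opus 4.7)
The plan is to imitate the proof of Lemma~\ref{lem:P:k:k+t}, which establishes the very same closed form on a subrange of $k$ under the slightly stronger restriction $t < \min\{d, d^\perp\}$; here I just need to check that the weaker-looking hypotheses and the wider range of $k$ cause no new difficulty. The length $\nu - t$ is immediate from the definition of puncturing. For the dimension, Lemma~\ref{lem:C-S-P}(2) yields $\dim \mathcal C^T = m$ as soon as $t < \min\{d, d^\perp\}$. The hypothesis supplies $t < d^\perp$, while $t < d$ follows from the assumption that $(\mathcal P(\mathcal C), \mathcal B_d(\mathcal C))$ is a $t$-design: since $A_d(\mathcal C) > 0$ by the minimality of $d$, this design is nonempty with blocks of size $d$, so $d \ge t$; the degenerate borderline $d = t$ (in which $\mathcal B_d(\mathcal C)$ would have to contain every $t$-subset) is then excluded as a separate edge case.

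For the weight distribution formula, fix $k$ with $1 \le k \le \nu - t$ and set up the puncturing map $\pi^T : \mathcal C \to \mathcal C^T$, which is injective because $t < d$. Partitioning preimages of a weight-$k$ codeword according to how many of their nonzero coordinates fall inside $T$ gives
\[
A_k(\mathcal C^T) \;=\; \sum_{t_1 = 0}^{t}\ \sum_{\substack{T_1 \subseteq T \\ |T_1| = t_1}} \mu_{T_1}\bigl(\mathcal W_{k + t_1}(\mathcal C)\bigr),
\]
where $\mu_{T_1}(\mathcal W_{k+t_1}(\mathcal C))$ counts codewords of weight $k+t_1$ in $\mathcal C$ that are nonzero precisely on $T_1$ and zero on $T \setminus T_1$. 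Whenever $A_{k+t_1}(\mathcal C) > 0$ one has $k + t_1 \ge d$, so the hypothesis ensures $(\mathcal P(\mathcal C), \mathcal B_{k+t_1}(\mathcal C))$ is a $t$-$(\nu, k+t_1, \lambda)$ design with $\lambda = \binom{k+t_1}{t} A_{k+t_1}(\mathcal C) / \bigl((q-1)\binom{\nu}{t}\bigr)$; Theorem~\ref{thm:t to 1} then evaluates $\mu_{T_1}(\mathcal W_{k+t_1}(\mathcal C)) = (q-1)\,\lambda_{T_1}^{T \setminus T_1}$ purely in terms of $t_1$ and $t - t_1$. Summing over the $\binom{t}{t_1}$ choices of $T_1$ of cardinality $t_1$ collapses the inner sum and produces the stated closed form, and the independence of $A_k(\mathcal C^T)$ on the specific choice of $T$ falls out of the same intersection-number invariance.

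The main obstacle I foresee is routine but delicate boundary bookkeeping. For $1 \le k < d$, some summands $A_{k+t_1}(\mathcal C)$ vanish outright and the associated design hypothesis is simply not needed for those terms; for $k = 0$ the closed form should be reconciled with the trivial value $A_0(\mathcal C^T) = 1$ coming from the injectivity of $\pi^T$. Splitting the argument according to whether each $A_{k+t_1}(\mathcal C)$ is positive disposes of the first issue, and writing out $k = 0$ separately handles the second; once these are settled, the entire weight distribution is pinned down by a single Lemma~\ref{lem:P:k:k+t}-style counting argument.
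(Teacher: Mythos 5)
Your proof is correct and follows the paper's own route exactly: the paper's proof of this theorem is a one-line citation of Lemma \ref{lem:C-S-P} (for the length and dimension) and Lemma \ref{lem:P:k:k+t} (for the closed form), which is precisely the argument you reconstruct. Your extra bookkeeping is in fact more careful than the paper's, which silently assumes $t<\min\{d,d^{\perp}\}$ when invoking those lemmas: deriving $t\le d$ from the nonemptiness of the $t$-design $\left(\mathcal P(\mathcal C),\mathcal B_d(\mathcal C)\right)$ is the right observation (though the borderline $t=d$ cannot actually be ruled out in general --- e.g.\ the binary even-weight code with $t=d=2$ satisfies all stated hypotheses and there the dimension claim fails, so strictly one must add $t<d$ as a hypothesis rather than ``exclude'' the case), and treating $k=0$ separately is genuinely needed, since the displayed formula evaluates to $A_t(\mathcal C)/\binom{\nu}{t}=0$ rather than $1$ at $k=0$.
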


\begin{proof}
The desired results follow from Lemmas \ref{lem:C-S-P} and \ref{lem:P:k:k+t}.
\end{proof}

Theorems \ref{thm:sct-code} and \ref{thm:pct-code} settle the parameters and
weight distribution of the shortened code $\mathcal C_T$ and punctured code
$\mathcal C^T$ of a code $\mathcal C$ supporting $t$-designs, respectively.
In general it could be very hard to determine the weight distribution of a
shortened or punctured code of a linear code.

\subsection{Punctured and shortened codes of a family of binary codes}\label{sec-pscode2}

In this subsection, we determine the parameters and weight distributions of
some punctured and shortened codes of a family of binary
linear codes constructed from bent Boolean functions. As will be demonstrated
shortly, the shortened and punctured codes are quite interesting.

Let $f$ be a bent function from $\mathrm{GF}(2^n)$ to $\mathrm{GF}(2)$, and let $D_f=\{d_0, d_1, \ldots, d_{\nu_f-1}\} \subseteq \mathrm{GF}(2^n)$ be the support of $f$.
Define a binary code of length $\nu_f$ by
\begin{align*}
\mathcal C(D_f)=\{\left ( \mathrm{Tr}_{2^n/2}(x d_0)+y, \ldots,   \mathrm{Tr}_{2^n/2}(x d_{\nu_f-1})+y \right ): x\in \mathrm{GF}(2^n), y\in \mathrm{GF}(2) \}.
\end{align*}
The following theorem on parameters of
 $\mathcal C(D_f)$ was proved in \cite[Theorems 14.13 and 14.15]{Dingbk18}.
\begin{theorem}\label{thm-C(supp-bent)}
Let $f$ be a bent function from $\mathrm{GF}(2^n)$ to $\mathrm{GF}(2)$, where $n\ge 6$ and is even. Then $\mathcal C(D_f)$ is a  $[\nu_f, n+1, (\nu_f-2^{\frac{n-2}{2}})/2]$
three-weight binary code with the weight distribution in Table \ref{table:supp-bent} and it  holds $2$-designs.
The dual code $\mathcal C(D_f)^\perp$ has minimum distance $4$.

\begin{table}[htbp]
\centering
\caption{The weight distribution of the code  $\mathcal C(D_f)$ of Theorem \ref{thm-C(supp-bent)}}
\label{table:supp-bent}
\begin{tabular}{|c|c|}
  \hline
  Weight & Multiplicity \\
  \hline
  $0$& $1$\\
\hline
$\frac{\nu_f}{2}-2^{\frac{n-4}{2}}$ & $2^n-1$\\
\hline
$\frac{\nu_f}{2}+2^{\frac{n-4}{2}}$ & $2^n-1$\\
 \hline
 $\nu_f$ & $1$\\
\hline
\end{tabular}
\end{table}

\end{theorem}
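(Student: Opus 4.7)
The plan is to (i) compute the Hamming weight of a generic codeword $\mathbf c_{x,y}$ via the Walsh transform of $f$ and the bent hypothesis, (ii) read off the dimension, weight distribution, and minimum distance, (iii) determine $d^\perp$ through a direct description of $\mathcal C(D_f)^\perp$ combined with a character-sum count for weight-$4$ words, and (iv) conclude the $2$-design property via Theorem~\ref{thm-AMTheoremExt}. For (i), expanding the indicator $\mathbf 1[s\neq 0]=(1-(-1)^s)/2$ gives
\[
\wt(\mathbf c_{x,y})=\frac{\nu_f}{2}-\frac{(-1)^y}{2}\sum_{d\in D_f}(-1)^{\mathrm{Tr}_{2^n/2}(xd)}.
\]
Using $(-1)^{f(u)}=1-2\,\mathbf 1_{D_f}(u)$ in the definition of the Walsh transform yields $\sum_{d\in D_f}(-1)^{\mathrm{Tr}_{2^n/2}(xd)}=-\chi_f(x)/2$ for $x\neq 0$ and $\chi_f(0)=2^n-2\nu_f$. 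Since $f$ is bent, $\chi_f(x)=\pm 2^{n/2}$ everywhere, so the weights are $0$ and $\nu_f$ at $x=0$ and $\nu_f/2\pm 2^{(n-4)/2}$ at $x\neq 0$; flipping $y$ flips the sign of the correction, so for each $x\neq 0$ the two intermediate weights are each hit exactly once, yielding the multiplicities of Table~\ref{table:supp-bent}.

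For (ii), the weight formula forces $\mathbf c_{x,y}=\mathbf 0$ only at $(x,y)=(0,0)$, so the evaluation map is injective and $\dim\mathcal C(D_f)=n+1$; the smallest nonzero weight equals $\nu_f/2-2^{(n-4)/2}=(\nu_f-2^{(n-2)/2})/2$. For (iii), a standard orthogonality check identifies
\[
\mathcal C(D_f)^\perp=\Bigl\{v\in\mathrm{GF}(2)^{\nu_f}:\sum_i v_i=0\text{ and }\sum_i v_i d_i=0\text{ in }\mathrm{GF}(2^n)\Bigr\}.
\]
Odd weights are excluded by the parity condition, and weight $2$ is excluded because two distinct elements of $D_f$ cannot cancel in characteristic~$2$. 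To show $d^\perp\le 4$, I would compute
\[
N=\#\{(d_1,\ldots,d_4)\in D_f^4:d_1+d_2+d_3+d_4=0\}=\frac{1}{2^n}\sum_{x\in\mathrm{GF}(2^n)}\Bigl(\sum_{d\in D_f}(-1)^{\mathrm{Tr}_{2^n/2}(xd)}\Bigr)^4,
\]
evaluate the inner sums via the bent property, and subtract the contribution of degenerate tuples (coordinate repetitions) to confirm that at least one genuine weight-$4$ dual codeword exists for $n\ge 6$.

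Finally for (iv), with $t=2$ we have $t<\min(d,d^\perp)=4$, and exactly two weights of $\mathcal C(D_f)$ lie in $\{1,\ldots,\nu_f-2\}$, matching the Assmus--Mattson bound $d^\perp-t=2$. Theorem~\ref{thm-AMTheoremExt} then produces $2$-designs from both $\mathcal C(D_f)$ and $\mathcal C(D_f)^\perp$ for every block size, so $\mathcal C(D_f)$ holds $2$-designs. The principal obstacle is step (iii): ruling out weight~$2$ is immediate, but pinning down the existence of a weight-$4$ dual codeword requires careful bookkeeping of the character-sum count $N$ against the degenerate-tuple count, and this is precisely where the hypothesis $n\ge 6$ enters.
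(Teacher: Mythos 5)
Your proposal is correct, and in fact the paper offers no proof of this statement at all: it is quoted verbatim from \cite[Theorems 14.13 and 14.15]{Dingbk18}, so there is nothing internal to compare against. Your route is the standard one and each step checks out: the identity $\sum_{d\in D_f}(-1)^{\mathrm{Tr}_{2^n/2}(xd)}=-\chi_f(x)/2$ for $x\neq 0$ together with bentness gives exactly the three nonzero weights and their multiplicities, injectivity gives the dimension, and the Assmus--Mattson count (two weights in $\{1,\dots,\nu_f-2\}$ versus $d^\perp-t=2$) is right. The only step you flag as delicate, producing a weight-$4$ dual codeword, can be discharged more cheaply than your character-sum count of quadruples: once the weight distribution of $\mathcal C(D_f)$ and $A_1^\perp=A_2^\perp=A_3^\perp=0$ are known, the fourth Pless power moment (\ref{eq:PPM}) determines $A_4^\perp$ in closed form and one checks it is positive for $n\ge 6$ --- this is precisely the technique the paper itself uses in Lemma~\ref{lem:wtd-dgpra}, and it is arithmetically equivalent to your count $N$ minus the degenerate tuples, so either bookkeeping works.
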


Taking $T=\{t_1\}$, we have the parameters and the weight distribution of the shortened code
$\mathcal C(D_f)_{\{t_1\}}$ of $\mathcal C(D_f)$ in the following theorem.
\begin{theorem}\label{thm:sc1-bent}
Let $t_1$ be an integer with $0\le t_1 <\nu_f$. Let $f$ be a bent function from $\mathrm{GF}(2^n)$ to $\mathrm{GF}(2)$, where $n\ge 6$ and is even.
Then, the shortened code $\mathcal C(D_f)_{\{t_1\}}$ is a two-weight binary linear code of length $\nu_f-1$ and dimension $n$, and has the weight distribution in Table \ref{table:sc1-bent}.
\begin{table}[htbp]
\centering
\caption{The weight distribution of the shortened code $\mathcal C(D_f)_{\{t_1\}}$ of Theorem \ref{thm:sc1-bent}}
\label{table:sc1-bent}
\begin{tabular}{|c|c|}
  \hline
  Weight & Multiplicity \\
  \hline
  $0$& $1$\\
\hline
$\frac{\nu_f}{2}-2^{\frac{n-4}{2}}$ & $ \frac{\nu_f+2^{\frac{n-2}{2}}}{2 \nu_f}(2^n-1)$\\
\hline
$\frac{\nu_f}{2}+2^{\frac{n-4}{2}}$ & $ \frac{\nu_f-2^{\frac{n-2}{2}}}{2 \nu_f}(2^n-1)$\\
\hline
\end{tabular}
\end{table}

\end{theorem}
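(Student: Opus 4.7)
The plan is to apply Theorem \ref{thm:sct-code} directly with $t=1$, to the code $\mathcal{C}(D_f)$ whose weight distribution is recorded in Table \ref{table:supp-bent}. First I verify the hypotheses: by Theorem \ref{thm-C(supp-bent)}, $\mathcal{C}(D_f)$ is $[\nu_f,n+1,d]$ with minimum distance $d=(\nu_f-2^{(n-2)/2})/2$ and dual minimum distance $d^{\perp}=4$; for $n\geq 6$ both quantities strictly exceed $1$, so $0<t<\min\{d,d^{\perp}\}$ is satisfied with $t=1$. Since $\mathcal{C}(D_f)$ supports $2$-designs, it a fortiori supports $1$-designs for every nonzero weight, so the design hypothesis of Theorem \ref{thm:sct-code} is automatic in the range $d\le i\le \nu_f-1$. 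Theorem \ref{thm:sct-code} then immediately yields length $\nu_f-1$, dimension $(n+1)-1=n$, and independence of the weight distribution from the specific choice of $t_1$.

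Next I specialise the weight formula of Theorem \ref{thm:sct-code}. With $t=1$ and $\nu=\nu_f$, the coefficient simplifies as
$$\frac{\binom{k}{1}\binom{\nu_f-1}{k}}{\binom{\nu_f}{1}\binom{\nu_f-1}{k-1}} \;=\; \frac{k}{\nu_f}\cdot\frac{\binom{\nu_f-1}{k}}{\binom{\nu_f-1}{k-1}} \;=\; \frac{k}{\nu_f}\cdot\frac{\nu_f-k}{k} \;=\; \frac{\nu_f-k}{\nu_f},$$
so that $A_k(\mathcal{C}(D_f)_{\{t_1\}}) = \frac{\nu_f-k}{\nu_f}\, A_k(\mathcal{C}(D_f))$ for every $k$ with $1\le k\le \nu_f-1$.

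Substituting the three nonzero weights from Table \ref{table:supp-bent} and using $2\cdot 2^{(n-4)/2}=2^{(n-2)/2}$ gives, for $k=\nu_f/2\mp 2^{(n-4)/2}$, the multiplicity $\frac{\nu_f\pm 2^{(n-2)/2}}{2\nu_f}(2^n-1)$. The all-one codeword (weight $\nu_f$) contributes $\frac{\nu_f-\nu_f}{\nu_f}\cdot 1 = 0$ and therefore drops out of the shortened code, which is precisely why only two nonzero weights remain, matching Table \ref{table:sc1-bent}.

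No substantive obstacle arises in this argument; it is essentially a plug-in into the general result of Theorem \ref{thm:sct-code}. The only points that merit a brief check are (i) the hypothesis $t<\min\{d,d^{\perp}\}$, which needs the explicit bounds on $d$ and $d^{\perp}$ from Theorem \ref{thm-C(supp-bent)}, and (ii) the observation that a $2$-design structure on the supports of codewords of each fixed weight automatically yields the $1$-design property needed by Theorem \ref{thm:sct-code}. Both are immediate from the standing hypotheses, and the rest is the algebraic simplification displayed above.
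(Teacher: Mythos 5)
Your proposal is correct and follows essentially the same route as the paper: apply Theorem \ref{thm:sct-code} with $t=1$ to get $A_k(\mathcal C(D_f)_{\{t_1\}})=\frac{\nu_f-k}{\nu_f}A_k(\mathcal C(D_f))$ and then substitute the weight distribution from Theorem \ref{thm-C(supp-bent)}. The paper's proof is just a terser version of yours; your additional checks of the hypotheses $t<\min\{d,d^{\perp}\}$ and the $1$-design property, and the observation that the all-one codeword drops out, are all correct.
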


\begin{proof}
By Theorem \ref{thm:sct-code},
\begin{align*}
A_k(\mathcal C (D_f)_{\{t_1\}}) =\frac{\nu_f-k}{\nu_f} A_k\left (\mathcal C (D_f)\right ).
\end{align*}
The desired results follow from Theorem \ref{thm-C(supp-bent)}.
\end{proof}

Taking $T=\{t_1,t_2\}$, we have the parameters and the weight distribution of the shortened code
$\mathcal C(D_f)_{\{t_1,t_2\}}$ of $\mathcal C(D_f)$ in the following theorem.

\begin{theorem}\label{thm:sc2-bent}
Let $t_1$ and $t_2$ be  integers with $0\le t_1< t_2 <\nu_f$. Let $f$ be a bent function from
$\mathrm{GF}(2^n)$ to $\mathrm{GF}(2)$, where $n\ge 6$ and is even.
Then, the shortened code $\mathcal C(D_f)_{\{t_1, t_2\}}$ is a two-weight binary linear code of length $\nu_f-2$ and dimension $n-1$, and
has the weight distribution in Table \ref{table:sc2-bent}.
\begin{table}[htbp]
\centering
\caption{The weight distribution of the shortened code $\mathcal C(D_f)_{\{t_1, t_2\}}$ of Theorem \ref{thm:sc2-bent}}
\label{table:sc2-bent}
\begin{tabular}{|c|c|}
  \hline
  Weight & Multiplicity \\
  \hline
  $0$& $1$\\
\hline
$\frac{\nu_f}{2}-2^{\frac{n-4}{2}}$ & $ \frac{\left (\nu_f+2^{\frac{n-2}{2}} \right ) \left (\nu_f+2^{\frac{n-2}{2}}-2 \right )}{4 \nu_f(\nu_f-1)}(2^n-1)$\\
\hline
$\frac{\nu_f}{2}+2^{\frac{n-4}{2}}$ & $  \frac{\left (\nu_f-2^{\frac{n-2}{2}} \right ) \left (\nu_f-2^{\frac{n-2}{2}}-2 \right )}{4 \nu_f(\nu_f-1)}(2^n-1)$\\
\hline
\end{tabular}
\end{table}

\end{theorem}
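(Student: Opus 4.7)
The strategy is to apply Theorem~\ref{thm:sct-code} directly, with $t = 2$ and $T = \{t_1, t_2\}$, to the code $\mathcal C(D_f)$ whose parameters and weight distribution are given in Theorem~\ref{thm-C(supp-bent)}.

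First, I would verify the hypotheses of Theorem~\ref{thm:sct-code}. From Theorem~\ref{thm-C(supp-bent)}, $\mathcal C(D_f)$ has dimension $n+1$, dual minimum distance $d^\perp = 4$, minimum distance $d = (\nu_f - 2^{(n-2)/2})/2$, and it supports a $2$-design on each nonzero weight. For a bent function on $\mathrm{GF}(2^n)$ with $n \ge 6$ the support size $\nu_f$ lies in $\{2^{n-1} \pm 2^{(n-2)/2}\}$, which gives $d \ge 2^{n-2} - 2^{(n-2)/2} \ge 3$, and hence $t = 2 < \min\{d, d^\perp\}$. The design hypothesis of Theorem~\ref{thm:sct-code} is therefore met for every weight $i$ with $d \le i \le \nu_f - 2$.

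Thus Theorem~\ref{thm:sct-code} applies and yields length $\nu_f - 2$, dimension $(n+1) - 2 = n - 1$, and
\[
A_k(\mathcal C(D_f)_{\{t_1, t_2\}}) \;=\; \frac{\binom{k}{2}\binom{\nu_f - 2}{k}}{\binom{\nu_f}{2}\binom{\nu_f - 2}{k - 2}}\, A_k(\mathcal C(D_f)) \;=\; \frac{(\nu_f - k)(\nu_f - k - 1)}{\nu_f(\nu_f - 1)}\, A_k(\mathcal C(D_f)).
\]
The final step is to substitute the two nonzero weights $k = \nu_f/2 \pm 2^{(n-4)/2}$ from Table~\ref{table:supp-bent} and simplify using the identity $2 \cdot 2^{(n-4)/2} = 2^{(n-2)/2}$ to put $\nu_f - k$ and $\nu_f - k - 1$ into the form that appears in Table~\ref{table:sc2-bent}. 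The all-ones codeword of weight $\nu_f$ contributes $0$ because the factor $\nu_f - k$ vanishes there, which matches the obvious fact that shortening at two coordinates kills the all-ones vector.

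There is no genuine obstacle in this argument: once Theorem~\ref{thm:sct-code} and Theorem~\ref{thm-C(supp-bent)} are in hand, the proof reduces to an algebraic substitution together with the verification of the inequality $t < \min\{d, d^\perp\}$, which relies on the hypothesis $n \ge 6$ via the lower bound on $\nu_f$. The only bookkeeping to watch is the power-of-two identity when rewriting $(\nu_f - k_{\pm})(\nu_f - k_{\pm} - 1)$ in the displayed symmetric form.
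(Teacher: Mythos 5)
Your proposal is correct and follows exactly the paper's route: apply Theorem \ref{thm:sct-code} with $t=2$ to get $A_k(\mathcal C(D_f)_{\{t_1,t_2\}})=\frac{(\nu_f-k)(\nu_f-k-1)}{\nu_f(\nu_f-1)}A_k(\mathcal C(D_f))$ and then substitute the weight distribution from Theorem \ref{thm-C(supp-bent)}. Your explicit verification of $t<\min\{d,d^\perp\}$ via the bound on $\nu_f$ is a detail the paper leaves implicit, but otherwise the two arguments coincide.
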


\begin{proof}
By Theorem \ref{thm:sct-code},
\begin{align*}
A_k(\mathcal C (D_f)_{\{t_1, t_2\}}) =\frac{(\nu_f-k)(\nu_f-k-1)}{\nu_f(\nu_f-1)} A_k\left (\mathcal C (D_f)\right ).
\end{align*}
The desired results follow from Theorem \ref{thm-C(supp-bent)}.
\end{proof}

Taking $T=\{t_1\}$, we have the parameters and the weight distribution of the punctured code
$\mathcal C(D_f)^{\{t_1\}}$ of $\mathcal C(D_f)$ in the following theorem.

\begin{theorem}\label{thm:pc1-bent}
Let $t_1$ be an integer with $0\le t_1 <\nu_f$. Let $f$ be a bent function from $\mathrm{GF}(2^n)$ to $\mathrm{GF}(2)$, where $n\ge 6$ and is even.
Then, the punctured code $\mathcal C(D_f)^{\{t_1\}}$ is a five-weight binary linear code of length $\nu_f-1$ and dimension $n+1$, and has the weight distribution in Table \ref{table:pc1-bent}.
\begin{table}[htbp]
\centering
\caption{The weight distribution of the punctured  code $\mathcal C(D_f)^{\{t_1\}}$ of Theorem \ref{thm:pc1-bent}}
\label{table:pc1-bent}
\begin{tabular}{|c|c|}
  \hline
  Weight & Multiplicity \\
  \hline
  $0$& $1$\\
  \hline
  $\frac{\nu_f}{2}-2^{\frac{n-4}{2}}-1$ & $ \frac{\nu_f-2^{\frac{n-2}{2}}}{2 \nu_f}(2^n-1)$\\
\hline
$\frac{\nu_f}{2}-2^{\frac{n-4}{2}}$ & $ \frac{\nu_f+2^{\frac{n-2}{2}}}{2 \nu_f}(2^n-1)$\\
\hline
$\frac{\nu_f}{2}+2^{\frac{n-4}{2}}-1$ & $ \frac{\nu_f+2^{\frac{n-2}{2}}}{2 \nu_f}(2^n-1)$\\
\hline
$\frac{\nu_f}{2}+2^{\frac{n-4}{2}}$ & $ \frac{\nu_f-2^{\frac{n-2}{2}}}{2 \nu_f}(2^n-1)$\\
\hline
$\nu_f-1$ & $1$\\
\hline
\end{tabular}
\end{table}

\end{theorem}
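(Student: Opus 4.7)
The plan is to apply Theorem~\ref{thm:pct-code} to $\mathcal C(D_f)$ with $t=1$ and $T=\{t_1\}$. By Theorem~\ref{thm-C(supp-bent)}, $\mathcal C(D_f)$ supports $2$-designs for every weight class, so in particular every $\bigl(\mathcal P(\mathcal C(D_f)),\mathcal B_i(\mathcal C(D_f))\bigr)$ is a $1$-design for $d\le i\le \nu_f$; since the dual minimum distance equals $4>1$, the hypothesis $0<t<d^\perp$ of Theorem~\ref{thm:pct-code} is met. Lemma~\ref{lem:C-S-P} then immediately supplies the length $\nu_f-1$ and dimension $n+1$ asserted in the theorem.

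Next I would specialise the weight-distribution formula of Theorem~\ref{thm:pct-code} at $t=1$. Using the identity $\binom{\nu_f-1}{k}/\binom{\nu_f-1}{k-1}=(\nu_f-k)/k$, the two-term sum collapses to
$$
A_k\!\left(\mathcal C(D_f)^{\{t_1\}}\right) = \frac{\nu_f-k}{\nu_f}\,A_k(\mathcal C(D_f)) + \frac{k+1}{\nu_f}\,A_{k+1}(\mathcal C(D_f)).
$$
By Table~\ref{table:supp-bent}, the only weights of $\mathcal C(D_f)$ are $0$, $w_1:=\nu_f/2-2^{(n-4)/2}$, $w_2:=\nu_f/2+2^{(n-4)/2}$ and $\nu_f$, so a weight $k$ of the punctured code must satisfy $k\in\{0,w_1,w_2,\nu_f\}$ or $k+1\in\{0,w_1,w_2,\nu_f\}$. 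This cuts the candidate list down to the six values $\{0,\,w_1-1,\,w_1,\,w_2-1,\,w_2,\,\nu_f-1\}$.

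To finish, I would substitute $A_0(\mathcal C(D_f))=A_{\nu_f}(\mathcal C(D_f))=1$ and $A_{w_1}(\mathcal C(D_f))=A_{w_2}(\mathcal C(D_f))=2^n-1$ into the collapsed formula. The extreme values give multiplicity $1$ at weights $0$ and $\nu_f-1$, and for each interior candidate exactly one of the two terms vanishes (since $w_1+1$ and $w_2+1$ are not weights of $\mathcal C(D_f)$). Rewriting $\nu_f/2\pm 2^{(n-4)/2}$ as $(\nu_f\pm 2^{(n-2)/2})/2$ then puts the four interior multiplicities in the form displayed in Table~\ref{table:pc1-bent}. The only point that needs verification—and it is the only possible obstacle—is that for $n\ge 6$ one has $w_2-w_1=2^{(n-2)/2}\ge 4$, which guarantees that the six candidate weights are pairwise distinct and that $w_1+1,\,w_2+1\notin\{0,w_1,w_2,\nu_f\}$. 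This leaves exactly five nonzero weights, as claimed.
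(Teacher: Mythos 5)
Your proposal is correct and follows essentially the same route as the paper: apply Theorem \ref{thm:pct-code} with $t=1$, collapse the two-term sum to $A_k(\mathcal C^{\{t_1\}})=\frac{\nu_f-k}{\nu_f}A_k(\mathcal C)+\frac{k+1}{\nu_f}A_{k+1}(\mathcal C)$, and feed in the weight distribution of Table \ref{table:supp-bent}; the paper states exactly these two specialized formulas (as $A_k$ and $A_{k-1}$ for $k=\frac{\nu_f}{2}\pm 2^{\frac{n-4}{2}}$). Your additional checks that the six candidate weights are distinct and that $w_1+1$, $w_2+1$ are not weights of $\mathcal C(D_f)$ are details the paper leaves implicit, and they verify correctly for $n\ge 6$.
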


\begin{proof}
By Theorem \ref{thm:pct-code}, for $k= \frac{\nu_f}{2}\pm 2^{\frac{n-4}{2}}$, one has
\begin{align*}
A_k(\mathcal C (D_f)^{\{t_1\}}) =\frac{\nu_f-k}{\nu} A_k\left (\mathcal C (D_f)\right ),
\end{align*}
and
\begin{align*}
A_{k-1}(\mathcal C (D_f)^{\{t_1\}}) =\frac{k}{\nu_f} A_k\left (\mathcal C (D_f)\right ).
\end{align*}
The desired results follow from Theorem \ref{thm-C(supp-bent)}.
\end{proof}

Taking $T=\{t_1,t_2\}$, we have the parameters and the weight distribution of the punctured code
$\mathcal C(D_f)^{\{t_1,t_2\}}$ of $\mathcal C(D_f)$ in the following theorem.

\begin{theorem}\label{thm:pc2-bent}
Let $t_1, t_2$ be  integers with $0\le t_1<t_2 <\nu_f$. Let $f$ be a bent function from
$\mathrm{GF}(2^n)$ to $\mathrm{GF}(2)$, where $n\ge 6$ and is even.
Then, the punctured code $\mathcal C(D_f)^{\{t_1, t_2\}}$ is a seven-weight binary linear code of length $\nu_f-2$ and dimension $n+1$, and has the weight distribution in Table \ref{table:pc2-bent}.
\begin{table}[htbp]
\centering
\caption{The weight distribution of the punctured  code $\mathcal C(D_f)^{\{t_1,t_2\}}$ of Theorem \ref{thm:pc2-bent}}
\label{table:pc2-bent}
\begin{tabular}{|c|c|}
  \hline
  Weight & Multiplicity \\
  \hline
  $0$& $1$\\
  \hline
   $\frac{\nu_f}{2}-2^{\frac{n-4}{2}}-2$ & $  \frac{\left (\nu_f-2^{\frac{n-2}{2}} \right ) \left (\nu_f-2^{\frac{n-2}{2}}-2 \right )}{4 \nu_f(\nu_f-1)}(2^n-1) $\\
  \hline
  $\frac{\nu_f}{2}-2^{\frac{n-4}{2}}-1$ & $ \frac{\nu_f^2-2^{n-2}}{2 \nu_f(\nu_f-1)}(2^n-1)$\\
\hline
$\frac{\nu_f}{2}-2^{\frac{n-4}{2}}$ & $  \frac{\left (\nu_f+2^{\frac{n-2}{2}} \right ) \left (\nu_f+2^{\frac{n-2}{2}}-2 \right )}{4 \nu_f(\nu_f-1)}(2^n-1)$\\
\hline
$\frac{\nu_f}{2}+2^{\frac{n-4}{2}}-2$ & $ \frac{\left (\nu_f+2^{\frac{n-2}{2}} \right ) \left (\nu_f+2^{\frac{n-2}{2}}-2 \right )}{4 \nu_f(\nu_f-1)}(2^n-1) $\\
\hline
$\frac{\nu_f}{2}+2^{\frac{n-4}{2}}-1$ & $ \frac{\nu_f^2-2^{n-2}}{2 \nu_f(\nu_f-1)}(2^n-1)$\\
\hline
$\frac{\nu_f}{2}+2^{\frac{n-4}{2}}$ & $  \frac{\left (\nu_f-2^{\frac{n-2}{2}} \right ) \left (\nu_f-2^{\frac{n-2}{2}}-2 \right )}{4 \nu_f(\nu_f-1)}(2^n-1) $\\
\hline
$\nu_f-2$ & $1$\\
\hline
\end{tabular}
\end{table}

\end{theorem}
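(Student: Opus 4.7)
The plan is to apply Theorem \ref{thm:pct-code} with $t = 2$. Its hypotheses are satisfied by Theorem \ref{thm-C(supp-bent)}, which gives dual distance $d^\perp = 4 > 2$ and shows that $(\mathcal P(\mathcal C(D_f)), \mathcal B_i(\mathcal C(D_f)))$ is a $2$-design for every $i$ with $d \le i \le \nu_f$. By Lemma \ref{lem:C-S-P} the punctured code $\mathcal C(D_f)^{\{t_1,t_2\}}$ then has length $\nu_f - 2$ and dimension $n+1$, and its weight distribution does not depend on the choice of $\{t_1,t_2\}$. The weights are governed by
$$A_k(\mathcal C(D_f)^{\{t_1,t_2\}}) = \sum_{i=0}^{2} \frac{\binom{\nu_f - 2}{k}\binom{k+i}{2}\binom{2}{i}}{\binom{\nu_f - 2}{k - 2 + i}\binom{\nu_f}{2}} A_{k+i}(\mathcal C(D_f)).$$

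Setting $w_1 = \nu_f/2 - 2^{(n-4)/2}$ and $w_2 = \nu_f/2 + 2^{(n-4)/2}$, Table \ref{table:supp-bent} shows that $A_{k+i}(\mathcal C(D_f))$ vanishes unless $k+i \in \{w_1, w_2, \nu_f\}$. Since $w_2 - w_1 = 2^{(n-2)/2} \geq 4$ for even $n \geq 6$, the six values $w_j - i$ for $j \in \{1,2\}$ and $i \in \{0,1,2\}$ are pairwise distinct and disjoint from $\{0, \nu_f - 2\}$, yielding the seven nonzero candidate weights of Table \ref{table:pc2-bent}. The only other candidate $k = \nu_f - 1$ is killed by the factor $\binom{\nu_f - 2}{\nu_f - 1} = 0$ and therefore contributes no codewords.

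For each listed nonzero weight exactly one index $i$ contributes. For $k = w_j - 2$ only $i = 2$ survives, reducing the coefficient to $w_j(w_j - 1)/(\nu_f(\nu_f - 1))$; for $k = w_j - 1$ only $i = 1$ survives, yielding $2 w_j(\nu_f - w_j)/(\nu_f(\nu_f - 1))$; for $k = w_j$ only $i = 0$ survives, yielding $(\nu_f - w_j)(\nu_f - w_j - 1)/(\nu_f(\nu_f - 1))$; and for $k = \nu_f - 2$ only $i = 2$ with $A_{\nu_f} = 1$ contributes, collapsing the coefficient to $1$. Multiplying by $A_{w_1} = A_{w_2} = 2^n - 1$ and using $w_1 w_2 = \nu_f^2/4 - 2^{n-4}$ together with $\nu_f - w_j = w_{3-j}$ will recover the six multiplicities of Table \ref{table:pc2-bent} after clearing a factor of $4$.

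The principal obstacle is clerical rather than conceptual: one must track the binomial ratios $\binom{\nu_f-2}{k}/\binom{\nu_f-2}{k-2+i}$ carefully and verify pairwise separation of the seven candidate weights for every even $n \geq 6$. No further combinatorial input is required beyond Theorems \ref{thm-C(supp-bent)} and \ref{thm:pct-code}.
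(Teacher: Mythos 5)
Your proposal is correct and follows essentially the same route as the paper: both apply Theorem \ref{thm:pct-code} with $t=2$ to the code $\mathcal C(D_f)$, whose $2$-design property and weight distribution come from Theorem \ref{thm-C(supp-bent)}, and then evaluate the resulting binomial coefficients at $k=w_j-2,\,w_j-1,\,w_j$ and $k=\nu_f-2$ (your simplified coefficients $w_j(w_j-1)/(\nu_f(\nu_f-1))$, $2w_j(\nu_f-w_j)/(\nu_f(\nu_f-1))$ and $(\nu_f-w_j)(\nu_f-w_j-1)/(\nu_f(\nu_f-1))$ agree exactly with the three displayed formulas in the paper's proof, the last one being what the paper phrases as $A_k(\mathcal C(D_f)^{\{t_1,t_2\}})=A_k(\mathcal C(D_f)_{\{t_1,t_2\}})$ via Theorem \ref{thm:sc2-bent}). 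Your additional check that the seven candidate weights are pairwise distinct for even $n\ge 6$ is a detail the paper leaves implicit, but the argument is the same.
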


\begin{proof}
By Theorem \ref{thm:pct-code}, for $k= \frac{\nu_f}{2}\pm 2^{\frac{n-4}{2}}$, one has
\begin{align*}
A_k(\mathcal C (D_f)^{\{t_1,t_2\}}) =A_k\left ( \mathcal C (D_f)_{\{t_1, t_2\}}  \right ),
\end{align*}
\begin{align*}
A_{k-1}(\mathcal C (D_f)^{\{t_1,t_2\}}) =\frac{2k(\nu_f-k)}{\nu_f(\nu_f-1)}   A_k\left (\mathcal C (D_f)\right )
\end{align*}
and
\begin{align*}
A_{k-2}(\mathcal C (D_f)^{\{t_1,t_2\}}) =\frac{k(k-1)}{\nu_f(\nu_f-1)}   A_k\left (\mathcal C (D_f)\right ).
\end{align*}
The desired results follow from Theorem \ref{thm-C(supp-bent)} and Theorem \ref{thm:sc2-bent}.
\end{proof}

\begin{example}
Let $\mathrm{GF}(2^6)=\mathrm{GF}(2)[u]/\left(u^6+u^4+u^3+u+1\right)$ and $\alpha\in \mathrm{GF}(2^6)$ such that $\alpha^6+\alpha^4+\alpha^3+\alpha+1=0$.
Then $\alpha$ is a primitive element of $ \mathrm{GF}(2^6)$ and $f(x)=\mathrm{Tr}_{2^6/2}(\alpha  x^3)$ is a bent function on $ \mathrm{GF}(2^6)$ with
$\nu_f=\# D_f=36$.
$\mathcal{C}(D_f)$ is a $[36,7,16]$ linear code with  weight enumerator $1+63z^{16}+ 63z^{20}+z^{36}$.

Let $t_1$ be an integer with $0\le t_1\le 35$. Then the shortened code $\mathcal C(D_f)_{\{t_1\}}$  has parameters $[35,6,16]$  and weight enumerator $1+35z^{16}+ 28z^{20}$.
The punctured code $\mathcal C(D_f)^{\{t_1\}}$  has parameters $[35,7,15]$  and weight enumerator $1+28z^{15}+ 35z^{16}+35z^{19}+28z^{20}+z^{35}$. The code  $\mathcal C(D_f)_{\{t_1\}}$ is optimal and the code  $\mathcal C(D_f)^{\{t_1\}}$ is almost optimal with respect to  the Griesmer bound.

Let $t_1$ and $t_2$ be two integers with $0\le t_1<t_2\le 35$. Then the shortened code $\mathcal C(D_f)_{\{t_1, t_2\}}$  has parameters $[34, 5, 16]$  and weight enumerator
$1+19z^{16}+ 12z^{20}$.
The punctured code $\mathcal C(D_f)^{\{t_1, t_2\}}$  has parameters $[34, 7, 14]$  and weight enumerator $1+12z^{14}+32z^{15}+ 19z^{16}+19z^{18}+32z^{19}+12z^{20}+z^{34}$.
The code
 $\mathcal C(D_f)_{\{t_1,t_2\}}$ is optimal and the code $\mathcal C(D_f)^{\{t_1, t_2\}}$ is almost optimal with respect to  the Griesmer bound.
\end{example}

\subsection{Punctured and shortened codes of another family of binary codes}\label{sec-pscode3}

In this subsection, we settle the parameters and weight distributions of
some punctured and shortened codes of another family of binary
linear codes constructed from bent vectorial Boolean functions.
It will be shown that the shortened and punctured codes are
interesting.

Let $F(x)$ be a vectorial  function   from  $\mathrm{GF}(2^{n})$ to $\mathrm{GF}(2^{\ell})$.
Let $\mathcal C(F)$ be the binary code of length $2^{n}$ defined by
\begin{align}\label{eq:d-c(vb)}
\mathcal C(F)=\left  \{\left ( c_{a,b,c}(x)    \right )_{x\in \mathrm{GF}(2^{n})}:  (a,b,c)\in \mathrm{GF}(2^l)\times \mathrm{GF}(2^{n}) \times \mathrm{GF}(2) \right \},
\end{align}
where $c_{a,b,c}(x)=\mathrm{Tr}_{2^{\ell}/2}(a F(x))+ \mathrm{Tr}_{2^{n}/2}(b x) +c $.

The following was proved in \cite[Theorem 5]{DMT18}.
\begin{theorem}\label{thm-C(bent)}
Let $F$ be a bent vectorial function from $\mathrm{GF}(2^{2m})$ to $\mathrm{GF}(2^\ell)$, where $m\ge 3$. Then $\mathcal C(F)$ is a  $[2^{2m}, 2m+\ell+1, 2^{2m-1}-2^{m-1}]$
four-weight binary code with the weight distribution in Table \ref{table:supp-bentcode}.
The dual code $\mathcal C(F)^\perp$ has minimum distance $4$.

\begin{table}[htbp]
\centering
\caption{The weight distribution of the code $\mathcal C(F)$ of Theorem \ref{thm-C(bent)}}
\label{table:supp-bentcode}
\begin{tabular}{|c|c|}
  \hline
  Weight & Multiplicity \\
  \hline
  $0$& $1$\\
\hline
$2^{2m-1}-2^{m-1}$ & $(2^{l}-1)2^{2m}$\\
\hline
$2^{2m-1}$ & $2(2^{2m}-1)$\\
\hline
$2^{2m-1}+2^{m-1}$ & $(2^{l}-1)2^{2m}$\\
 \hline
 $2^{2m}$ & $1$\\
\hline
\end{tabular}
\end{table}

\end{theorem}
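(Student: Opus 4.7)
The plan is to parameterize each codeword of $\mathcal C(F)$ by the triple $(a,b,c)\in \mathrm{GF}(2^\ell)\times \mathrm{GF}(2^{2m})\times \mathrm{GF}(2)$ and compute its Hamming weight directly through a Walsh-type identity, then count the preimages in each weight class. First I would verify that the linear map $(a,b,c)\mapsto c_{a,b,c}$ is injective: if $c_{a,b,c}(x)=0$ for every $x\in \mathrm{GF}(2^{2m})$, then for $a\neq 0$ the function $x\mapsto \mathrm{Tr}_{2^\ell/2}(aF(x))$ would be affine, contradicting the bentness of the component $\mathrm{Tr}_{2^\ell/2}(aF(\cdot))$; thus $a=0$, and then the affine condition $\mathrm{Tr}_{2^{2m}/2}(bx)+c=0$ forces $b=0$ and $c=0$. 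This gives dimension $2m+\ell+1$.

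Next, I would use the standard identity
\begin{equation*}
\wt(c_{a,b,c})=\frac{1}{2}\Bigl(2^{2m}-\sum_{x\in \mathrm{GF}(2^{2m})}(-1)^{c_{a,b,c}(x)}\Bigr)
\end{equation*}
and split into two cases. If $a=0$, the inner sum is $(-1)^c\sum_x(-1)^{\mathrm{Tr}_{2^{2m}/2}(bx)}$, which equals $(-1)^c\cdot 2^{2m}$ when $b=0$ and $0$ when $b\neq 0$. This contributes the two extreme weights $0$ and $2^{2m}$ (one codeword each) and the weight $2^{2m-1}$ with multiplicity $2(2^{2m}-1)$. If $a\neq 0$, the component $g_a(x):=\mathrm{Tr}_{2^\ell/2}(aF(x))$ is bent by hypothesis, so $\mathcal W_{g_a}(b)=\pm 2^{m}$ for every $b\in \mathrm{GF}(2^{2m})$, whence $\wt(c_{a,b,c})=2^{2m-1}-\tfrac{1}{2}(-1)^c\mathcal W_{g_a}(b)\in\{2^{2m-1}-2^{m-1},\,2^{2m-1}+2^{m-1}\}$. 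For each fixed $a\neq 0$ and fixed $b$, the two choices of $c\in\mathrm{GF}(2)$ yield the two different weights, so exactly $2^{2m}$ pairs $(b,c)$ give each of the two weights; summing over the $2^\ell-1$ nonzero values of $a$ accounts for $(2^\ell-1)2^{2m}$ codewords of each of the two outer weights, matching Table~\ref{table:supp-bentcode}.

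Finally, for $d^\perp=4$ I would rule out weights $1,2,3$ by exhibiting codewords that detect such supports. A dual codeword of weight one would force a coordinate $x_0$ on which every $c_{a,b,c}$ vanishes, contradicted by $c_{0,0,1}\equiv 1$; a weight-two dual word would force two distinct coordinates $x_1\neq x_2$ with $c_{a,b,c}(x_1)=c_{a,b,c}(x_2)$ for all $(a,b,c)$, contradicted by taking $a=0$, $c=0$ and letting $b$ range (this yields $\mathrm{Tr}_{2^{2m}/2}(b(x_1-x_2))=0$ for all $b$, hence $x_1=x_2$); a weight-three dual word is ruled out again by $c_{0,0,1}\equiv 1$, since three ones sum to $1\neq 0$. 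The existence of some dual codeword of weight $4$ then follows because $\mathcal C(F)$ contains the all-one vector and is invariant under the translation action $x\mapsto x+u$ on coordinates (since $F$'s bentness is preserved under affine shifts of the argument), so $d^\perp\le 4$.

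The main obstacle, and the step that requires the most care, is the counting argument in the case $a\neq 0$: one must avoid assuming anything about the exact distribution of the signs of $\mathcal W_{g_a}(b)$ over $b$, and instead exploit the symmetry between the two values of $c$ so that the $(b,c)$ count splits evenly. This is the only place where the structure of $F$ as a \emph{bent vectorial} function (rather than merely a collection of bent components) is used, together with the fact that we may average the sign $(-1)^c$ freely thanks to the affine term $c$ built into the code.
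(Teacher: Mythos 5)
The paper does not prove this theorem itself; it quotes it from \cite[Theorem 5]{DMT18}. Your computation of the dimension and of the weight distribution is correct and is the standard exponential-sum argument one would expect: split on $a=0$ versus $a\neq 0$, use bentness of the components $\mathrm{Tr}_{2^\ell/2}(aF(\cdot))$ to get $\mathcal W_{g_a}(b)=\pm 2^m$, and note that for each fixed $(a,b)$ with $a\neq 0$ the two values of $c$ produce the two outer weights, so the count splits evenly with no information about the sign distribution of the Walsh coefficients. Your exclusion of dual weights $1$, $2$ and $3$ is also correct.

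The genuine gap is the upper bound $d^\perp\le 4$. Your justification --- that $\mathcal C(F)$ contains the all-one vector and is ``invariant under the translation action $x\mapsto x+u$'' --- does not work. First, the claimed invariance is false in general: $\mathrm{Tr}_{2^\ell/2}(aF(x+u))$ need not equal $\mathrm{Tr}_{2^\ell/2}(a'F(x))+\mathrm{Tr}_{2^{2m}/2}(b'x)+c'$ for any $(a',b',c')$ when $F$ is an arbitrary bent vectorial function. Second, even if the coordinate permutation group were transitive and $\mathbf 1\in\mathcal C(F)$, that would only force all dual weights to be even; combined with $d^\perp\ge 4$ this still leaves open $d^\perp=6$ or larger, so no weight-$4$ dual codeword is produced. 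A correct argument: a weight-$4$ dual codeword supported on $\{x_1,x_2,x_3,x_4\}$ exists if and only if $x_1+x_2+x_3+x_4=0$ and $F(x_1)+F(x_2)+F(x_3)+F(x_4)=0$ (letting $b$, resp.\ $a$, range over all values gives the two conditions, and the $c$-condition is automatic since the weight is even). Fix any $u\neq 0$; the derivative $D_uF(x)=F(x)+F(x+u)$ is constant on each of the $2^{2m-1}$ pairs $\{x,x+u\}$ and takes at most $2^\ell\le 2^m<2^{2m-1}$ values (using Nyberg's bound $\ell\le m$ for bent vectorial functions), so by pigeonhole two distinct pairs $\{x_1,x_1+u\}$ and $\{x_3,x_3+u\}$ give the same value of $D_uF$, and these four points support a weight-$4$ dual codeword. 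Alternatively, one can compute $A_4^{\perp}>0$ from the first four Pless power moments once the weight distribution of $\mathcal C(F)$ is known, exactly as the paper does in Lemma \ref{lem:wtd-dgpra} for the analogous code.
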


Taking $T=\{t_1\}$, we have the parameters and the weight distribution of the shortened code
$\mathcal C(F)_{\{t_1\}}$ of $\mathcal C(F)$  in the following theorem.
\begin{theorem}\label{thm:sc1-vbent}
Let $t_1$, $m$ be  integers with $0\le t_1 <2^{2m}$ and $m\ge 3$. Let $F$ be a bent vectorial function from $\mathrm{GF}(2^{2m})$ to $\mathrm{GF}(2^\ell)$.
Then, the shortened code $\mathcal C(F)_{\{t_1\}}$ is a binary linear code of length $2^{2m}-1$ and dimension $2m+\ell$, and has the weight distribution in Table \ref{table:sc1-vbent}.
\begin{table}[htbp]
\centering
\caption{The weight distribution of the code $\mathcal C(F)_{\{t_1\}}$ of Theorem \ref{thm:sc1-vbent}}
\label{table:sc1-vbent}
\begin{tabular}{|c|c|}
  \hline
  Weight & Multiplicity \\
  \hline
  $0$& $1$\\
\hline
$2^{2m-1}-2^{m-1}$ & $(2^{l}-1)\left (2^{2m-1}+2^{m-1}\right )$\\
\hline
$2^{2m-1}$ & $2^{2m}-1$\\
\hline
$2^{2m-1}+2^{m-1}$ & $(2^{l}-1)\left (2^{2m-1}-2^{m-1}\right )$\\
\hline
\end{tabular}
\end{table}

\end{theorem}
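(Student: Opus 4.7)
The plan is to invoke Theorem \ref{thm:sct-code} with $t=1$, mirroring the proof of Theorem \ref{thm:sc1-bent} but with the weight distribution of Table \ref{table:supp-bentcode} in place of Table \ref{table:supp-bent}. First I would verify the hypotheses. From Theorem \ref{thm-C(bent)}, the minimum distance is $d = 2^{2m-1} - 2^{m-1} \ge 28$ (since $m \ge 3$) and the dual minimum distance is $d^\perp = 4$, so $0 < t = 1 < \min\{d, d^\perp\}$. Next, I need $(\mathcal P(\mathcal C(F)), \mathcal B_i(\mathcal C(F)))$ to be a $1$-design for every $i$ with $d \le i \le 2^{2m}-1$; this is immediate because \cite{DMT18} established that $\mathcal C(F)$ supports $2$-designs, and every $2$-design is a $1$-design.

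Given these hypotheses, Theorem \ref{thm:sct-code} yields that $\mathcal C(F)_{\{t_1\}}$ has length $2^{2m} - 1$ and dimension $(2m + \ell + 1) - 1 = 2m + \ell$, with the weight distribution independent of the choice of $t_1$ and given by
\begin{align*}
A_k\bigl(\mathcal C(F)_{\{t_1\}}\bigr)
= \frac{\binom{k}{1}\binom{2^{2m}-1}{k}}{\binom{2^{2m}}{1}\binom{2^{2m}-1}{k-1}}\, A_k(\mathcal C(F))
= \frac{2^{2m} - k}{2^{2m}}\, A_k(\mathcal C(F)).
\end{align*}

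The remainder is a routine substitution of the four nonzero weights from Table \ref{table:supp-bentcode} into this identity. For $k = 2^{2m-1} - 2^{m-1}$, the factor $\frac{2^{2m}-k}{2^{2m}} = \frac{2^{2m-1}+2^{m-1}}{2^{2m}}$ times $(2^{\ell}-1)2^{2m}$ gives $(2^{\ell}-1)(2^{2m-1}+2^{m-1})$; for $k = 2^{2m-1}+2^{m-1}$, the symmetric computation gives $(2^{\ell}-1)(2^{2m-1}-2^{m-1})$; for $k = 2^{2m-1}$, the factor $\tfrac{1}{2}$ times $2(2^{2m}-1)$ gives $2^{2m}-1$; and for the all-one codeword $k = 2^{2m}$, the factor vanishes, so this weight is shortened away. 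These four values are precisely the entries of Table \ref{table:sc1-vbent}.

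There is no real obstacle here beyond bookkeeping; the only non-trivial input is the $2$-design property of $\mathcal C(F)$ from \cite{DMT18}, which is what legitimises the application of Theorem \ref{thm:sct-code}. Once that is in hand, the length, dimension, and weight multiplicities all drop out of the single formula above.
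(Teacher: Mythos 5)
Your proposal is correct and follows essentially the same route as the paper: apply Theorem \ref{thm:sct-code} with $t=1$ to get $A_k(\mathcal C(F)_{\{t_1\}})=\frac{2^{2m}-k}{2^{2m}}A_k(\mathcal C(F))$ and then substitute the weight distribution from Theorem \ref{thm-C(bent)}. Your explicit verification of the hypotheses (in particular that the $2$-design property of $\mathcal C(F)$ from \cite{DMT18} supplies the required $1$-design condition) is a detail the paper leaves implicit, but the argument is identical in substance.
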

\begin{proof}
By Theorem \ref{thm:sct-code},
\begin{align*}
A_k(\mathcal C (F)_{\{t_1\}}) =\frac{2^{2m}-k}{2^{2m}} A_k\left (\mathcal C (F)\right ).
\end{align*}
The desired results follow from Theorem \ref{thm-C(bent)}.
\end{proof}

Taking $T=\{t_1,t_2\}$, we have the parameters and the weight distribution of the shortened code
$\mathcal C(F)_{\{t_1,t_2\}}$ of $\mathcal C(F)$  in the following theorem.

\begin{theorem}\label{thm:sc2-vbent}
Let $t_1$, $t_2$ and $m$ be  integers with $0\le t_1< t_2 <2^{2m}$ and $m\ge 3$. Let $F$ be a bent vectorial function from $\mathrm{GF}(2^{2m})$ to $\mathrm{GF}(2^\ell)$.
Then, the shortened code $\mathcal C(F)_{\{t_1,t_2\}}$ is a binary linear code of length $2^{2m}-2$ and dimension $2m+\ell-1$, and has the weight distribution in Table \ref{table:sc2-vbent}.
\begin{table}[htbp]
\centering
\caption{The weight distribution of the code $\mathcal C(F)_{\{t_1,t_2\}}$ of Theorem \ref{thm:sc2-vbent}}
\label{table:sc2-vbent}
\begin{tabular}{|c|c|}
  \hline
  Weight & Multiplicity \\
  \hline
  $0$& $1$\\
\hline
$2^{2m-1}-2^{m-1}$ & $(2^{l}-1)2^{m-2} \left ( 2^m+2 \right )$\\
\hline
$2^{2m-1}$ & $2^{2m-1}-1$\\
\hline
$2^{2m-1}+2^{m-1}$ & $(2^{l}-1)2^{m-2} \left ( 2^m-2 \right )$\\
\hline
\end{tabular}
\end{table}

\end{theorem}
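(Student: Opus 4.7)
The plan is to apply Theorem \ref{thm:sct-code} with $t=2$ and $T=\{t_1,t_2\}$ to the code $\mathcal C(F)$, mirroring the proof of Theorem \ref{thm:sc1-vbent}. First I would verify the hypotheses: by Theorem \ref{thm-C(bent)}, $\mathcal C(F)$ has minimum distance $d=2^{2m-1}-2^{m-1}$ and dual minimum distance $d^{\perp}=4$, so $\min\{d,d^{\perp}\}=4>2=t$; and $\mathcal C(F)$ supports $2$-designs for every weight $k$ with $d\le k\le \nu$ by \cite{DMT18}, which is precisely one of the motivating examples for this paper. With the hypotheses in place, Theorem \ref{thm:sct-code} immediately yields that $\mathcal C(F)_{\{t_1,t_2\}}$ is linear of length $\nu-2=2^{2m}-2$ and dimension $(2m+\ell+1)-2=2m+\ell-1$, independently of the specific choice of $T$.

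Next I would simplify the binomial expression in Theorem \ref{thm:sct-code}. Using
$$\frac{\binom{\nu-2}{k}}{\binom{\nu-2}{k-2}}=\frac{(\nu-k)(\nu-k-1)}{k(k-1)},$$
the general formula collapses to the clean form
$$A_k(\mathcal C(F)_{\{t_1,t_2\}})=\frac{(\nu-k)(\nu-k-1)}{\nu(\nu-1)}\,A_k(\mathcal C(F)),\qquad \nu=2^{2m}.$$
For $k=2^{2m}$ (the all-ones codeword) the factor $\nu-k$ vanishes, so the length reduction is respected automatically and this weight drops out.

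Finally I would substitute the three remaining nontrivial weights $k\in\{2^{2m-1}-2^{m-1},\,2^{2m-1},\,2^{2m-1}+2^{m-1}\}$ with their multiplicities from Table \ref{table:supp-bentcode} and simplify. The key algebraic move is the factorisation $2^{2m}-1=(2^m-1)(2^m+1)$ together with $2^{2m-1}\pm 2^{m-1}=2^{m-1}(2^m\pm 1)$, which cancels one linear factor in the denominator; what remains reduces to identities such as $2(2^{2m-1}+2^{m-1}-1)=(2^m-1)(2^m+2)$ for the lower weight and its symmetric counterpart for the upper weight. The middle weight $k=2^{2m-1}$ needs no cancellation and gives the multiplicity $2^{2m-1}-1$ directly. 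The only genuine obstacle is the bookkeeping in these simplifications; the conceptual content is entirely absorbed by Theorem \ref{thm:sct-code} together with the $2$-design property of $\mathcal C(F)$ established in \cite{DMT18}.
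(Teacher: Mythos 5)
Your proposal is correct and follows exactly the paper's route: the paper's proof also consists of invoking Theorem \ref{thm:sct-code} to obtain $A_k(\mathcal C(F)_{\{t_1,t_2\}})=\frac{(2^{2m}-k)(2^{2m}-k-1)}{2^{2m}(2^{2m}-1)}A_k(\mathcal C(F))$ and then substituting the weight distribution from Theorem \ref{thm-C(bent)}. Your hypothesis verification and the algebraic identities (e.g.\ $2(2^{2m-1}+2^{m-1}-1)=(2^m-1)(2^m+2)$) are all correct and merely make explicit the bookkeeping the paper leaves to the reader.
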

\begin{proof}
By Theorem \ref{thm:sct-code},
\begin{align*}
A_k(\mathcal C (F)_{\{t_1, t_2\}}) =\frac{(2^{2m}-k)(2^{2m}-k-1)}{2^{2m}(2^{2m}-1)} A_k\left (\mathcal C (F)\right ).
\end{align*}
The desired results follow from Theorem \ref{thm-C(bent)}.
\end{proof}

Taking $T=\{t_1\}$, we have the parameters and the weight distribution of the punctured code
$\mathcal C(F)^{\{t_1\}}$ of $\mathcal C(F)$  in the following theorem.

\begin{theorem}\label{thm:pc1-vbent}
Let $t_1$ and $m$ be  integers with $0\le t_1 <2^{2m}$ and $m\ge 3$. Let $F$ be a bent vectorial function from $\mathrm{GF}(2^{2m})$ to $\mathrm{GF}(2^\ell)$.
Then, the punctured code $\mathcal C(F)^{\{t_1\}}$ is a binary linear code of length $2^{2m}-1$ and dimension $2m+\ell+1$, and has the weight distribution in Table \ref{table:pc1-vbent}.
\begin{table}[htbp]
\centering
\caption{The weight distribution of the code $\mathcal C(F)^{\{t_1\}}$ of Theorem \ref{thm:pc1-vbent}}
\label{table:pc1-vbent}
\begin{tabular}{|c|c|}
  \hline
  Weight & Multiplicity \\
  \hline
  $0$& $1$\\
\hline
$2^{2m-1}-2^{m-1}-1$ & $(2^{l}-1)\left (2^{2m-1}-2^{m-1}\right )$\\
\hline
$2^{2m-1}-2^{m-1}$ & $(2^{l}-1)\left (2^{2m-1}+2^{m-1}\right )$\\
\hline
$2^{2m-1}-1$ & $2^{2m}-1$\\
\hline
$2^{2m-1}$ & $2^{2m}-1$\\
\hline
$2^{2m-1}+2^{m-1}-1$ & $(2^{l}-1)\left (2^{2m-1}+2^{m-1}\right )$\\
\hline
$2^{2m-1}+2^{m-1}$ & $(2^{l}-1)\left (2^{2m-1}-2^{m-1}\right )$\\
\hline
$2^m-1$ & $1$\\
\hline
\end{tabular}
\end{table}

\end{theorem}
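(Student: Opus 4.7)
The plan is to reduce this to a direct application of the formula in Theorem \ref{thm:pct-code}, specialised to $t = 1$. The hypotheses line up cleanly: by Theorem \ref{thm-C(bent)}, the dual code $\mathcal C(F)^\perp$ has minimum distance $d^\perp = 4$, so $t = 1 < d^\perp$, and by \cite{DMT18} the code $\mathcal C(F)$ supports $2$-designs, hence in particular $1$-designs, for every weight $i \in [d,\nu]$. Therefore Lemma \ref{lem:C-S-P}(2) guarantees that $\mathcal C(F)^{\{t_1\}}$ has length $2^{2m} - 1$ and dimension $2m+\ell+1$, and Theorem \ref{thm:pct-code} applies and in particular shows that the weight distribution is independent of the choice of $t_1$.

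Next I would specialise the general formula of Theorem \ref{thm:pct-code} to $t=1$. Writing $\nu = 2^{2m}$ and simplifying the binomial coefficients yields the clean identity
\begin{equation*}
A_k\bigl(\mathcal C(F)^{\{t_1\}}\bigr) \;=\; \frac{\nu-k}{\nu}\,A_k\bigl(\mathcal C(F)\bigr) \;+\; \frac{k+1}{\nu}\,A_{k+1}\bigl(\mathcal C(F)\bigr),
\end{equation*}
which is exactly the pair of one-term contributions indicated in the proof of Theorem \ref{thm:pc1-bent}. So each nonzero weight $w \in \{2^{2m-1} - 2^{m-1},\, 2^{2m-1},\, 2^{2m-1} + 2^{m-1},\, 2^{2m}\}$ of $\mathcal C(F)$ contributes to the multiplicities at two levels of $\mathcal C(F)^{\{t_1\}}$: at level $w$ with coefficient $(\nu - w)/\nu$, and at level $w-1$ with coefficient $w/\nu$.

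I would then tabulate these contributions using the weights and multiplicities from Table \ref{table:supp-bentcode}. For instance, the weight $w = 2^{2m-1} + 2^{m-1}$ with multiplicity $(2^\ell-1)2^{2m}$ yields multiplicity $(2^\ell-1)(2^{2m-1} - 2^{m-1})$ at level $2^{2m-1} + 2^{m-1}$ and multiplicity $(2^\ell-1)(2^{2m-1} + 2^{m-1})$ at level $2^{2m-1} + 2^{m-1} - 1$; the weight $w = 2^{2m-1}$ with multiplicity $2(2^{2m}-1)$ yields multiplicity $2^{2m}-1$ at each of the levels $2^{2m-1}$ and $2^{2m-1}-1$; and $w = 2^{2m}$ with multiplicity $1$ contributes the single codeword of weight $2^{2m}-1$. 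The remaining rows of Table \ref{table:pc1-vbent} are obtained analogously.

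There is no real obstacle: the content is entirely subsumed by Theorem \ref{thm:pct-code} once the applicability of that theorem is checked, and the remainder is bookkeeping. The only thing I would take mild care with is confirming that the seven weight levels listed in Table \ref{table:pc1-vbent} are genuinely distinct (so that contributions at different levels do not merge), which holds for $m \geq 3$ since $2^{m-1} \geq 4 > 1$ forces gaps of size at least two between consecutive weights of $\mathcal C(F)$, and that no further nonzero levels arise, which is immediate because every nonzero level of $\mathcal C(F)^{\{t_1\}}$ must equal $w$ or $w-1$ for some nonzero weight $w$ of $\mathcal C(F)$.
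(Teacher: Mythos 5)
Your proposal is correct and follows essentially the same route as the paper: both verify the hypotheses of Theorem \ref{thm:pct-code} (using $d^\perp=4$ and the fact that $\mathcal C(F)$ supports $2$-designs, hence $1$-designs), specialise its formula to $t=1$ so that each weight $w$ of $\mathcal C(F)$ contributes $\frac{\nu-w}{\nu}A_w$ at level $w$ and $\frac{w}{\nu}A_w$ at level $w-1$, and then read off Table \ref{table:supp-bentcode}. Your added check that the resulting weight levels are distinct (and your reading of the all-one contribution as weight $2^{2m}-1$, which also flags the typo ``$2^m-1$'' in the table) is consistent with the paper's intent.
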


\begin{proof}
By Theorem \ref{thm:pct-code}, for $k\in \left \{2^{2m-1}-2^{m-1},  2^{2m-1}, 2^{2m-1}+2^{m-1} \right \}$, one has
\begin{align*}
A_k(\mathcal C (F)^{\{t_1\}}) =\frac{2^{2m}-k}{2^{2m}} A_k\left (\mathcal C (F)\right ),
\end{align*}
and
\begin{align*}
A_{k-1}(\mathcal C (F)^{\{t_1\}}) =\frac{k}{2^{2m}} A_k\left (\mathcal C (F)\right ).
\end{align*}
The desired results follow from Theorem \ref{thm-C(bent)}.
\end{proof}

Taking $T=\{t_1,t_2\}$, we have the parameters and the weight distribution of the punctured code
$\mathcal C(F)^{\{t_1,t_2\}}$ of $\mathcal C(F)$  in the following theorem.

\begin{theorem}\label{thm:pc2-vbent}
Let $t_1$, $t_2$ and $m$ be  integers with $0\le t_1< t_2 <2^{2m}$ and $m\ge 3$. Let $F$ be a bent vectorial function from $\mathrm{GF}(2^{2m})$ to $\mathrm{GF}(2^\ell)$.
Then, the punctured code $\mathcal C(F)^{\{t_1,t_2\}}$ is a binary linear code of length $2^{2m}-2$ and dimension $2m+\ell+1$, and has the weight distribution in Table \ref{table:pc2-vbent}.
\begin{table}[htbp]
\centering
\caption{The weight distribution of the code $\mathcal C(F)^{\{t_1,t_2\}}$ of Theorem \ref{thm:pc2-vbent}}
\label{table:pc2-vbent}
\begin{tabular}{|c|c|}
  \hline
  Weight & Multiplicity \\
  \hline
  $0$& $1$\\
\hline
$2^{2m-1}-2^{m-1}-2$ & $2^{m-2}(2^{l}-1)(2^{m}-2)$\\
\hline
$2^{2m-1}-2^{m-1}-1$ & $2^{2m-1}(2^{l}-1)$\\
\hline
$2^{2m-1}-2^{m-1}$ & $(2^{l}-1)2^{m-2} \left ( 2^m+2 \right )$\\
\hline
$2^{2m-1}-2$ & $2^{2m-1}-1$\\
\hline
$2^{2m-1}-1$ & $2^{2m}$\\
\hline
$2^{2m-1}$ & $2^{2m-1}-1$\\
\hline
$2^{2m-1}+2^{m-1}-2$ & $2^{m-2}(2^{l}-1)(2^{m}+2)$\\
\hline
$2^{2m-1}+2^{m-1}-1$ & $2^{2m-1}(2^{l}-1)$\\
\hline
$2^{2m-1}+2^{m-1}$ & $(2^{l}-1)2^{m-2} \left ( 2^m-2 \right )$\\
\hline
$2^m-2$ & $1$\\
\hline
\end{tabular}
\end{table}

\end{theorem}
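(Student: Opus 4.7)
The approach mirrors that of Theorem \ref{thm:pc1-vbent}: specialize Theorem \ref{thm:pct-code} to $t=2$ and feed in the weight distribution of $\mathcal C(F)$ from Theorem \ref{thm-C(bent)} together with the already-computed shortened distribution from Theorem \ref{thm:sc2-vbent}. To apply Theorem \ref{thm:pct-code} I first need the $2$-design condition to hold for all supports of nonzero codewords of $\mathcal C(F)$; this is precisely what Theorem \ref{thm-C(bent)} (together with the Assmus--Mattson theorem, since $d^\perp=4$) delivers. Since $t=2<d^\perp=4$, Lemma \ref{lem:C-S-P} immediately gives length $2^{2m}-2$ and dimension $2m+\ell+1$.

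The main calculation is extracting the three pieces of Theorem \ref{thm:pct-code} for $t=2$. Writing $\nu = 2^{2m}$, each weight $k$ of $\mathcal C(F)$ contributes to the three weights $k, k-1, k-2$ of $\mathcal C(F)^{\{t_1,t_2\}}$ via the summands $i=0,1,2$. The $i=0$ coefficient $\frac{\binom{k}{2}\binom{\nu-2}{k}}{\binom{\nu}{2}\binom{\nu-2}{k-2}}$ is exactly the multiplier of Theorem \ref{thm:sct-code}, so that contribution equals $A_k(\mathcal C(F)_{\{t_1,t_2\}})$, already listed in Table \ref{table:sc2-vbent}. A short simplification of the $i=1$ and $i=2$ coefficients (using $\binom{\nu-2}{k-1}/\binom{\nu-2}{k-2}=(\nu-k)/(k-1)$) yields
\begin{align*}
A_{k-1}(\mathcal C(F)^{\{t_1,t_2\}}) &= \frac{2k(\nu-k)}{\nu(\nu-1)} A_k(\mathcal C(F)),\\
A_{k-2}(\mathcal C(F)^{\{t_1,t_2\}}) &= \frac{k(k-1)}{\nu(\nu-1)} A_k(\mathcal C(F)),
\end{align*}
which are exactly the identities recorded at the end of the proof.

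Now I substitute $k\in\{2^{2m-1}-2^{m-1},\,2^{2m-1},\,2^{2m-1}+2^{m-1},\,2^{2m}\}$ and the multiplicities from Table \ref{table:supp-bentcode}. For $k=2^{2m}$ (the all-ones codeword, multiplicity $1$), the $i=0,1$ contributions vanish and the $i=2$ contribution equals $1$ at weight $2^{2m}-2$, producing the last row of Table \ref{table:pc2-vbent}. For the three ``middle'' weights, I need the triples $\{k,k-1,k-2\}$ not to collide between different values of $k$; since $m\ge 3$ the consecutive middle weights of $\mathcal C(F)$ differ by $2^{m-1}\ge 4$, so there is no overlap, and each value of $k$ fills in its own three rows of the table independently.

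The remaining work is purely arithmetic substitution of the values from Table \ref{table:supp-bentcode} into the three closed-form coefficients and verifying they match the entries of Table \ref{table:pc2-vbent}; Theorem \ref{thm:sc2-vbent} supplies the $i=0$ rows for free. The only place that calls for care is the no-collision check above and making sure that the all-ones contribution lands on the single weight $\nu-2$ rather than interfering with the bent-weight rows—both conditions are guaranteed by $m\ge 3$, so there is no genuine obstacle beyond clean bookkeeping.
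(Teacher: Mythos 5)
Your overall route is exactly the paper's: apply Theorem \ref{thm:pct-code} with $t=2$, which yields the three identities
\begin{align*}
A_k(\mathcal C (F)^{\{t_1,t_2\}}) &= A_k\bigl( \mathcal C (F)_{\{t_1, t_2\}}\bigr),\\
A_{k-1}(\mathcal C (F)^{\{t_1,t_2\}}) &= \frac{2k(2^{2m}-k)}{2^{2m}(2^{2m}-1)}A_k\bigl(\mathcal C (F)\bigr),\\
A_{k-2}(\mathcal C (F)^{\{t_1,t_2\}}) &= \frac{k(k-1)}{2^{2m}(2^{2m}-1)}A_k\bigl(\mathcal C (F)\bigr),
\end{align*}
and then substitute the data of Theorem \ref{thm-C(bent)} and Theorem \ref{thm:sc2-vbent}; your handling of the all-ones codeword and the no-collision check are fine (and the latter is a useful sanity check the paper leaves implicit).

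However, there is one genuinely wrong step: you justify the hypothesis of Theorem \ref{thm:pct-code} --- that $\bigl(\mathcal P(\mathcal C(F)), \mathcal B_i(\mathcal C(F))\bigr)$ is a $2$-design for all $i$ --- by invoking the Assmus--Mattson theorem ``since $d^\perp=4$.'' The Assmus--Mattson theorem does \emph{not} apply here: with $t=2$ one needs at most $d^\perp-t=2$ nonzero weights of $\mathcal C(F)$ in $\{1,\dots,2^{2m}-2\}$, but Table \ref{table:supp-bentcode} shows there are three ($2^{2m-1}$ and $2^{2m-1}\pm 2^{m-1}$). Indeed, the fact that $\mathcal C(F)$ supports $2$-designs while \emph{failing} the Assmus--Mattson hypotheses is the central motivating phenomenon of the whole paper. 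The $2$-design property must instead be imported from \cite{DMT18} (the source of Theorem \ref{thm-C(bent)}), or re-derived as in Example \ref{exam-spec1} via the generalized Assmus--Mattson theorem together with the extra input that $\mathcal B_{2^{2m-1}}(\mathcal C(F))$ coincides with a block set coming from the first-order Reed--Muller code. With that citation corrected, the rest of your argument is sound and coincides with the paper's proof.
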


\begin{proof}
By Theorem \ref{thm:pct-code}, for $k\in \left \{   2^{2m-1}-2^{m-1}, 2^{2m-1}, 2^{2m-1}+2^{m-1}  \right \}$, one has
\begin{align*}
A_k(\mathcal C (F)^{\{t_1,t_2\}}) =A_k\left ( \mathcal C (F)_{\{t_1, t_2\}}  \right),
\end{align*}
\begin{align*}
A_{k-1}(\mathcal C (F)^{\{t_1,t_2\}}) =\frac{2k(2^{2m}-k)}{2^{2m}(2^{2m}-1)}   A_k\left (\mathcal C (F)\right )
\end{align*}
and
\begin{align*}
A_{k-2}(\mathcal C (F)^{\{t_1,t_2\}}) =\frac{k(k-1)}{2^{2m}(2^{2m}-1)}   A_k\left (\mathcal C (F)\right ).
\end{align*}
The desired results follow from Theorem \ref{thm-C(bent)} and Theorem \ref{thm:sc2-vbent}.
\end{proof}

\begin{example}
Let $\mathrm{GF}(2^6)=\mathrm{GF}(2)[u]/\left(u^6+u^4+u^3+u+1\right)$ and $\alpha\in \mathrm{GF}(2^6)$ such that $\alpha^6+\alpha^4+\alpha^3+\alpha+1=0$.
Then $F(x)=\mathrm{Tr}_{2^6/2^3}(\alpha  x^3)$ is a bent vectorial function from $ \mathrm{GF}(2^6)$ to $ \mathrm{GF}(2^3)$. The code
$\mathcal{C}(F)$ is a $[64, 10, 28]$ linear code with  weight enumerator $1+448z^{28}+ 126z^{32}+448z^{36}+z^{64}$.

Let $t_1$ be an integer with $0\le t_1\le 63$. Then the shortened code $\mathcal C(F)_{\{t_1\}}$  has parameters $[63, 9, 28]$  and weight enumerator
$1+252z^{28}+ 63z^{32}+196z^{36}$.
The punctured code $\mathcal C(F)^{\{t_1\}}$  has parameters $[63, 10, 27]$  and weight enumerator
$1+196z^{27}+ 252z^{28}+63z^{31}+63z^{32}+252z^{35}+196z^{36}
+z^{63}$. The code
$\mathcal C(F)_{\{t_1\}}$ is optimal with respect to  a one-step Griesmer bound, and $\mathcal C(F)^{\{t_1\}}$ has the same parameters
as the best binary linear code known in the database maintained by Markus Grassl.

Let $t_1$ and $t_2$ be two integers with $0\le t_1<t_2\le 63$. Then the shortened code $\mathcal C(F)_{\{t_1, t_2\}}$  has parameters $[62, 8, 28]$  and weight enumerator
$1+140z^{28}+ 31z^{32}+84z^{36}$.
The punctured code $\mathcal C(F)^{\{t_1, t_2\}}$  has parameters $[62, 10, 26]$  and weight enumerator
$1+84z^{26}+224z^{27}+ 140z^{28}+31z^{30}+64z^{31}+31z^{32}
+140z^{34}+224z^{35}+84z^{36}+z^{62}$. The code
$\mathcal C(F)_{\{t_1,t_2\}}$ is optimal with respect to  a one-step Griesmer bound, and $\mathcal C(F)^{\{t_1,t_2\}}$ has the same parameters
as the best binary linear code known in the database maintained by Markus Grassl.

\end{example}

\section{Characterizations of linear codes supporting $t$-designs via shortened and punctured  codes}\label{sec:chde}

In this section, we shall give a characterization of codes supporting $t$-designs in terms of their shortened and punctured codes.
Let $\mathcal P$ be a set of $\nu$ elements  and $\mathcal B$   a \textbf{multiset} of  $k$-subsets of $\mathcal P$, where  $1\le k \le \nu$.
Let $\overline {\mathcal B}=\{\{ \mathcal P \setminus B: B \in \mathcal B \}\}$.

\begin{lemma}\label{lem:bigblock}
Let $(\mathcal P, \mathcal B)$ be a $(\nu-k)$-$(\nu, k, \lambda)$ design and $t$  an integer with $1\le \nu -k \le t \le k$.
Then  $(\mathcal P, \mathcal B)$ is also a $t$-$\left (\nu, k, \binom{\nu-t}{\nu-k} \lambda / \binom{\nu-t}{k-t} \right )$ design.
\end{lemma}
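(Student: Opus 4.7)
The plan is to reduce the lemma to the structural observation that a $(\nu-k)$-$(\nu,k,\lambda)$ design with $\nu-k\leq k$ is forced to be a constant-multiplicity replication of the complete $k$-design on $\mathcal P$: every $k$-subset of $\mathcal P$ must appear as a block with the same multiplicity $\mu_0=\lambda/\binom{k}{\nu-k}$. Once this is in hand, the $t$-design property and the formula for $\lambda_T$ at every $t\in[\nu-k,k]$ follow immediately from counting $k$-supersets of a fixed $t$-subset.

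To carry this out, I would let $\mu(B)\geq 0$ denote the multiplicity of the $k$-subset $B$ as a block of $\mathcal B$. The hypothesis then reads
\begin{align*}
\sum_{B\supseteq S,\;|B|=k}\mu(B)\;=\;\lambda\qquad\text{for every $(\nu-k)$-subset }S\subseteq\mathcal P.
\end{align*}
Because the index $\nu-k$ and the block size $k$ sum to $\nu$, the coefficient matrix is the \emph{square} set-inclusion matrix $W_{\nu-k,k}$ of order $\binom{\nu}{k}$. Since $W_{\nu-k,k}\mathbf{1}=\binom{k}{\nu-k}\mathbf{1}$, the constant assignment $\mu\equiv\mu_0$ is one solution, and the invertibility of $W_{\nu-k,k}$ (discussed below) forces uniqueness. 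For any $t$-subset $T\subseteq\mathcal P$ with $\nu-k\leq t\leq k$, the blocks containing $T$ are then exactly the $\binom{\nu-t}{k-t}=\binom{\nu-t}{\nu-k}$ many $k$-supersets of $T$, each appearing with multiplicity $\mu_0$, yielding
\begin{align*}
\lambda_T\;=\;\mu_0\binom{\nu-t}{\nu-k}\;=\;\frac{\binom{\nu-t}{\nu-k}}{\binom{k}{\nu-k}}\,\lambda,
\end{align*}
which is independent of $T$ and delivers the claimed $t$-design parameter.

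The main obstacle is justifying that $W_{\nu-k,k}$ is invertible over $\mathbb Q$. This is a classical fact about set-inclusion matrices: for $s\leq t$ with $s+t\leq\nu$, the matrix $W_{s,t}$ has full row rank $\binom{\nu}{s}$ (e.g.\ Gottlieb's theorem, or Wilson's rank formula). In our setting $s=\nu-k$, $t=k$, so we are exactly at the boundary $s+t=\nu$ where $W_{\nu-k,k}$ is square and full row rank coincides with invertibility; the side condition $s\leq t$ is precisely $\nu-k\leq k$, which is provided by the hypothesis $\nu-k\leq t\leq k$. For a self-contained argument one can instead diagonalize $W_{\nu-k,k}$ in the Johnson association scheme, where it acts as an explicit positive scalar on each irreducible component in this range. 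With invertibility secured, the remainder of the proof is the mechanical count above.
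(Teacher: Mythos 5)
Your structural claim is right and your counting gives the correct value of $\lambda_T$, but the final assertion of your computation --- that $\binom{\nu-t}{\nu-k}\lambda/\binom{k}{\nu-k}$ ``delivers the claimed $t$-design parameter'' --- is exactly the place where care was needed, and it is not true as written. The lemma asserts the parameter $\binom{\nu-t}{\nu-k}\lambda/\binom{\nu-t}{k-t}$, and since $\binom{\nu-t}{k-t}=\binom{\nu-t}{\nu-k}$ this collapses to $\lambda$ itself; that cannot be right, since the complete $3$-subset design on $4$ points is a $1$-$(4,3,3)$ design and a $2$-$(4,3,2)$ design, whereas the printed formula would give $\lambda_2=3$. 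Your value $\binom{\nu-t}{\nu-k}\lambda/\binom{k}{\nu-k}$ is the correct one (it gives $\binom{2}{1}\cdot 3/\binom{3}{1}=2$ in that example). The paper's own proof contains the same slip: in applying Theorem~\ref{thm:t to 1} to evaluate $\lambda^{T'}$ it writes the denominator $\binom{\nu-t}{k-t}$ with the new strength $t$ where the design's actual strength $\nu-k$ belongs, i.e.\ the denominator should be $\binom{\nu-(\nu-k)}{k-(\nu-k)}=\binom{k}{\nu-k}$. So you have in effect rederived the correct constant, but by claiming agreement with the stated parameter without checking it you papered over a genuine discrepancy; a careful write-up must either correct the constant in the lemma or observe that only the $t$-design property (not the printed $\lambda$) is used downstream, e.g.\ in the proof of Theorem~\ref{thm:tdesign-wtcode}.

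On the route itself: your reduction to ``every $k$-subset occurs with the same multiplicity $\mu_0=\lambda/\binom{k}{\nu-k}$'' is exactly the content of the paper's argument, but you reach it through the invertibility of the square inclusion matrix $W_{\nu-k,k}$ (Gottlieb/Wilson), which is heavier machinery than necessary. The paper instead observes that a block of size $k$ is disjoint from a $(\nu-k)$-set $T'$ precisely when it equals $\mathcal P\setminus T'$, so the multiplicity of the $k$-set $\mathcal P\setminus T'$ is the intersection number $\lambda^{T'}$ of the given $(\nu-k)$-design; Theorem~\ref{thm:t to 1} with $t_0=\nu-k$ and $t_1=0$ already forces this to be the constant $\lambda/\binom{k}{\nu-k}$, with no rank argument. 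Both proofs then finish with the same count of the $\binom{\nu-t}{k-t}$ $k$-supersets of a fixed $t$-set, so apart from the heavier justification of constancy and the unexamined final identification, your argument matches the paper's.
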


\begin{proof}
Let $T$ be any $t$-subset of $\mathcal P$. It is observed that
\begin{align*}
\{\{B\in \mathcal B: T \subseteq B\}\}=\cup_{T'\subseteq \mathcal P\setminus T, \# T'=\nu-k} \{\{B\in \mathcal B: B \cup T'= \mathcal P \}\}.
\end{align*}
Then
\begin{align*}
\lambda_{T}= \sum_{T'\subseteq \mathcal P\setminus T, \# T'=\nu-k} \lambda^{T'},
\end{align*}
where $\lambda_{T}$ and $\lambda^{T'}$ are the intersection numbers of the design $(\mathcal P, \mathcal B)$.
By Theorem \ref{thm:t to 1}, one gets
\begin{align*}
\lambda_{T}= &\binom{\nu-t}{\nu-k} \lambda^{T'}\\
=& \binom{\nu-t}{\nu-k}   \frac{\binom{\nu-(\nu-k)}{k}}{\binom{\nu-t}{k-t}} \lambda\\
=&  \frac{\binom{\nu-t}{\nu-k} }{\binom{\nu-t}{k-t}}  \lambda.
\end{align*}
It completes the proof.
\end{proof}

In the case of simple designs, Lemma \ref{lem:bigblock} was known in the literature.
The conclusion of Lemma \ref{lem:bigblock} implies that a $(\nu-k)$-$(\nu, k, \lambda)$
design must be a trivial design, as every $k$-subset of the point set is a block of the
design.

\begin{lemma}\label{lem:codeTOdesign}
Let $\mathbb D=(\mathcal P, \mathcal B)$ be a $t$-$(\nu, k, \lambda)$ design with $t\le k \le \nu-t$. Then $\overline {\mathbb D}=(\mathcal P, \overline {\mathcal B})$
is a $t$-$(\nu, \nu-k, \overline {\lambda})$ design, where $\overline {\lambda}=\frac{\binom{\nu-t}{k}}{\binom{\nu-t}{k-t}}  \lambda$.
\end{lemma}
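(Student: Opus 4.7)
The plan is to reduce the statement to a direct application of Theorem \ref{thm:t to 1}, using the simple observation that complementation converts the event ``block contains $T$'' into the event ``block is disjoint from $T$''.

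More precisely, first I would fix an arbitrary $t$-subset $T\subseteq \mathcal P$ and count the multiplicity of $T$ inside blocks of $\overline{\mathcal B}$. Since $\overline B =\mathcal P\setminus B$, the equivalence
\[
T\subseteq \overline B \iff T\cap B=\emptyset
\]
holds for every $B\in \mathcal B$. Consequently the number of blocks of $\overline{\mathbb D}$ containing $T$ equals the intersection number $\lambda^{T}$ of the original design $\mathbb D$, counted with multiplicity.

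Second, I would invoke Theorem \ref{thm:t to 1} with $t_{0}=t$ and $t_{1}=0$ (which is permitted because $t_{0}+t_{1}=t\le t$). That theorem gives
\[
\lambda^{T} \;=\; \lambda(t,0) \;=\; \frac{\binom{\nu-t}{k}}{\binom{\nu-t}{k-t}}\,\lambda \;=\; \overline\lambda,
\]
independent of the specific choice of $T$. Thus every $t$-subset of $\mathcal P$ is contained in exactly $\overline\lambda$ blocks of $\overline{\mathcal B}$, which is the defining property of a $t$-$(\nu,\nu-k,\overline\lambda)$ design.

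The only technical point to check is that $\overline{\mathbb D}$ is a legitimate combinatorial design, i.e.\ that the block size $\nu-k$ is at least $t$; but this is exactly the hypothesis $k\le \nu-t$. The numerical identity for $\overline\lambda$ then follows at once, and no further calculation is needed. I do not anticipate any real obstacle: everything reduces to an intersection-number identity that has already been isolated in Theorem \ref{thm:t to 1}.
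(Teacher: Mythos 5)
Your proof is correct and follows exactly the route the paper takes: the paper's own proof is the one-line statement that the result follows from Theorem \ref{thm:t to 1}, and your argument simply fills in the (correct) details of that application, namely that $T\subseteq \mathcal P\setminus B$ iff $T\cap B=\emptyset$ and that $\lambda^{T}=\lambda(t,0)=\binom{\nu-t}{k}\lambda/\binom{\nu-t}{k-t}$.
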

\begin{proof}
The desired results follow from Theorem \ref{thm:t to 1}.
\end{proof}

\begin{lemma}\label{lem:wtdTodesign}
Let $\mathcal C$ be a $[\nu, m, d]$ linear code over $\mathrm{GF}(q)$.
 Let $k$ and  $t$ be two positive integers with  $t\le  k \le \nu-t$.
 Suppose that
$ A_k(\mathcal C_T) $  is independent of the specific choice of the elements
in $T$, where $T$ is  any  set of $t$ coordinate positions in $\mathcal  C$. Let
$\overline{\mathcal B}_k(\mathcal C)=\frac{1}{q-1}\{\{\mathcal P(\mathcal C) \setminus \mathrm{Supp}(\mathbf c): \mathbf c \in \mathcal C, \mathrm{wt}(\mathbf c)=k\}\}$.
Then $\left ( \mathcal P(\mathcal C) , \overline {\mathcal B_k}(\mathcal C) \right )$ is a $t$-$(\nu, \nu-k, \overline \lambda)$ design, where $\overline \lambda=A_k(\mathcal C_T)/(q-1)$.
Further, $\left ( \mathcal P(\mathcal C) , \mathcal{B}_k(\mathcal C) \right )$ is a $t$-$(\nu, k,  \lambda)$ design, where
$$
\lambda=\frac{\binom{\nu-t}{\nu-k}A_k(\mathcal{C}_k)}{\binom{\nu-t}{\nu-t-k} (q-1)}.
$$
\end{lemma}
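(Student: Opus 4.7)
The plan is to interpret $\overline{\mathcal{B}}_k(\mathcal{C})$ directly as a block design and then invoke the complementary-design lemma already proved. The key observation is the identification of codewords of weight $k$ that vanish on a fixed set $T$ with weight-$k$ codewords of the shortened code $\mathcal{C}_T$.

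First, I would fix a $t$-subset $T \subseteq \mathcal{P}(\mathcal{C})$ and count how many blocks of $\overline{\mathcal{B}}_k(\mathcal{C})$ contain $T$. A block $\mathcal{P}(\mathcal{C}) \setminus \mathrm{Supp}(\mathbf{c})$ contains $T$ if and only if $\mathrm{Supp}(\mathbf{c}) \cap T = \emptyset$, which is to say $\mathbf{c} \in \mathcal{C}(T)$. Therefore, recalling the factor $1/(q-1)$ in the definition of $\overline{\mathcal{B}}_k(\mathcal{C})$, the number of blocks containing $T$ equals $\frac{1}{q-1} \#\{\mathbf{c} \in \mathcal{C}(T) : \mathrm{wt}(\mathbf{c}) = k\}$.

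Next, I would use the weight-preserving bijection $\pi_T : \mathcal{C}(T) \longrightarrow \mathcal{C}_T$ from the proof of Theorem \ref{thm:sct-code}: since codewords in $\mathcal{C}(T)$ are zero on every coordinate in $T$, puncturing on $T$ does not change the Hamming weight. Consequently
\[
\#\{\mathbf{c} \in \mathcal{C}(T) : \mathrm{wt}(\mathbf{c}) = k\} \;=\; A_k(\mathcal{C}_T).
\]
By the hypothesis that $A_k(\mathcal{C}_T)$ is independent of the choice of $T$, the number of blocks of $\overline{\mathcal{B}}_k(\mathcal{C})$ containing any fixed $t$-subset $T$ is the constant $\overline{\lambda} = A_k(\mathcal{C}_T)/(q-1)$. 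This establishes that $(\mathcal{P}(\mathcal{C}), \overline{\mathcal{B}}_k(\mathcal{C}))$ is a $t$-$(\nu, \nu-k, \overline{\lambda})$ design.

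For the second assertion, I would apply Lemma \ref{lem:codeTOdesign} to the design $\overline{\mathbb{D}} := (\mathcal{P}(\mathcal{C}), \overline{\mathcal{B}}_k(\mathcal{C}))$, whose block size $\nu - k$ lies in $[t, \nu - t]$ because $t \le k \le \nu - t$. Taking complements of blocks sends $\overline{\mathbb{D}}$ to $(\mathcal{P}(\mathcal{C}), \mathcal{B}_k(\mathcal{C}))$, so that lemma yields a $t$-$(\nu, k, \lambda)$ design with
\[
\lambda \;=\; \frac{\binom{\nu - t}{\nu - k}}{\binom{\nu - t}{(\nu-k)-t}}\, \overline{\lambda} \;=\; \frac{\binom{\nu-t}{\nu-k}\, A_k(\mathcal{C}_T)}{\binom{\nu-t}{\nu-t-k}\,(q-1)},
\]
which is exactly the stated formula. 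There is no real obstacle here; the only thing to watch is that the bijection $\pi_T$ is weight-preserving (so the constancy hypothesis on $A_k(\mathcal{C}_T)$ transfers to the block count on the nose) and that the range $t \le k \le \nu - t$ is precisely what is needed to invoke Lemma \ref{lem:codeTOdesign} on the complementary design.
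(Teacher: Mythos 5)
Your proposal is correct and follows essentially the same route as the paper: both count the blocks of $\overline{\mathcal B}_k(\mathcal C)$ containing a fixed $t$-set $T$ by identifying codewords of weight $k$ vanishing on $T$ with weight-$k$ codewords of the shortened code $\mathcal C_T$, and both obtain the second assertion by complementation via Lemma \ref{lem:codeTOdesign}. Your explicit mention of the weight-preserving bijection $\pi_T$ only makes precise a step the paper leaves implicit.
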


\begin{proof}
Let $T=\{i_1, \ldots, i_t\}$ be a subset of $\mathcal P(\mathcal C)$. Note that
$$
T \subseteq \mathcal P(\mathcal C) \setminus \mathrm{Supp}(\mathbf c) \mbox{ and }
\wt(\mathbf c)=k
$$
if and only if
$$
T \cap \mathrm{Supp}(\mathbf c) = \emptyset \mbox{ and }
\wt(\mathbf c)=k
$$
if and only if
$$
\mathbf{c} \in \mathcal{C}_T \mbox{ and }
\wt(\mathbf c)=k.
$$
By assumption, $T$ is included in $A_k(\mathcal C_T)/(q-1)$ blocks of $\overline {\mathcal B_k}(\mathcal C)$, which is independent of the choices of the elements in $T$. This completes the
proof of the first conclusion. The conclusion of the second part then follows from Lemma
\ref{lem:codeTOdesign}.
\end{proof}

The following theorem gives a characterization of codes supporting $t$-designs via the weight distributions of their shortened and punctured codes.

\begin{theorem}\label{thm:tdesign-wtcode}
Let $\mathcal C$ be a $[\nu, m, d]$ linear code over $\mathrm{GF}(q)$ and  $d^{\perp}$  the minimum distance of $\mathcal  C^{\perp}$.
 Let  $t$ be a positive integer with  $0< t <\min \{d, d^{\perp}\}$. Then the following statements are equivalent.

 (1) $\left ( \mathcal P(\mathcal C) , \mathcal B_k(\mathcal C) \right )$ is a $t$-design for any $0 \le k \le \nu$.

 (2)  $\left ( \mathcal P(\mathcal C^{\perp}) , \mathcal B_k(\mathcal C^{\perp}) \right )$ is a $t$-design for any $0\le k \le \nu$.

 (3) For any $1 \le t' \le t$, the weight distribution $\left ( A_k(\mathcal C_T) \right )_{k=0}^{\nu-t'}$ of the shortened code $\mathcal C_T$
 is independent of the specific choice of the elements in $T$,
 where $T$ is  any  set of $t'$ coordinate positions in $\mathcal  C$.

  (4) For any $1 \le t' \le t$, the weight distribution $\left ( A_k(\mathcal C^T) \right )_{k=0}^{\nu-t'}$ of the punctured code $\mathcal C^T$
  is independent of the specific choice of the elements in $T$,
 where $T$ is  any  set of $t'$ coordinate positions in $\mathcal  C$.
\end{theorem}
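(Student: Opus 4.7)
The plan is to establish the equivalence by proving directly that $(1)\Rightarrow(3)$, $(1)\Rightarrow(4)$ and $(3)\Rightarrow(1)$, and then closing the remaining implications by applying these three to $\mathcal{C}^\perp$ in combination with the duality $(\mathcal{C}_T)^\perp=(\mathcal{C}^\perp)^T$ supplied by Lemma~\ref{lem:C-S-P}.

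The forward implications $(1)\Rightarrow(3)$ and $(1)\Rightarrow(4)$ are essentially immediate: a $t$-design is also a $t'$-design for every $1\le t'\le t$, so for any coordinate subset $T$ with $|T|=t'\le t$ the hypotheses of Theorems~\ref{thm:sct-code} and~\ref{thm:pct-code} are met with $t'$ in place of $t$, and the explicit formulas they provide express $A_k(\mathcal{C}_T)$ and $A_k(\mathcal{C}^T)$ purely in terms of $|T|$ and the weight distribution of $\mathcal{C}$. For $(3)\Rightarrow(1)$ the key device is inclusion-exclusion. For each weight $k$ and subset $T$ with $|T|\le t$, set
\[
g_k(T)=\#\{\mathbf{c}\in\mathcal{C} : \wt(\mathbf{c})=k,\ T\subseteq \mathrm{Supp}(\mathbf{c})\},\qquad h_k(T)=A_k(\mathcal{C}_T).
\]
Expanding $[T\subseteq\mathrm{Supp}(\mathbf{c})]=\prod_{i\in T}(1-[i\notin\mathrm{Supp}(\mathbf{c})])$ and summing over weight-$k$ codewords yields the Möbius relation $g_k(T)=\sum_{S\subseteq T}(-1)^{|S|}h_k(S)$. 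Assuming (3), every $h_k(S)$ depends only on $|S|$ for $|S|\le t$, hence $g_k(T)=\sum_{s=0}^{|T|}(-1)^s\binom{|T|}{s}h_k(s)$ depends only on $|T|$. Taking $|T|=t$, the quantity $g_k(T)/(q-1)$ is precisely the number of blocks of $\mathcal{B}_k(\mathcal{C})$ containing $T$, so its constancy is exactly the $t$-design property. Boundary weights are harmless: for $k<d$ all relevant counts vanish, and for $k>\nu-t$ the Möbius sum automatically truncates because $h_k(S)=0$ whenever $|S|>\nu-k$. As a sanity check, in the generic range $t\le k\le\nu-t$ one can reach the same conclusion via Lemmas~\ref{lem:wtdTodesign} and~\ref{lem:codeTOdesign}, while Lemma~\ref{lem:bigblock} handles $k>\nu-t$ independently.

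To close the remaining implications I would exploit MacWilliams duality. By Lemma~\ref{lem:C-S-P} we have $(\mathcal{C}_T)^\perp=(\mathcal{C}^\perp)^T$, and since the weight distribution of any linear code is uniquely determined by that of its dual (the MacWilliams identity, which is also a consequence of the Pless power moments in Theorem~\ref{eq:wt-wtd}), condition~(3) for $\mathcal{C}$ is equivalent to condition~(4) for $\mathcal{C}^\perp$, and symmetrically (4) for $\mathcal{C}$ is equivalent to (3) for $\mathcal{C}^\perp$. Since the hypothesis $0<t<\min\{d,d^\perp\}$ is symmetric in $\mathcal{C}$ and $\mathcal{C}^\perp$, the already-established $(1)\Leftrightarrow(3)$ applies to $\mathcal{C}^\perp$ as well, yielding $(2)\Leftrightarrow(3\text{ for }\mathcal{C}^\perp)\Leftrightarrow(4)$. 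Combining this with the chain $(1)\Rightarrow(4)\Leftrightarrow(3\text{ for }\mathcal{C}^\perp)\Rightarrow(2)$ in one direction and $(2)\Rightarrow(4\text{ for }\mathcal{C}^\perp)\Leftrightarrow(3)\Rightarrow(1)$ in the other finishes $(1)\Leftrightarrow(2)$ and thus the whole equivalence. The main obstacle I anticipate is bookkeeping: tracking the roles of $\mathcal{C}$ and $\mathcal{C}^\perp$ throughout, verifying that the dimension and minimum-distance hypotheses transfer correctly under dualisation, and handling the boundary values of $k$ carefully so that the $t$-design conclusion covers every weight, including the trivial ones.
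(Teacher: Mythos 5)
Your proposal is correct, and its overall skeleton matches the paper's: both use Theorems~\ref{thm:sct-code} and~\ref{thm:pct-code} for the ``design implies $T$-independent weight distribution'' direction, and both transfer between the primal and dual statements via $(\mathcal C_T)^\perp=(\mathcal C^\perp)^T$ from Lemma~\ref{lem:C-S-P} together with the fact that a code's weight distribution determines its dual's. The genuine difference is in $(3)\Rightarrow(1)$. The paper routes this through the complementary design: it shows via Lemma~\ref{lem:wtdTodesign} that $T$-independence of $A_k(\mathcal C_T)$ makes the complemented block multiset $\overline{\mathcal B_k}(\mathcal C)$ a $t'$-design, pulls this back to $\mathcal B_k(\mathcal C)$ with Lemma~\ref{lem:codeTOdesign} (valid only for $t'\le k\le \nu-t'$), and then patches the range $\nu-t+1\le k\le \nu-1$ separately with Lemma~\ref{lem:bigblock}. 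You instead prove the identity $g_k(T)=\sum_{S\subseteq T}(-1)^{|S|}A_k(\mathcal C_S)$ by inclusion--exclusion, where $g_k(T)$ counts weight-$k$ codewords whose support contains $T$; since every term on the right depends only on $|S|$ under hypothesis~(3) (with $S=\emptyset$ giving $A_k(\mathcal C)$), the count $g_k(T)/(q-1)$ of blocks containing a $t$-set $T$ is constant, which is the design property directly. This is cleaner: it treats all weights $0\le k\le\nu$ uniformly, needs no complementation and no case split for large $k$, and makes Lemmas~\ref{lem:bigblock}--\ref{lem:wtdTodesign} unnecessary for this theorem (though the paper reuses them elsewhere). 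The cost is that you must still invoke the paper's machinery for the converse direction, so the two approaches end up comparable in total length; the remaining duality bookkeeping you describe ($(3)$ for $\mathcal C$ equivalent to $(4)$ for $\mathcal C^\perp$ and vice versa, with dimensions pinned down by Lemma~\ref{lem:C-S-P}(2)) closes the cycle exactly as the paper does.
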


\begin{proof}
(3) $\Longrightarrow$ (1): Suppose that the weight distribution $\left ( A_k(\mathcal C_T) \right )_{k=0}^{\nu-t'}$ of the shortened code $\mathcal C_T$
 is independent of the specific choice of the elements in $T$, where $1 \le t'  \le t$.
 By Lemmas \ref{lem:codeTOdesign} and \ref{lem:wtdTodesign}, the pair
 $\left ( \mathcal P(\mathcal C) , \mathcal B_k(\mathcal C) \right )$ is a $t'$-design for any $0 \le k \le \nu-t'$.
 In particular, the pair $\left ( \mathcal P(\mathcal C) , \mathcal B_k(\mathcal C) \right )$ is a $t$-design for any $0 \le k \le \nu-t$
 and $\left ( \mathcal P(\mathcal C) , \mathcal B_k(\mathcal C) \right )$ is a $(\nu -k)$-design for any $\nu-t+1 \le k \le \nu-1$.
 By Lemma \ref{lem:bigblock}, the pair $\left ( \mathcal P(\mathcal C) , \mathcal B_k(\mathcal C) \right )$ is also a $t$-design for any $\nu -t+1 \le k \le \nu-1$.
Since $\left ( \mathcal P(\mathcal C) , \mathcal B_\nu(\mathcal C) \right )$ is always a $t$-design, the pair $\left ( \mathcal P(\mathcal C) , \mathcal B_k(\mathcal C) \right )$
is a $t$-design for any $0 \le k \le \nu$.

(1) $\Longrightarrow$ (4): Recall that if $\left ( \mathcal P(\mathcal C) , \mathcal B_\nu(\mathcal C) \right )$ is  a $t$-design, the pair
$\left ( \mathcal P(\mathcal C) , \mathcal B_\nu(\mathcal C) \right )$ is also  a $t'$-design for $1\le t' \le t$. The desired results follow from
Theorem \ref{thm:pct-code}.

(4) $\Longrightarrow$ (2): By the condition in (4), Lemma \ref{lem:C-S-P} and the Pless power moments in (\ref{eq:PPM}),
the weight distribution $\left ( A_k((\mathcal C^{\perp})_T) \right )_{k=0}^{\nu-t'}$
of the shortened code $(\mathcal C^{\perp})_T$
 is independent of the specific choice of the elements in $T$. Since Statement (3) implies
Statement (1), the desired conclusion then follows.

(2) $\Longrightarrow$ (3): By the condition in Item (2) and Theorem \ref{thm:pct-code}, the weight distribution
$\left ( A_k((\mathcal C^{\perp})^T) \right )_{k=0}^{\nu-t'}$ of the punctured code $(\mathcal C^{\perp})^T$
  is independent of the specific choice of the elements in $T$,
 where $T$ is  any  set of $t'$ coordinate positions in $\mathcal  C^{\perp}$.
 The desired conclusion follows from Lemma \ref{lem:C-S-P} and the Pless power moments in (\ref{eq:PPM}).
\end{proof}

Notice that
some of the $t$-designs $\left ( \mathcal P(\mathcal C) , \mathcal B_k(\mathcal C) \right )$
mentioned in Theorem \ref{thm:tdesign-wtcode} are trivial and some may not be simple.

Theorem \ref{thm:tdesign-wtcode} gives necessary and sufficient conditions for
a code to support $t$-designs with $0< t <\min \{d, d^{\perp}\}$.
It demonstrates the importance of the weight
distribution of linear codes in the theory of $t$-designs, and will be used to develop
a generalisation of the original Assmus-Mattson Theorem in the next section.

The following well-known result is clearly a corollary of Theorem \ref{thm:tdesign-wtcode}.
This demonstrates another usefulness of Theorem \ref{thm:tdesign-wtcode}.

\begin{corollary} \cite[p.165]{MS77}
Let $\mathcal{C}$ be a $[\nu, m, d]$ binary linear code with $m >1$, such that for each $w>0$
the supports of the codewords of weight $w$ form a $t$-design, where $t<d$. Then the supports
of the codewords of each nonzero weight in $\mathcal{C}^\perp$ also form a $t$-design.
\end{corollary}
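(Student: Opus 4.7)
The plan is to invoke the equivalence (1) $\Leftrightarrow$ (2) of Theorem \ref{thm:tdesign-wtcode} directly. The hypothesis of the corollary is precisely condition (1) in that theorem (with $\mathcal{B}_0(\mathcal{C}) = \{\emptyset\}$ taken as a trivial $t$-design), and condition (2) is exactly the conclusion. Thus, once the technical hypothesis $0 < t < \min\{d, d^\perp\}$ of the theorem is verified, the corollary is immediate.

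The condition $t < d$ is given, so it remains to check $t < d^\perp$. I would establish this by contradiction: assume $s := d^\perp \le t$ and let $S = \mathrm{Supp}(\mathbf{c}')$ for some weight-$s$ codeword $\mathbf{c}' \in \mathcal{C}^\perp$. Binary orthogonality forces $|S \cap \mathrm{Supp}(\mathbf{c})|$ to be even for every $\mathbf{c} \in \mathcal{C}$. For odd $s$, this immediately precludes $S \subseteq \mathrm{Supp}(\mathbf{c})$, so any $t$-subset $T \supseteq S$ is contained in no block of $\mathcal{B}_d(\mathcal{C})$; the $t$-design regularity then forces the parameter $\lambda$ of $\mathcal{B}_d(\mathcal{C})$ to vanish, hence $\mathcal{B}_d(\mathcal{C}) = \emptyset$, contradicting the existence of a minimum-weight codeword.

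The main obstacle is the even-$s$ case, in which the parity constraint is compatible with $S \subseteq \mathrm{Supp}(\mathbf{c})$. There I would iteratively compare intersection numbers of $t$-subsets with various values of $|T \cap S|$ via Theorem \ref{thm:t to 1}, and exploit $m > 1$ to exclude the degenerate situations (such as $\mathcal{C} \subseteq \{\mathbf{0}, \mathbf{1}\}$) in which all design constraints could still be satisfied. Once $d^\perp > t$ is secured, applying Theorem \ref{thm:tdesign-wtcode} and reading off condition (2) completes the proof with no further calculation.
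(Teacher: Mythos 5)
Your route coincides with the paper's: the paper offers no argument beyond asserting that the statement is a corollary of Theorem \ref{thm:tdesign-wtcode}, i.e.\ exactly the implication $(1)\Rightarrow(2)$ you invoke, and your reading of conditions (1) and (2) (including the conventions that make $\mathcal B_k(\mathcal C)$ a trivial $t$-design for $0\le k<d$) is correct. What you add, rightly, is the observation that the theorem requires $t<\min\{d,d^{\perp}\}$ whereas the corollary only assumes $t<d$, so $t<d^{\perp}$ must be extracted from the design hypothesis together with $m>1$; the paper passes over this in silence. Your odd-weight argument is complete, and your sketch for the even case does close up --- one choice of sets suffices. Let $S$ be the support of a dual codeword of weight $s=d^{\perp}\le t$ with $s$ even, and suppose $A_w\neq 0$ for some $0<w<\nu$. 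Every block $B\in\mathcal B_w(\mathcal C)$ meets $S$ in an even number of points, so a block containing the $(s-1)$-set $T_0=S\setminus\{x\}$ meets $S$ in at least $s-1$, hence exactly $s$, points and therefore contains $S$; conversely any block containing $S$ contains $T_0$. Since $\mathcal B_w(\mathcal C)$ is both an $(s-1)$-design and an $s$-design (as $s\le t$), this forces $\lambda_{s-1}=\lambda_{s}$, while Theorem \ref{thm:t to 1} gives $\lambda_{s-1}=\frac{\nu-s+1}{w-s+1}\,\lambda_{s}>\lambda_s$ unless $\lambda_s=0$; hence $\lambda_s=0$ and $A_w=0$, a contradiction. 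Thus $A_w=0$ for all $0<w<\nu$, so $\mathcal C\subseteq\{\mathbf 0,\mathbf 1\}$ and $m\le 1$, contradicting $m>1$. With $t<d^{\perp}$ secured, the rest of your proposal is sound and in fact more complete than the paper's own treatment.
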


\section{A generalization of the Assmus-Mattson theorem}\label{sec:geam}

There is a strengthening of the Assmus-Mattson Theorem for special binary codes \cite{Calder91}.
The objective of this section is to present another generalisation of the Assmus-Mattson Theorem
documented in Theorem \ref{thm-AMTheoremExt} and demonstrate its advantages over
the original version.

\subsection{Our generalisation of the Assmus-Mattson theorem}

To develop the generalization of the Assmus-Mattson theorem, we need to prove the following lammas first.

\begin{lemma}\label{lem-C-shortened}
Let $\mathcal C$ be a linear code of length $\nu$ over $\mathrm{GF}(q)$ and  $d^{\perp}$  the minimum distance of $\mathcal  C^{\perp}$.
Let $t$, $k$ be  integers with $0 \le k \le \nu$ and $0< t < \min\{d, d^{\perp}\}$. Let $\left ( \mathcal P(\mathcal C), \mathcal B_{k}(\mathcal C) \right )$
be a $t$-$(\nu, k, \lambda_k)$ design for some integer $\lambda_k$. Let $T$ be  a  set of $t$ coordinate positions in $\mathcal  C$.
Then
 $$A_k(\mathcal C_{T})= \frac{\binom{\nu-t}{k}}{\binom{\nu-t}{k-t}}(q-1) \lambda_k.$$
\end{lemma}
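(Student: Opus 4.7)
The plan is to count weight-$k$ codewords of the shortened code $\mathcal{C}_T$ directly, by transferring the problem into a counting problem about blocks of the design $(\mathcal{P}(\mathcal{C}), \mathcal{B}_k(\mathcal{C}))$. The argument parallels the proof of Theorem \ref{thm:sct-code}, but starts from the hypothesis that the block set at weight $k$ is a design (rather than from an assumption that all weights give designs), and does not require the weight distribution of $\mathcal{C}$ as input.

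First, because $t < \min\{d, d^\perp\}$, Lemma \ref{lem:C-S-P} implies that the puncturing map
\[
\pi_T : \mathcal{C}(T) \longrightarrow \mathcal{C}_T, \qquad (c_i)_{i \in \mathcal{P}(\mathcal{C})} \longmapsto (c_i)_{i \in \mathcal{P}(\mathcal{C}) \setminus T},
\]
is a weight-preserving linear bijection. Therefore $A_k(\mathcal{C}_T)$ equals the number of codewords $\mathbf{c} \in \mathcal{C}$ of weight $k$ satisfying $\mathrm{Supp}(\mathbf{c}) \cap T = \emptyset$.

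Second, I would group these weight-$k$ codewords by their supports. Each support in $\mathcal{B}_k(\mathcal{C})$ arises from exactly $q-1$ codewords (the scalar multiples), so the count in the previous step equals $(q-1)$ times the number of blocks of $\mathcal{B}_k(\mathcal{C})$ that are disjoint from $T$. By definition, this latter quantity is precisely the intersection number $\lambda^T$ of the $t$-$(\nu,k,\lambda_k)$ design $(\mathcal{P}(\mathcal{C}), \mathcal{B}_k(\mathcal{C}))$, taken with $t_0 = t$ and $t_1 = 0$.

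Third, I apply Theorem \ref{thm:t to 1} with $(t_0, t_1) = (t, 0)$, which gives
\[
\lambda^T \;=\; \lambda(t, 0) \;=\; \frac{\binom{\nu-t}{k}}{\binom{\nu-t}{k-t}}\,\lambda_k.
\]
Combining the three steps yields $A_k(\mathcal{C}_T) = \frac{\binom{\nu-t}{k}}{\binom{\nu-t}{k-t}}(q-1)\lambda_k$, as required. There is no serious obstacle: the only thing to watch is that the hypothesis $t < d$ guarantees that codewords of weight $k < d$ do not exist and $\lambda_k = 0$ in that range, so the formula is trivially consistent there, while for $k \ge d$ the design hypothesis directly supplies the intersection number via Theorem \ref{thm:t to 1}.
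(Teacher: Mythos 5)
Your proposal is correct and follows essentially the same route as the paper: identify $A_k(\mathcal C_T)$ with $(q-1)$ times the number of blocks of $\mathcal B_k(\mathcal C)$ disjoint from $T$ (the intersection number $\lambda^T$), then evaluate $\lambda^T$ via Theorem \ref{thm:t to 1} with $(t_0,t_1)=(t,0)$. The only cosmetic difference is that you justify the bijectivity of the shortening map explicitly (the paper takes it as immediate), and your remark that each block comes from exactly $q-1$ codewords should really be read through the multiset convention for $\mathcal B_k(\mathcal C)$, which is how the paper handles it.
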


\begin{proof}
Let $\lambda^{T}$ be the number of blocks in $\mathcal B_k(\mathcal C)$ that are disjoint with $T$. Then, $A_k(\mathcal C_{T})=(q-1)\lambda^{T}$.
Using Theorem \ref{thm:t to 1}, one gets
\begin{align*}
A_k(\mathcal C_{T})=(q-1)  \frac{\binom{\nu-t}{k}}{\binom{\nu-t}{k-t}} \lambda_k.
\end{align*}
It completes the proof.
\end{proof}

\begin{lemma}\label{lem:C-DUAL}
Let $\mathcal C$ be a linear code of length $\nu$ over $\mathrm{GF}(q)$ and  $d^{\perp}$  the minimum distance of $\mathcal  C^{\perp}$.
 Let $s$ and $t$ be two positive integers with  $0< t < \min\{d, d^{\perp}\}$.
Let $T$ be  a  set of $t$ coordinate positions in $\mathcal  C$.
Suppose that
 $\left ( \mathcal P(\mathcal C^{\perp}), \mathcal B_{i}(\mathcal C^{\perp}) \right )$
are $t$-$(\nu, i, \lambda^{\perp}_i)$ designs  for all $i$ with  $0\le i \le s+t-1$.
Then
$$A_{k}\left ( (\mathcal C^{\perp})^{T} \right )= (q-1)\sum_{i=0}^t \binom{t}{i} \lambda_{k+i}^{\perp}(t-i,i),$$
where $0\le k \le s-1$ and $\lambda_{k+i}^{\perp}(t-i,i)=\frac{\binom{\nu-t}{k}}{\binom{\nu-t}{k-t+i}} \lambda_{k+i}^\perp$.
\end{lemma}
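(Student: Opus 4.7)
}
The idea is to parametrize the codewords of $(\mathcal{C}^\perp)^T$ of weight $k$ by their preimages in $\mathcal{C}^\perp$ under the puncturing map, stratified according to how the preimage support intersects $T$, and then evaluate each stratum using the intersection numbers of the hypothesized designs via Theorem~\ref{thm:t to 1}.

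Since $t<\min\{d,d^\perp\}$, Lemma~\ref{lem:C-S-P}(2) applied to $\mathcal{C}^\perp$ (whose dual has minimum distance $d$) shows that the puncturing map $\pi^T : \mathcal{C}^\perp \to (\mathcal{C}^\perp)^T$ is a bijection of linear spaces. For a codeword $\mathbf{c}\in\mathcal{C}^\perp$, let $i = \#\bigl(T\cap\mathrm{Supp}(\mathbf{c})\bigr)\in\{0,1,\ldots,t\}$; then the punctured codeword $\pi^T(\mathbf{c})$ has weight $\wt(\mathbf{c}) - i$. Therefore, first I would write
\begin{align*}
A_k\bigl((\mathcal{C}^\perp)^T\bigr) \;=\; \sum_{i=0}^{t}\; \sum_{\substack{T_1\subseteq T\\ \#T_1=i}} \#\bigl\{\mathbf{c}\in\mathcal{C}^\perp : \wt(\mathbf{c})=k+i,\ \mathrm{Supp}(\mathbf{c})\cap T = T_1\bigr\}.
\end{align*}

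Next, I fix $i$ and $T_1\subseteq T$ with $\#T_1=i$ and identify the inner count with an intersection number. Since nonzero scalar multiples of a codeword share the same support and the multiset $\mathcal{B}_{k+i}(\mathcal{C}^\perp)$ is defined by dividing multiplicities by $q-1$, the set of codewords $\mathbf{c}\in\mathcal{C}^\perp$ of weight $k+i$ with $\mathrm{Supp}(\mathbf{c})\cap T = T_1$ has cardinality $(q-1)\,\lambda^{T\setminus T_1}_{T_1}$, where $\lambda^{T\setminus T_1}_{T_1}$ is taken in the $t$-$(\nu,k+i,\lambda^\perp_{k+i})$ design $\bigl(\mathcal{P}(\mathcal{C}^\perp),\mathcal{B}_{k+i}(\mathcal{C}^\perp)\bigr)$. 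The assumed range $0\le k+i\le s+t-1$ guarantees that this design hypothesis is available for every term.

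Now I apply Theorem~\ref{thm:t to 1} with $t_0=t-i$, $t_1=i$ (so $t_0+t_1=t$ equals the strength), which yields
\begin{align*}
\lambda^{T\setminus T_1}_{T_1} \;=\; \frac{\binom{\nu-t}{k}}{\binom{\nu-t}{k+i-t}}\,\lambda^\perp_{k+i} \;=\; \lambda^\perp_{k+i}(t-i,i),
\end{align*}
which in particular does not depend on the specific choice of $T_1\subseteq T$ of size $i$. Summing over the $\binom{t}{i}$ subsets $T_1$ of each size $i$ and then over $i$ produces the claimed identity. The proof is essentially a bookkeeping exercise on top of Theorem~\ref{thm:t to 1}; the only subtle point to verify is that the puncturing map is bijective and weight-decreasing by exactly the cardinality of the $T$-intersection, which is precisely where the hypothesis $t<\min\{d,d^\perp\}$ is used.
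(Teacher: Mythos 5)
Your proof is correct and is essentially the paper's argument: the paper simply cites Lemma~\ref{lem:P:k:k+t} (applied to $\mathcal C^{\perp}$) together with the identity $A_{k+i}(\mathcal C^{\perp})=(q-1)\binom{\nu}{t}\lambda^{\perp}_{k+i}/\binom{k+i}{t}$, and your stratification of the preimages by $T_1=T\cap\mathrm{Supp}(\mathbf c)$ followed by the application of Theorem~\ref{thm:t to 1} is exactly the proof of that lemma, just carried out with the intersection numbers $\lambda^{\perp}_{k+i}(t-i,i)$ directly instead of passing through $A_{k+i}(\mathcal C^{\perp})$.
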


\begin{proof}
The desired results follow from Lemma \ref{lem:P:k:k+t} and the fact that
$$A_{k+i}(\mathcal C^\perp)=(q-1) \frac{\binom{\nu}{t}}{\binom{k+i}{t}} \lambda_{k+i}^\perp.$$
\end{proof}

\begin{lemma}\label{lem:C-T}
Let $\mathcal C$ be a  $[\nu,m,d]$ code over $\mathrm{GF}(q)$ and  $d^{\perp}$  the minimum distance of $\mathcal  C^{\perp}$.
Let $i_1, \dots, i_s$ be $s$ positive integers and $T$ a set of $t$ coordinate positions of $\mathcal C$,
where $0 \le i_1<\cdots <i_s\le \nu-t$ and  $1\le t<\min \{d, d^{\perp}\}$. Suppose that  $A_i(\mathcal  C_T)~(i\not \in \{ i_1, \dots, i_s \})$ and $A_1(\left ( \mathcal C^{\perp} \right )^{T})$, $\dots$,  $A_{s-1}(\left ( \mathcal C^{\perp} \right )^{T})$
are independent of the elements of $T$.
Then,  the weight distribution  of $\mathcal C_{T}$  is independent of
the elements of $T$ and can be determined from the first $s$ equations in (\ref{eq:PPM}).
\end{lemma}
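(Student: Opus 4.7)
The plan is to apply Theorem~\ref{eq:wt-wtd} directly to the shortened code $\mathcal C_T$, using Lemma~\ref{lem:C-S-P} to translate between punctured/shortened duality and to supply the right dimension information.

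First, I would identify the relevant codes. By Lemma~\ref{lem:C-S-P}(1), $(\mathcal C_T)^\perp = (\mathcal C^\perp)^T$, so the dual weight distribution of $\mathcal C_T$ coincides term-by-term with the weight distribution of the punctured code $(\mathcal C^\perp)^T$; in particular, for every $k$,
\begin{equation*}
A_k^\perp(\mathcal C_T) \;=\; A_k\bigl((\mathcal C^\perp)^T\bigr).
\end{equation*}
By Lemma~\ref{lem:C-S-P}(2), since $t < \min\{d, d^\perp\}$, the code $\mathcal C_T$ has length $\nu-t$ and dimension $m-t$, so the Pless power moments (\ref{eq:PPM}) can be written for $\mathcal C_T$ (with its own length and dimension).

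Next, I would take $S := \{i_1, \ldots, i_s\}$. Since the hypothesis forces $1 \le i_1 < \cdots < i_s \le \nu - t$, we have $S \subseteq \{1, 2, \ldots, \nu - t\}$ with $\#S = s$. Applying Theorem~\ref{eq:wt-wtd} to $\mathcal C_T$ with this choice of $S$, the full weight distribution of $\mathcal C_T$ is uniquely determined by the data $A_1^\perp(\mathcal C_T), \ldots, A_{s-1}^\perp(\mathcal C_T)$ together with $\{A_i(\mathcal C_T) : i \notin S\}$, and these data are precisely the coefficients and right-hand side of the first $s$ equations of (\ref{eq:PPM}) viewed as a linear system in the unknowns $A_{i_1}(\mathcal C_T), \ldots, A_{i_s}(\mathcal C_T)$. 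By the displayed identification above, the dual entries $A_j^\perp(\mathcal C_T) = A_j((\mathcal C^\perp)^T)$ are independent of $T$ by hypothesis, and the $A_i(\mathcal C_T)$ with $i \notin S$ are independent of $T$ by hypothesis as well; hence the output — the remaining $A_{i_j}(\mathcal C_T)$ — is independent of $T$.

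The only thing to verify is that Theorem~\ref{eq:wt-wtd} genuinely applies to $\mathcal C_T$, which reduces to checking $S \subseteq \{1, \ldots, \nu-t\}$; this is immediate from the hypotheses on the $i_j$. There is no deeper obstacle because the essential content, namely that the $s \times s$ linear system extracted from the first $s$ Pless moments is solvable for the unknown weight counts indexed by $S$, is exactly what Theorem~\ref{eq:wt-wtd} provides. The main conceptual step, and the only place a subtlety could hide, is the duality translation $(\mathcal C_T)^\perp = (\mathcal C^\perp)^T$, which is furnished by Lemma~\ref{lem:C-S-P}(1) and lets one use the hypothesis on the punctured dual as hypotheses on the dual of the shortened code.
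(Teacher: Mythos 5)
Your proposal is correct and follows essentially the same route as the paper: invoke Lemma~\ref{lem:C-S-P} to get $(\mathcal C_T)^\perp=(\mathcal C^\perp)^T$ and the dimension $m-t$, then apply Theorem~\ref{eq:wt-wtd} to $\mathcal C_T$ with $S=\{i_1,\dots,i_s\}$. You simply spell out the details (the check $S\subseteq\{1,\dots,\nu-t\}$ and the identification of the hypothesized $T$-independent data with the inputs of Theorem~\ref{eq:wt-wtd}) that the paper leaves implicit.
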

\begin{proof}
By Lemma \ref{lem:C-S-P}, $\mathcal C_{T}$ has dimension $m-t$, and $\left (\mathcal C_{T} \right )^{\perp} = \left ( \mathcal C^{\perp} \right )^{T}$.
The desired conclusions of this lemma
then follow from Theorem \ref{eq:wt-wtd}.
\end{proof}

One of the main contributions of this paper is the following theorem, which generalizes the Assmus-Mattson theorem.

\begin{theorem}\label{thm-designGAMtheorem}
Let $\mathcal C$ be a linear code over $\mathrm{GF}(q)$ with length $\nu$ and minimum weight $d$.
Let $\mathcal C^{\perp}$ denote the dual code of $\mathcal C$ with minimum weight $d^{\perp}$.
Let $s$ and $t$ be two positive integers with $t< \min \{d, d^{\perp}\}$. Let $S$ be a $s$-subset
of $\{d, d+1, \ldots, \nu-t  \}$.
Suppose that
$\left ( \mathcal P(\mathcal C), \mathcal B_{\ell}(\mathcal C) \right )$ and $\left ( \mathcal P(\mathcal C^{\perp}), \mathcal B_{\ell^{\perp}}(\mathcal C^{\perp}) \right )$
are $t$-designs  for
$\ell    \in \{d, d+1, \ldots, \nu-t  \} \setminus S $ and $0\le \ell^{\perp} \le s+t-1$. Then
 $\left ( \mathcal P(\mathcal C) , \mathcal B_k(\mathcal C) \right )$ and
  $\left ( \mathcal P(\mathcal C^{\perp}), \mathcal B_{k}(\mathcal C^{\perp}) \right )$ are
  $t$-designs for any
$t\le k  \le \nu$, and in particular,
\begin{itemize}
\item $\left ( \mathcal P(\mathcal C) , \mathcal B_k(\mathcal C) \right )$ is a simple $t$-design
      for all $k$ with $d \leq k \leq w$, where $w$ is defined to be the largest integer
      satisfying $w \leq \nu$ and
      $$
      w-\left\lfloor \frac{w+q-2}{q-1} \right\rfloor <d;
      $$
\item  and $\left ( \mathcal P(\mathcal C^{\perp}), \mathcal B_{k}(\mathcal C^{\perp}) \right )$ is
       a simple $t$-design
      for all $k$ with $d \leq k \leq w^\perp$, where $w^\perp$ is defined to be the largest integer
      satisfying $w^\perp \leq \nu$ and
      $$
      w^\perp-\left\lfloor \frac{w^\perp+q-2}{q-1} \right\rfloor <d^\perp.
      $$
\end{itemize}
\end{theorem}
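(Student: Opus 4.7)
The plan is to apply Theorem \ref{thm:tdesign-wtcode}, so my goal is to verify condition (3) of that theorem: for every $t'$ with $1 \le t' \le t$ and every $t'$-subset $T$ of coordinates, I will show that the weight distribution of the shortened code $\mathcal{C}_T$ is independent of $T$. Once this is established, the theorem yields condition (1), so $(\mathcal{P}(\mathcal{C}), \mathcal{B}_k(\mathcal{C}))$ is a $t$-design for every $0 \le k \le \nu$; the equivalence with condition (2) delivers the analogous statement for $\mathcal{C}^\perp$, and the simple-design clauses then follow from Lemma \ref{lem-simpled}.

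Fix $t' \in \{1, \ldots, t\}$ and a $t'$-subset $T$. I would introduce the enlarged exceptional set
\begin{equation*}
  S' = S \cup \{\nu - t + 1,\, \nu - t + 2,\, \ldots,\, \nu - t'\},
\end{equation*}
of size $s' = s + (t - t')$ (which collapses to $S' = S$ when $t' = t$); note $S' \subseteq \{d, \ldots, \nu - t'\}$, since $d \le \nu - t$ because $S \subseteq \{d, \ldots, \nu - t\}$. For each $i \notin S'$ with $0 \le i \le \nu - t'$, the value $A_i(\mathcal{C}_T)$ is either trivially determined (namely $A_0(\mathcal{C}_T) = 1$, and $A_i(\mathcal{C}_T) = 0$ for $0 < i < d$) or, when $d \le i \le \nu - t$ with $i \notin S$, determined by Lemma \ref{lem-C-shortened} from the $t$-design parameter $\lambda_i$ (using that a $t$-design is automatically a $t'$-design for $t' \le t$). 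Thus the only ``unknowns'' are $A_i(\mathcal{C}_T)$ for $i \in S'$, exactly $s'$ values.

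The dual input is supplied by Lemma \ref{lem:C-DUAL}: applied to $(\mathcal{C}^\perp)^T$ with parameters $t'$ and $s'$, it shows that $A_k((\mathcal{C}^\perp)^T)$ is independent of $T$ for all $0 \le k \le s' - 1$, provided $(\mathcal{P}(\mathcal{C}^\perp), \mathcal{B}_{\ell^\perp}(\mathcal{C}^\perp))$ is a $t'$-design for every $0 \le \ell^\perp \le s' + t' - 1$. The identity $s' + t' - 1 = s + t - 1$ makes this requirement match the hypothesis on $\mathcal{C}^\perp$ exactly. Lemma \ref{lem:C-T} then combines the two sides and forces the entire weight distribution of $\mathcal{C}_T$ to be independent of $T$, completing the verification of condition (3) of Theorem \ref{thm:tdesign-wtcode} for this $t'$.

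The main subtlety, and the point I would double-check carefully, is the index bookkeeping in the case $t' < t$: the exceptional set on the primal side must be enlarged by exactly $t - t'$ extra weights (those in $\{\nu - t + 1, \ldots, \nu - t'\}$, not covered by the hypothesis because they exceed $\nu - t$), and it is precisely this enlargement that pushes the number of required dual $t$-designs up to $s + t - 1$, matching the hypothesis on $\mathcal{C}^\perp$. Once the two counts line up, the proof becomes a clean assembly of Lemmas \ref{lem-C-shortened}, \ref{lem:C-DUAL}, and \ref{lem:C-T} with Theorem \ref{thm:tdesign-wtcode}, while the simple-design conclusions are handed to us by Lemma \ref{lem-simpled}.
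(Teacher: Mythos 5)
Your proposal is correct and follows essentially the same route as the paper's own proof: the same enlarged exceptional set $S'=S\cup\{\nu-t+1,\ldots,\nu-t'\}$ of size $s'=s+t-t'$, the same use of Lemma \ref{lem-C-shortened} on the primal side and Lemma \ref{lem:C-DUAL} on the dual side (with the identity $s'+t'-1=s+t-1$ matching the hypothesis), assembled via Lemma \ref{lem:C-T} into condition (3) of Theorem \ref{thm:tdesign-wtcode}, with simplicity from Lemma \ref{lem-simpled}. No gaps.
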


\begin{proof}
For any $1\le t' \le t$, let $S_{t'}=S \cup \{i: \nu -t+1 \le i \le  \nu -t'\}$ and $s'= \#S_{t'}$. Then, $s'=s+t-t'$.
Then, the pair $\left ( \mathcal P(\mathcal C), \mathcal B_{\ell}(\mathcal C) \right )$ is $t'$-design for any $\ell \in \{0, 1, \ldots, \nu-t'\} \setminus S_{t'}$.
By Lemma \ref{lem-C-shortened},  $A_i(\mathcal  C_T)~(i \in \{0, 1, \ldots, \nu-t'\} \setminus S_{t'})$  are independent of the elements of $T$,
where $T$ is any set of $t'$ coordinate positions of $\mathcal C$.

By the assumption of this theorem,  the pair $\left ( \mathcal P(\mathcal C^{\perp}), \mathcal B_{\ell^{\perp}}(\mathcal C^{\perp}) \right )$
is $t'$-design  for $0\le \ell^{\perp}\le (s'+t'-1)=  (s+t-1)$.
By Lemma \ref{lem:C-DUAL},   $A_1\left( ( \mathcal C^{\perp} )^{T}\right )$, $\dots$,  $A_{s'-1}\left(( \mathcal C^{\perp} )^{T}\right)$
are independent of the elements of $T$, where $T$ is any set of $t'$ coordinate positions of $\mathcal C$.

By Lemma \ref{lem:C-T}, the weight distribution  of $\mathcal C_{T}$  is independent of the
choice of the elements of $T$.
It then follows from Theorem \ref{thm:tdesign-wtcode} that
$\left ( \mathcal P(\mathcal C), \mathcal B_k(\mathcal C) \right )$ and
  $\left ( \mathcal P(\mathcal C^{\perp}), \mathcal B_{k}(\mathcal C^{\perp}) \right )$ are
  $t$-designs for any
$t\le k  \le \nu$. The last conclusions on the simplicity of the designs $\left ( \mathcal P(\mathcal C), \mathcal B_k(\mathcal C) \right )$ and
  $\left ( \mathcal P(\mathcal C^{\perp}), \mathcal B_{k}(\mathcal C^{\perp}) \right )$ follow
  from Lemma \ref{lem-simpled}.
\end{proof}

Notice that some of the $t$-designs from Theorem \ref{thm-designGAMtheorem} are trivial,
and some may not be simple. However, many of them are simple and nontrivial, and thus interesting.

We now show that Theorem \ref{thm-AMTheoremExt} (i.e., the Assmus-Mattson Theorem) is a corollary
of Theorem \ref{thm-designGAMtheorem}. To this end, we use Theorem \ref{thm-designGAMtheorem} to
derive Theorem \ref{thm-AMTheoremExt}.

\begin{proof}[Proof of Theorem \ref{thm-AMTheoremExt} using Theorem \ref{thm-designGAMtheorem}]
Let $w_1, w_2, \ldots, w_{s}$ be the nonzero weights of $\mathcal C$ in $\{d, d+1, \ldots, \nu-t\}$,
where $s \leq d^{\perp}-t$. Put $S=\{ w_1, w_2, \ldots, w_{s}\}$. Then $\left ( \mathcal P(\mathcal C), \mathcal B_{\ell}(\mathcal C) \right )$ is
 the trivial $t$-design $\left ( \mathcal P(\mathcal C), \emptyset \right )$
 for all $\ell \in \{d,d+1, \ldots, \nu-t\} \setminus S$. Note
that $s+t-1 \leq d^{\perp}-1$. Clearly, $\left ( \mathcal P(\mathcal C^{\perp}), \mathcal B_{\ell^{\perp}}(\mathcal C^{\perp}) \right )$
are the trivial $t$-design $\left ( \mathcal P(\mathcal C^{\perp}), \emptyset \right )$ for all $0 \le \ell^{\perp} \le s+t-1$.
 It then follows from Theorem \ref{thm-designGAMtheorem} that
 $\left ( \mathcal P(\mathcal C) , \mathcal B_k(\mathcal C) \right )$ and
  $\left ( \mathcal P(\mathcal C^{\perp}), \mathcal B_{k}(\mathcal C^{\perp}) \right )$ are
  $t$-designs for any
$t\le k  \le \nu$. Both $\left ( \mathcal P(\mathcal C) , \mathcal B_k(\mathcal C) \right )$ and
  $\left ( \mathcal P(\mathcal C^{\perp}), \mathcal B_{k}(\mathcal C^{\perp}) \right )$  are clearly the trivial design
$\left ( \mathcal P(\mathcal C) , \emptyset \right )$ for $0 \leq k \leq t-1$, as we assumed that $t < \min\{d, d^\perp\}$.
The desired conclusions of Theorem \ref{thm-AMTheoremExt} then follow.
\end{proof}

One would naturally ask if Theorem \ref{thm-designGAMtheorem} is more powerful than
Theorems \ref{thm-AMTheoremExt} and \ref{thm-AMTheorem}. The answer is yes, and
this will be justified in the next subsection.

\subsection{The extended Assmus-Mattson theorem can outperform the origianl one}\label{sec-examplespec}

The objective of this section is to show that Theorem \ref{thm-designGAMtheorem} is
more powerful than Theorems \ref{thm-AMTheoremExt} and \ref{thm-AMTheorem}, and
is indeed useful. To this end, we consider the linear codes investigated in
\cite{DMT18} and \cite{TDX19}.

In order for Theorem \ref{thm-designGAMtheorem} to outperform the original Assmus-Mattson
Theorem,
one has to choose two positive integers $s$ and $t$
with $t < \min \{d, d^{\perp}\}$ and an $s$-subset $S$ of $\{d, d+1,   \ldots, \nu-t  \}$, and
then prove that $\left ( \mathcal P(\mathcal C), \mathcal B_{\ell}(\mathcal C) \right )$ and
$\left ( \mathcal P(\mathcal C^{\perp}), \mathcal B_{\ell^{\perp}}(\mathcal C^{\perp}) \right )$
are $t$-designs for $\ell  \in \{d, d+1,   \ldots, \nu-t  \} \setminus S$ and $0 \le \ell^{\perp} \le s+t-1$ with some other approach. Hence, extra work is needed when applying Theorem \ref{thm-designGAMtheorem}. This
intuitively explains why Theorem \ref{thm-designGAMtheorem} can outperform the original Assmus-Mattson
The following two examples will clarify this statement.

\begin{example}\label{exam-spec1}
Let $F$ be a bent vectorial  function from $\mathrm{GF}(2^{2m})$ to $\mathrm{GF}(2^\ell)$, where $m\ge 3$.
Let $\mathcal C(F)$ be the code given in (\ref{eq:d-c(vb)}). By the weight distribution of $\mathcal C(F)$ in Table \ref{table:supp-bentcode},
for $k\not \in \{2^{2m-1}, 2^{2m-1}\pm 2^{m-1}\}$, the pair $(\mathcal P(\mathcal C(F)), \mathcal B_k(\mathcal C(F)))$
is a  trivial $2$-design. By the definition of $\mathcal C(F)$, one has
 $ \mathcal B_{2^{2m-1}}(\mathcal C(F))=  \mathcal B_{2^{m-1}}(\mathrm{RM}_2(1,2m))$, where
 $\mathrm{RM}_2(1,2m)$ is the first order  Reed-Muller code given by
 \begin{align*}
 \mathrm{RM}_2(1,2m)=\left \{(\mathrm{Tr}(bx)+c)_{x\in \mathrm{GF}(2^{2m})}: b\in \mathrm{GF}(2^{2m}), c\in \mathrm{GF}(2)\right \}.
 \end{align*}
It is well known that $ \mathcal B_{2^{m-1}}(\mathrm{RM}_2(1,2m))$ holds $2$-design.
Let $S=\{2^{2m-1}+ 2^{m-1}, 2^{2m-1}- 2^{m-1}\}$. Then, the pair $(\mathcal P(\mathcal C(F)), \mathcal B_k(\mathcal C(F)))$
is a   $2$-design for any $k\in \{0,1,\ldots, 2^{2m}-2\} \setminus S$. Since $d((\mathcal C(F))^{\perp})=4$, the pair
$(\mathcal P(\mathcal C(F)^{\perp}), \mathcal B_k(\mathcal C(F)^{\perp}))$
is a trivial $2$-design for $0\le k \le 3= \#S +2 -1$. Hence, by Theorem \ref{thm-designGAMtheorem}, the codes $\mathcal C(F)$ and $\mathcal C(F)^{\perp}$
support $2$-designs \cite[Theorem 11]{DMT18}. The weight distribution of
the code $\mathcal C(F)$ and Lemma \ref{lem-simpled} tell us that the $2$-designs
supported by $\mathcal C(F)$ are simple.
\end{example}

\begin{example}\label{exam-spec2}
Let $m$ be an odd positive integer. Let $\mathcal C$  be the  linear code defined by
\begin{align*}
\mathcal{C}=\left \{ \left (\mathrm{Tr}_{3^m/3}\left (a \alpha^{4i}+b \alpha^{2i} \right ) \right )_{i=0}^{\frac{3^m-1}{2}-1} :a,b\in \mathrm{GF}(3^m)\right  \},
\end{align*}
where $\mathrm{Tr}_{3^m/3}(\cdot)$ is the trace function from $\mathrm{GF}(3^m)$ to $\mathrm{GF}(3)$ and $\alpha$ is a generator of $\mathrm{GF}(3^m)^*$.
Then the code $\mathcal C$  have parameters $[\frac{3^m-1}{2}, 2m, 3^{m-1}-3^{\frac{m-1}{2}}]$.
Let $S=\left \{3^{m-1},  3^{m-1}\pm 3^{\frac{m-1}{2}} \right \}$. Then,
$A_k(\mathcal C)=0$ if $k \not\in S \cup \{0\}$. Thus, the pair $(\mathcal P(\mathcal C), \mathcal B_k(\mathcal C))$
is a trivial    $2$-design for any $k\in \{0,1,\ldots, \frac{3^m-1}{2} -2\} \setminus S$.
According to \cite[Corollary 1]{TDX19}, $(\mathcal P(\mathcal C^{\perp}), \mathcal B_4(\mathcal C^{\perp}))$ is a Steiner system $S(2,4,\frac{3^m-1}{2})$ and is simple.
It was known that $d(\mathcal C^{\perp})=4$ \cite{TDX19}.
Thus the pair  $(\mathcal P(\mathcal C^{\perp}), \mathcal B_4(\mathcal C^{\perp}))$ is a $2$-design for $0 \le k \le 4=\#S+2-1$.
Hence, by Theorem \ref{thm-designGAMtheorem}, the codes $\mathcal C$ and $\mathcal C^{\perp}$
support $2$-designs  \cite[Theorems 11 and 12]{TDX19}.  The weight distribution of
the code $\mathcal C$ and Lemma \ref{lem-simpled} tell us that the $2$-designs
supported by $\mathcal C(F)$ are simple.
\end{example}

The weight distributions of the codes in Examples \ref{exam-spec1} and \ref{exam-spec2}
and the minimum distances of their duals are known.
They tell us that the original Assumus-Mattson Theorems (i.e, Theorems \ref{thm-AMTheoremExt} and
\ref{thm-AMTheorem}) cannot be applied to prove
that the codes in Examples \ref{exam-spec1} and \ref{exam-spec2} support
$2$-designs. It is also known that the automorphism groups of these codes
are not 2-transitive in general \cite{DMT18,TDX19}. However, Theorem \ref{thm-designGAMtheorem}
can do it. Therefore, Theorem \ref{thm-designGAMtheorem} is more powerful than
Theorems \ref{thm-AMTheoremExt} and \ref{thm-AMTheorem}.
Another application of Theorem \ref{thm-designGAMtheorem} will be given
in the next section.

\section{$2$-designs and differentially $\delta$-uniform functions}\label{sec:dede}

Recall the definition of differentially $\delta$-uniform functions over $\mathrm{GF}(2^n)$ and the notation introduced in Section \ref{sec-duf}.
In this section, we shall give a connection between differentially $\delta$-uniform functions and $2$-designs, and present some new $2$-designs from some special differentially two-valued functions.

Let $F$ be  a differentially $\delta$-uniform function over $\mathrm{GF}(2^n)$.
Define the following linear code
\begin{align*}
\mathcal C(F)=\left \{ \left (\mathrm{Tr}(aF(x)+bx)+c \right )_{x\in \mathrm{GF}(2^n)}:a,b \in \mathrm{GF}(2^n), c\in \mathrm{GF}(2) \right \}.
\end{align*}
It follows from Delsarte's theorem \cite{MS77} that the dual code $\mathcal C(F)^{\perp}$ of $\mathcal C(F)$ can be given by
\begin{align*}
\mathcal C(F)^{\perp}=\left \{ (c_x)_{x\in \mathrm{GF}(2^n)}   \in \mathrm{GF}(2)^{n}: \sum_{x\in \mathrm{GF}(2^n)} c_x \mathbf u_x=\mathrm{0}\right \},
\end{align*}
where $\mathbf u_x=(F(x), x, 1)$.
For any $x_1, x_2 \in \mathrm{GF}(2^n)$ with $x_1\neq x_2$, denote by  $\lambda_{\{x_1, x_2\}}$ the cardinality  of the set
 $$W_{\{x_1, x_2\}}=\left \{ \mathbf c= (c_x)_{x\in \mathrm{GF}(2^n)} \in \mathcal C(F)^{\perp}: \mathrm{wt}(\mathbf c)=4, c_{x_1}=c_{x_2}=1 \right  \}.$$
Let $a=x_1+x_2$ and $b=F(x_1)+F(x_2)$. Denote $E_{\{x_1, x_2\}}=\{x\in \mathrm{GF}(2^n): F(x+a)+F(x)=b\}$.
Then, $\delta(a,b)=\#(E_{\{x_1, x_2\}})$ and
\begin{align*}
E_{\{x_1, x_2\}}=   \{x_1, x_2\} \cup \left (\cup_{i=1}^{\delta(a,b)/2-1} \{x_i', x_i'+a\} \right ),
\end{align*}
where $x_i'\in \mathrm{GF}(2^n)$.
Moreover,  it is easily observed that
\begin{align*}
W_{\{x_1, x_2\}}=\{\mathbf c_i: 1 \le i \le \delta(a, b)/2-1 \},
\end{align*}
where $\mathbf c_i =(c_x)_{x\in \mathrm{GF}(2^n)}$ with
\begin{align*}
c_x=\left\{
  \begin{array}{ll}
    1, & x \in \{x_i', x_i'+a, x_1, x_2\}; \\
    0, & \hbox{otherwise.}
  \end{array}
\right.
\end{align*}
Consequently, one has
\begin{align*}
\lambda_{\{x_1, x_2\}}= \frac{\delta\left (x_1+x_2, F(x_1)+F(x_2) \right )-2}{2}.
\end{align*}

So, we have proved the following theorem, which establishes a link between some $2$-designs and  differentially two-valued functions.

\begin{theorem}\label{thm:design-diff}
Let $F(x)$ be  a  function over $\mathrm{GF}(2^n)$. Then $\left ( \mathcal P(\mathcal C(F)^{\perp}),  \mathcal B_4(\mathcal C(F)^{\perp})
\right )$ is a $2$-design if and only if $F$  is differentially two-valued.
Furthermore, if $F$  is differentially two-valued with $\{0, 2^s\}$, then $\left ( \mathcal P(\mathcal C(F)^{\perp}),  \mathcal B_4(\mathcal C(F)^{\perp})
\right )$ is a $2$-$(2^n, 4, 2^{s-1}-1)$ design.
\end{theorem}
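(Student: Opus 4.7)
The key identity for the proof is already essentially established in the paragraph preceding the theorem statement: for any distinct $x_1, x_2 \in \mathrm{GF}(2^n)$,
\begin{align*}
\lambda_{\{x_1,x_2\}} \;=\; \frac{\delta\!\left(x_1+x_2,\,F(x_1)+F(x_2)\right)-2}{2},
\end{align*}
where $\lambda_{\{x_1,x_2\}}$ is the number of blocks of $\mathcal B_4(\mathcal C(F)^\perp)$ containing the pair $\{x_1,x_2\}$. My plan is to show that the 2-design condition (namely, that $\lambda_{\{x_1,x_2\}}$ depends only on the pair being contained, not on its choice) is equivalent to $F$ being differentially two-valued, and then to read off the parameter $\lambda$ in the two-valued case.

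First I would handle the easy direction. Assume $F$ is differentially two-valued with $\{0,2^s\}$. For any distinct $x_1,x_2$ the pair $(a,b)=(x_1+x_2,\,F(x_1)+F(x_2))$ has $a\ne 0$, and $x=x_1$ already satisfies $F(x+a)+F(x)=b$, so $\delta(a,b)\ne 0$. Hence $\delta(a,b)=2^s$, giving
\begin{align*}
\lambda_{\{x_1,x_2\}} \;=\; \frac{2^s-2}{2} \;=\; 2^{s-1}-1,
\end{align*}
independent of the pair. By definition, $(\mathcal P(\mathcal C(F)^\perp),\mathcal B_4(\mathcal C(F)^\perp))$ is therefore a $2$-$(2^n,4,2^{s-1}-1)$ design, which disposes of both the ``if'' direction and the explicit parameter claim.

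For the converse I would argue contrapositively, or equivalently show that if the pair $(\mathcal P(\mathcal C(F)^\perp),\mathcal B_4(\mathcal C(F)^\perp))$ is a $2$-design with constant $\lambda$, then $F$ is differentially two-valued. The crucial observation is a surjectivity-type statement: every pair $(a,b)\in \mathrm{GF}(2^n)^*\times \mathrm{GF}(2^n)$ with $\delta(a,b)>0$ arises as $(x_1+x_2,F(x_1)+F(x_2))$ for some distinct $x_1,x_2$ --- take any $x$ with $F(x+a)+F(x)=b$ and set $x_1=x$, $x_2=x+a$. From the identity above, the constancy of $\lambda_{\{x_1,x_2\}}$ forces $\delta(a,b)=2\lambda+2$ for every such pair, i.e., $\delta$ takes only the single nonzero value $2\lambda+2$. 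Thus $F$ is differentially two-valued.

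The argument is essentially bookkeeping once the formula for $\lambda_{\{x_1,x_2\}}$ is in hand, so I do not expect a genuine obstacle. The one subtlety worth highlighting in the write-up is the non-vanishing of $\delta(x_1+x_2,F(x_1)+F(x_2))$ in the ``if'' direction (to rule out the value $0$ and conclude $\delta=2^s$), and the surjectivity observation in the converse, which is what lets the pointwise constancy of $\lambda_{\{x_1,x_2\}}$ propagate to constancy of $\delta$ across \emph{all} $(a,b)$ with $\delta(a,b)>0$.
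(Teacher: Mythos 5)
Your proposal is correct and follows essentially the same route as the paper: the authors derive exactly the identity $\lambda_{\{x_1,x_2\}}=\bigl(\delta(x_1+x_2,F(x_1)+F(x_2))-2\bigr)/2$ in the paragraph preceding the theorem and then state the result, leaving the two directions implicit, whereas you spell out the bookkeeping (the non-vanishing of $\delta$ on realized pairs and the surjectivity onto all $(a,b)$ with $\delta(a,b)>0$). The only microscopic point left unsaid is that the value $0$ must actually be attained (so that $\delta$ takes exactly two values), which follows at once from $\sum_{b}\delta(a,b)=2^n$ since the nonzero value is even.
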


\begin{corollary}
Let $F(x)$ be  a  function over $\mathrm{GF}(2^n)$. Then
$\left ( \mathcal P(\mathcal C(F)^{\perp}),  \mathcal B_4(\mathcal C(F)^{\perp})
\right )$ is  a Steiner
system $S(2, 4, 2^n)$ if and only if $F$  is differentially two-valued with $\{0, 4\}$.
\end{corollary}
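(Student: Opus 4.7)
The plan is to derive this corollary almost entirely from Theorem \ref{thm:design-diff} together with Proposition \ref{prop:delta=2}, by matching the parameter $\lambda=1$ that defines a Steiner system against the explicit parameters produced in that theorem. The argument is a two-way implication, so I would handle the two directions in turn, using the same underlying correspondence between $s$ and the intersection number of the design.

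For the forward direction, suppose $\left(\mathcal P(\mathcal C(F)^{\perp}), \mathcal B_4(\mathcal C(F)^{\perp})\right)$ is a Steiner system $S(2,4,2^n)$. By the definition of a Steiner system this is, in particular, a (simple) $2$-design with $\lambda=1$. Applying the ``only if'' part of Theorem \ref{thm:design-diff}, the existence of a $2$-design structure on $\mathcal B_4(\mathcal C(F)^{\perp})$ already forces $F$ to be differentially two-valued. Proposition \ref{prop:delta=2} then gives $\delta=2^s$ for some integer $s\ge 1$, and the ``furthermore'' clause of Theorem \ref{thm:design-diff} identifies the block-occurrence parameter as $\lambda = 2^{s-1}-1$. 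Setting this equal to $1$ forces $s=2$, i.e.\ $F$ is differentially two-valued with $\{0,4\}$.

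For the reverse direction, assume $F$ is differentially two-valued with $\{0,4\}$, so that $s=2$ in the notation of Theorem \ref{thm:design-diff}. That theorem yields immediately that $\left(\mathcal P(\mathcal C(F)^{\perp}), \mathcal B_4(\mathcal C(F)^{\perp})\right)$ is a $2$-$(2^n,4,2^{s-1}-1) = 2$-$(2^n,4,1)$ design, which is by definition a Steiner system $S(2,4,2^n)$.

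No step appears to be an obstacle here: the corollary is a direct numerical specialization of the theorem just proved, with the only substantive input being the fact from Proposition \ref{prop:delta=2} that a two-valued differential spectrum on $\mathrm{GF}(2^n)$ must take the form $\{0,2^s\}$, used to rule out non-power-of-two values of $\delta$ in the only-if direction. The derivation of simplicity is automatic because $\lambda=1$ precludes repeated blocks in $\mathcal B_4(\mathcal C(F)^{\perp})$.
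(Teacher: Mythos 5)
Your proposal is correct and takes essentially the same route the paper intends: the corollary is stated without proof as an immediate specialization of Theorem \ref{thm:design-diff}, and your matching of $\lambda = 2^{s-1}-1 = 1$ against the Steiner condition (with Proposition \ref{prop:delta=2} pinning the spectrum to $\{0,2^s\}$) is exactly that specialization. The only remark worth adding is that Proposition \ref{prop:delta=2} is not strictly needed, since the identity $\lambda_{\{x_1,x_2\}} = (\delta(a,b)-2)/2$ from the proof of Theorem \ref{thm:design-diff} already forces $\delta = 4$ directly when $\lambda = 1$.
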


Magma program shows that the Steiner system $S(2,4,2^n)$ from
the differentially two-valued $\{0, 4\}$ function  $F(x)=x^{2^{2i}-2^i+1}$ \cite{BCC10,HP08} or $F(x)=\alpha x^{2^i+1}+ \alpha^{2^m} x^{2^{2m}+2^{m+i}}$ \cite{BTT12}
 is equivalent to the incidence structure from points and lines of the affine geometry $\mathrm{AG}(2^{\frac{n}{2}},\mathrm{GF}(4))$.
It is still open whether  there is a differentially two-valued $\{0, 4\}$ function $F(x)$ such that
$(\mathcal P(\mathcal C(F)^{\perp}), \mathcal B_4(\mathcal C(F)^{\perp}))$ is not equivalent to the Steiner system from affine geometry.

With Theorem \ref{thm:design-diff}, we can directly use results of the differentially two-valued functions to study  the incidence  structure
$\left ( \mathcal P(\mathcal C(F)^{\perp}),  \mathcal B_4(\mathcal C(F)^{\perp})
\right )$. By Lemma 1 in \cite{CP19} and Theorem \ref{thm:design-diff}, one has the following.

\begin{corollary}
Let $F(x)$ be  a differentially $\delta$-uniform function over $\mathrm{GF}(2^n)$.
Then
$\left ( \mathcal P(\mathcal C(F)^{\perp}),  \mathcal B_4(\mathcal C(F)^{\perp})
\right )$ forms a $2$-design if and only if
\begin{align*}
\sum_{(a , b) \in \mathrm{GF}(2^n)^* \times \mathrm{GF}(2^n)} \mathcal W_{F}(a, b)^4= 2^{2n}(2^n-1) \delta.
\end{align*}
\end{corollary}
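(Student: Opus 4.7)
The plan is to chain together Theorem~\ref{thm:design-diff} with the cited fourth-moment identity (Lemma~1 of \cite{CP19}). By Theorem~\ref{thm:design-diff}, the incidence structure $\left(\mathcal{P}(\mathcal{C}(F)^{\perp}), \mathcal{B}_4(\mathcal{C}(F)^{\perp})\right)$ is a $2$-design if and only if $F$ is differentially two-valued. So the entire task reduces to showing that the Walsh-sum equality in the statement is equivalent to $F$ being differentially two-valued with values $\{0,\delta\}$.

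For the equivalence, I would use the well-known expansion obtained by squaring-twice and applying orthogonality of additive characters, namely the identity
\begin{align*}
\sum_{(a,b)\in \mathrm{GF}(2^n)^{*}\times \mathrm{GF}(2^n)}\mathcal{W}_{F}(a,b)^{4}
\;=\; 2^{2n}\sum_{(a,b)\in \mathrm{GF}(2^n)^{*}\times \mathrm{GF}(2^n)}\delta(a,b)^{2},
\end{align*}
which is precisely the content of Lemma~1 in \cite{CP19}. With this identity in hand, the corollary becomes a statement about the second moment of $\delta(a,b)$.

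The second moment is then controlled by the elementary pointwise inequality $\delta(a,b)^{2}\le \delta\,\delta(a,b)$ (since $\delta(a,b)\le \delta$ by definition of differential uniformity), with equality if and only if $\delta(a,b)\in\{0,\delta\}$. Summing and using the first-moment formula $\sum_{b\in \mathrm{GF}(2^n)}\delta(a,b)=2^n$ for each $a\neq 0$ gives
\begin{align*}
\sum_{(a,b)\in \mathrm{GF}(2^n)^{*}\times \mathrm{GF}(2^n)} \delta(a,b)^{2} \;\le\; \delta\cdot 2^{n}(2^{n}-1),
\end{align*}
with equality exactly when $F$ is differentially two-valued. Multiplying through by $2^{2n}$ and combining with Lemma~1 of \cite{CP19} yields the stated equality as the precise characterization of the two-valued property, and then Theorem~\ref{thm:design-diff} closes the equivalence.

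The only delicate point, and the one I would verify carefully, is bookkeeping the range of summation in the Walsh identity: the statement sums $\mathcal{W}_F(a,b)^4$ over $a\in \mathrm{GF}(2^n)^{*}$ (excluding $a=0$), and one must check that Lemma~1 of \cite{CP19} is quoted in the matching form so that the constant $2^{2n}(2^n-1)\delta$ lands on the nose; the $a=0$ contribution would otherwise produce a spurious additive term. Once that normalization is in place, the three-line argument above completes the proof.
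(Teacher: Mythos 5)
Your argument is correct and takes essentially the same route as the paper: the paper disposes of this corollary in one line by citing Theorem \ref{thm:design-diff} together with Lemma 1 of \cite{CP19}, and your fourth-moment identity combined with the equality case of $\delta(a,b)^2 \le \delta\,\delta(a,b)$ (using $\sum_b \delta(a,b)=2^n$) is precisely the content hidden in that citation. The normalization point you flag does work out --- restricting both sums to $a\in \mathrm{GF}(2^n)^{*}$ removes a $2^{4n}$ term from each side of the character-sum identity, so the constant $2^{2n}(2^n-1)\delta$ lands exactly as stated.
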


\begin{theorem}\label{thm:wtd-ds-dg}
Let $F(x)$ over $\mathrm{GF}(2^n)$ be differentially two-valued with $\{0, 2^s\}$. Suppose that
$\{ \mathcal W_{F}(\lambda, \mu): \lambda \in \mathrm{GF}(2^n)^*, \mu \in \mathrm{GF}(2^n)  \}
=  \{0,  2^{\frac{n+s}{2}}, -2^{\frac{n+s}{2}}\}$.
Then, the code $\mathcal C(F)$ and its dual $\mathcal C(F)^{\perp}$ support $2$-designs.
\end{theorem}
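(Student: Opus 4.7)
The plan is to apply the generalized Assmus-Mattson theorem (Theorem \ref{thm-designGAMtheorem}) with $t=2$. The argument rests on three ingredients: the three-valued Walsh hypothesis controls the weights of $\mathcal{C}(F)$, elementary considerations give $d^\perp = 4$, and Theorem \ref{thm:design-diff} supplies a nontrivial $2$-design on the weight-$4$ codewords of $\mathcal{C}(F)^\perp$.

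First, I will pin down $d^\perp = 4$ directly from the Delsarte description $\mathcal{C}(F)^\perp = \{(c_x): \sum_x c_x (F(x), x, 1) = \mathbf{0}\}$. Weight-$1$ codewords are ruled out by the last coordinate, weight-$2$ by the requirement that the $x_i$'s are distinct, and weight-$3$ by $1+1+1 \ne 0$ in $\mathrm{GF}(2)$; differential $2^s$-uniformity with $s \geq 1$ produces weight-$4$ solutions, giving $d^\perp = 4$. A standard character-sum computation then gives, for $(a,b)\neq(0,0)$,
\begin{align*}
\mathrm{wt}(\mathbf{c}_{a,b,c}) \;=\; 2^{n-1} - \tfrac{(-1)^c}{2}\,\mathcal{W}_F(a,b),
\end{align*}
while $a=b=0$ contributes weights $0$ and $2^n$. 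Since $\mathcal{W}_F(0,b) = 0$ for $b \neq 0$ and $\mathcal{W}_F(a,b) \in \{0, \pm 2^{(n+s)/2}\}$ for $a \neq 0$ by hypothesis, the nonzero weights of $\mathcal{C}(F)$ lie in $\{2^{n-1}-2^{(n+s)/2-1},\ 2^{n-1},\ 2^{n-1}+2^{(n+s)/2-1},\ 2^n\}$, and $d = 2^{n-1}-2^{(n+s)/2-1} > 2 = t$ in the intended parameter range, so that $t < \min\{d, d^\perp\}$.

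Next I set $S := \{2^{n-1}-2^{(n+s)/2-1},\ 2^{n-1},\ 2^{n-1}+2^{(n+s)/2-1}\} \subseteq \{d, d+1, \ldots, 2^n-2\}$, of size $|S|=3$. For $\ell \in \{d, \ldots, 2^n-2\} \setminus S$ one has $A_\ell(\mathcal{C}(F)) = 0$, so $\mathcal{B}_\ell(\mathcal{C}(F))$ is empty and is a trivial $2$-design. For the dual, with $|S|+t-1 = 4$, the sets $\mathcal{B}_{\ell^\perp}(\mathcal{C}(F)^\perp)$ for $0 \leq \ell^\perp \leq 3$ are empty (since $d^\perp = 4$) and thus trivially $2$-designs; for $\ell^\perp = 4$, Theorem \ref{thm:design-diff} gives a $2$-$(2^n, 4, 2^{s-1}-1)$ design from the hypothesis that $F$ is differentially two-valued with $\{0, 2^s\}$. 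All hypotheses of Theorem \ref{thm-designGAMtheorem} are then satisfied, and the conclusion is that both $\mathcal{C}(F)$ and $\mathcal{C}(F)^\perp$ support $2$-designs.

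The only nontrivial input is the weight-$4$ design in the dual, and that is exactly where the classical Assmus-Mattson theorem fails: it would demand at most $d^\perp - t = 2$ distinct weights of $\mathcal{C}(F)$ in the middle range, but we have three. Theorem \ref{thm:design-diff} converts the differential two-valuedness of $F$ into $2$-design structure on $\mathcal{B}_4(\mathcal{C}(F)^\perp)$, plugging precisely into the extra flexibility of Theorem \ref{thm-designGAMtheorem}; the identification of $d^\perp$ and the Walsh-based weight computation are otherwise routine bookkeeping.
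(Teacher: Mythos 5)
Your proof is correct and follows essentially the same route as the paper's: the same three-element set $S$ of middle weights, triviality of $\mathcal B_\ell(\mathcal C(F))$ off $S$, the bound $d^\perp\ge 4$, Theorem \ref{thm:design-diff} for $\mathcal B_4(\mathcal C(F)^{\perp})$, and then Theorem \ref{thm-designGAMtheorem}. The only slip is the assertion that $d^\perp=4$ in all cases: for $s=1$ there are no weight-$4$ codewords in $\mathcal C(F)^{\perp}$ (the design of Theorem \ref{thm:design-diff} has $\lambda=2^{s-1}-1=0$ and in fact $d^\perp=6$, as in Lemma \ref{lem:wtd-dgpra}), but this is harmless since your argument only uses $d^\perp\ge 4$ together with the possibly empty $2$-design on $\mathcal B_4(\mathcal C(F)^{\perp})$.
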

\begin{proof}
Let $S=\left \{2^{n-1}, 2^{n-1} \pm 2^{\frac{n+s-2}{2}} \right \}$.
Since $\mathcal W_{F}(\lambda, \mu)\in \{0,  2^{\frac{n+s}{2}}, -2^{\frac{n+s}{2}}\}$, the  incidence structure
$\left ( \mathcal P(\mathcal C(F)),  \mathcal B_k(\mathcal C(F))
\right )$ forms a trivial $2$-design for any $k \not \in S$.
It follows from Theorem \ref{thm:design-diff} and $d(\mathcal C(E)^{\perp})\ge 4$ \cite[Theorem 9]{CCZ98} that the incidence structure
$\left ( \mathcal P(\mathcal C(F)^{\perp}),  \mathcal B_k(\mathcal C(F)^{\perp})
\right )$ forms a $2$-design for $0 \le k \le 4= \# S+2-1$.
The desired conclusions then follow from Theorem \ref{thm-designGAMtheorem}.
\end{proof}

\begin{corollary}
Let $q$ be a power of $2$ and $m$ be a positive integer. Let $F(x)$ be a quadratic permutation over $\mathrm{GF}(q^m)$ of the form
\begin{eqnarray} \nonumber
F(x)=\sum_{0\leq i\leq j\leq m-1}c_{ij}x^{q^i+q^j}, \quad  \forall c_{ij} \in \mathrm{GF}({q^m}).
\end{eqnarray}
Suppose that $F(x)$ is  differentially $q$-uniform. Then, the code $\mathcal C(F)$ and its dual $\mathcal C(F)^{\perp}$ support $2$-designs.
\end{corollary}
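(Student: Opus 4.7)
The plan is to apply Theorem \ref{thm:wtd-ds-dg} with $n := m \log_2 q$ and $s := \log_2 q$, so that $q = 2^s$ and $\mathrm{GF}(q^m) = \mathrm{GF}(2^n)$. This requires verifying (i) that $F$ is differentially two-valued with set $\{0, q\}$, and (ii) that $\{\mathcal{W}_F(\lambda, \mu) : \lambda \in \mathrm{GF}(2^n)^*,\, \mu \in \mathrm{GF}(2^n)\} = \{0, \pm 2^{(n+s)/2}\}$. The key observation driving both verifications is that the symmetric form $B(x, a) := F(x+a) + F(x) + F(a) + F(0)$ is not merely $\mathrm{GF}(2)$-bilinear (which is automatic for quadratic $F$) but actually $\mathrm{GF}(q)$-bilinear: the prescribed shape of $F$ yields $B(x, a) = \sum c_{ij}\bigl(x^{q^i} a^{q^j} + a^{q^i} x^{q^j}\bigr)$, and $\alpha^{q^j} = \alpha$ for $\alpha \in \mathrm{GF}(q)$ forces $B(\alpha x, a) = B(x, \alpha a) = \alpha B(x, a)$.

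For step (i), I would use that $\delta(a, b) \in \{0, N_a\}$ with $N_a := |\ker B(\cdot, a)|$, and the $\mathrm{GF}(q)$-linearity of $B(\cdot, a)$ makes $N_a$ a power of $q$. Differential $q$-uniformity then gives $N_a \le q$, so $N_a \in \{1, q\}$ for every $a \ne 0$. The permutation hypothesis excludes $N_a = 1$: otherwise $B(\cdot, a)$ would be a bijection of $\mathrm{GF}(q^m)$ onto itself, so some $x$ would satisfy $B(x, a) = F(a) + F(0)$, forcing $F(x + a) = F(x)$ and violating injectivity. Hence $N_a = q$ for every $a \ne 0$, which is precisely the definition of differentially two-valued with $\{0, q\}$.

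For step (ii), each component $\mathrm{Tr}(\lambda F)$ is a quadratic Boolean function, hence plateaued, with Walsh values in $\{0, \pm 2^{(n + h_\lambda)/2}\}$, where $h_\lambda = \dim_{\mathrm{GF}(2)} R_\lambda$ and $R_\lambda := \{a : \mathrm{Tr}(\lambda B(x, a)) = 0\ \forall x\}$ is the radical of the associated bilinear form. Setting $V_a := \mathrm{Im}(B(\cdot, a))$, which by step (i) is a $\mathrm{GF}(q)$-subspace of $\mathrm{GF}(q^m)$ of $\mathrm{GF}(q)$-dimension $m - 1$, a cardinality count shows that its $\mathrm{GF}(2)$-orthogonal $V_a^{\perp_{\mathrm{GF}(2)}}$ (under the absolute trace) coincides with its $\mathrm{GF}(q)$-orthogonal $V_a^{\perp_{\mathrm{GF}(q)}}$ (under the relative trace), since both have size $q$. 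Thus $R_\lambda = \{a : \lambda \in V_a^{\perp_{\mathrm{GF}(q)}}\}$ is a $\mathrm{GF}(q)$-subspace, and the remaining task is to prove that $|R_\lambda| = q$ for every $\lambda \ne 0$, equivalently that the map $\phi \colon \mathbb{P}^{m-1}(\mathrm{GF}(q)) \to \mathbb{P}^{m-1}(\mathrm{GF}(q))$ sending $[a]$ to the $\mathrm{GF}(q)$-line $V_a^{\perp_{\mathrm{GF}(q)}}$ is a bijection. Granting this, $h_\lambda = s$ uniformly, the Walsh spectrum equals $\{0, \pm 2^{(n+s)/2}\}$, and Theorem \ref{thm:wtd-ds-dg} delivers the conclusion.

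I expect the main obstacle to be exactly the bijectivity of $\phi$. The standard fourth-moment identity $\sum_{\lambda \ne 0,\mu} \mathcal{W}_F(\lambda, \mu)^4 = 2^{2n} \sum_{a, b} \delta(a, b)^2$ together with the two-valued differential spectrum from (i) forces $\sum_{\lambda \ne 0} q^{e_\lambda} = q(q^m - 1)$, where $e_\lambda := h_\lambda / s$; this average is consistent with $e_\lambda \equiv 1$ but does not by itself rule out a mixture of larger and smaller fibers of $\phi$. Injectivity therefore has to be extracted from the permutation property of $F$ combined with the rigidity of the $\mathrm{GF}(q)$-bilinear form: if $V_a = V_{a'}$ with $a' \notin \mathrm{GF}(q)\cdot a$, then by $\mathrm{GF}(q)$-bilinearity the common image $V_a$ is preserved along every element of the $\mathrm{GF}(q)$-plane spanned by $a$ and $a'$, and comparing this with the one-to-one correspondence $x \mapsto F(x+a) + F(x)$ forced by the permutation property of $F$ should deliver the needed contradiction.
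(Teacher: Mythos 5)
Your overall strategy matches the paper's: verify the two hypotheses of Theorem \ref{thm:wtd-ds-dg} (that $F$ is differentially two-valued with $\{0,q\}$ and that its Walsh spectrum is $\{0,\pm 2^{(n+s)/2}\}$) and then invoke that theorem. The difference is that the paper obtains both hypotheses in one line by citing Theorems 5 and 6 of \cite{MTX19}, whereas you attempt to prove them from scratch. Your step (i) is in fact a complete and correct argument: the $\mathrm{GF}(q)$-bilinearity of $B(x,a)=F(x+a)+F(x)+F(a)+F(0)$ for $F$ of the prescribed shape, the resulting fact that $\delta(a,b)\in\{0,N_a\}$ with $N_a=\#\ker B(\cdot,a)$ a power of $q$, the bound $N_a\le q$ from $q$-uniformity, and the exclusion of $N_a=1$ via the permutation property are all sound.

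Step (ii), however, contains a genuine gap that you yourself flag but do not close. Your reduction is correct as far as it goes: $R_\lambda$ is a $\mathrm{GF}(q)$-subspace, $h_\lambda=\dim_{\mathrm{GF}(2)}R_\lambda$ is a multiple of $s$, and the fourth-moment identity combined with step (i) yields $\sum_{\lambda\neq 0}q^{e_\lambda}=q(q^m-1)$ with $e_\lambda=h_\lambda/s\ge 0$. As you observe, this average alone does not force $e_\lambda\equiv 1$ (e.g.\ for $q=4$, four indices with $e_\lambda=0$ can be compensated by one with $e_\lambda=2$). What would close the argument is the single missing fact that $R_\lambda\neq\{0\}$ for every $\lambda\neq 0$ (equivalently, surjectivity of your map $\phi$), since then $q^{e_\lambda}\ge q$ and the exact average forces $e_\lambda=1$ throughout; but your final paragraph only asserts that the permutation property ``should deliver the needed contradiction'' without an actual argument, and when $n$ is even the parity constraint $h_\lambda\equiv n\pmod 2$ does not rule out $h_\lambda=0$ either. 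So the Walsh-spectrum hypothesis of Theorem \ref{thm:wtd-ds-dg} is not established by your proposal. The paper avoids this difficulty entirely by importing the Walsh spectrum computation from \cite{MTX19}; to make your self-contained route work you would need to supply a genuine proof that every component form $\mathrm{Tr}(\lambda B(x,a))$ has a nontrivial radical, or otherwise prove directly that $h_\lambda=s$ for all $\lambda\neq 0$.
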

\begin{proof}
By Theorems 5 and 6  in \cite{MTX19}, the function $F(x)$ is differentially two-valued with
$\{0, q\}$ and has Walsh coefficients in $\{0, \pm q^{\frac{m+1}{2}}\}$.
The desired conclusion then follows from Theorem \ref{thm:wtd-ds-dg}.
\end{proof}

To determine the parameters of the $2$-designs from the
code $\mathcal C(F)$ and its dual $\mathcal C(F)^{\perp}$, we need the following lemma.
\begin{lemma}\label{lem:wtd-dgpra}
Let $F(x)$  be a  function over $\mathrm{GF}(2^n)$ with Walsh coefficients in $\{0, \pm 2^{\frac{n+s}{2}}\}$, where $0\le s \le n-1$.
Then the code $\mathcal C(F)$ has parameters $[2^n, 2n+1, 2^{n-1}-2^\frac{n+s-2}{2}]$ and its dual code $\mathcal C(F)^{\perp}$ has minimum distance
\begin{align*}
d^{\perp}=\begin{cases}
4, & s\ge 2, \\
6, & s=1.
\end{cases}
\end{align*}
Furthermore, the weight distribution of $\mathcal C(F)$ is given by
\begin{align*}
&A_{2^{n-1}- 2^{\frac{n+s-2}{2}}}=2^{n-s}(2^n-1),\\
&A_{2^{n-1}}= (2^n-1)(2^{n+1}-2^{n-s+1}+2),\\
&A_{2^{n-1}+ 2^{\frac{n+s-2}{2}}}=2^{n-s}(2^n-1),\\
&A_{2^n}=1,
\end{align*}
and $A_i=0$ for all other $i$. The number $A_{4}^{\perp}$ of the codewords of weight $4$  in $\mathcal C(F)^{\perp}$
equals to  $\frac{2^{n-2}(2^n-1)(2^{s-1}-1)}{3}$.
\end{lemma}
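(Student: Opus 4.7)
The plan proceeds in two stages: I first determine the weight distribution of $\mathcal{C}(F)$ by Walsh analysis, then deduce the $A_4^{\perp}$ formula and minimum distance of $\mathcal{C}(F)^{\perp}$ by combining the design-theoretic computation underlying Theorem~\ref{thm:design-diff} with a parity argument and (when $s=1$) the Pless power moments.

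For Stage~1, I would write
\[
\mathrm{wt}(\mathbf{c}_{a,b,c}) = \frac{1}{2}\bigl(2^n - (-1)^c\,\mathcal{W}_F(a,b)\bigr) \qquad (a\neq 0),
\]
handling the degenerate cases $a=0$ separately (they contribute weights $0$, $2^{n-1}$, $2^n$ with multiplicities $1$, $2(2^n-1)$, $1$). Let $N_0, N_+, N_-$ denote the number of $(a,b)\in \mathrm{GF}(2^n)^{*}\times \mathrm{GF}(2^n)$ with $\mathcal{W}_F(a,b)$ equal to $0$, $2^{(n+s)/2}$, $-2^{(n+s)/2}$ respectively. Parseval, applied for each fixed $a\neq 0$, gives $\sum_b \mathcal{W}_F(a,b)^2 = 2^{2n}$, and summing over $a \neq 0$ yields $(N_++N_-)\,2^{n+s} = (2^n-1)\,2^{2n}$, hence $N_++N_- = (2^n-1)\,2^{n-s}$ and $N_0 = (2^n-1)(2^n-2^{n-s})$. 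The key observation is that each pair $(a,b)$ with $\mathcal{W}_F(a,b)=\pm 2^{(n+s)/2}$ contributes, via $c=0,1$, exactly one codeword to each of the two weights $2^{n-1}\pm 2^{(n+s-2)/2}$, so only the sum $N_++N_-$ enters the frequencies. The listed frequencies then fall out, and checking that they sum to $2^{2n+1}$ confirms that $(a,b,c) \mapsto \mathbf{c}_{a,b,c}$ is injective, so $\dim \mathcal{C}(F) = 2n+1$ and $d = 2^{n-1}-2^{(n+s-2)/2}$.

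For Stage~2, the computation preceding Theorem~\ref{thm:design-diff} gives $\lambda_{\{x_1,x_2\}} = (\delta(x_1+x_2, F(x_1)+F(x_2))-2)/2$ for the number of weight-$4$ dual codewords containing two prescribed positions. Since $x_1,x_2$ are already solutions of $F(x+a)+F(x)=b$ with $a=x_1+x_2\neq 0$, the differential spectrum $\{0,2^s\}$ forces $\delta(a,b) = 2^s$, so $\lambda_{\{x_1,x_2\}} = 2^{s-1}-1$ for every pair. A double count (each weight-$4$ support contains $\binom{4}{2}=6$ pairs) then yields
\[
A_4^{\perp} = \frac{\binom{2^n}{2}(2^{s-1}-1)}{6} = \frac{2^{n-2}(2^n-1)(2^{s-1}-1)}{3}.
\]
For $s\geq 2$ this is positive, and combined with the lower bound $d^{\perp} \geq 4$ from \cite{CCZ98} we get $d^{\perp}=4$.

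The case $s=1$ is the main obstacle. Here $A_4^{\perp}=0$; moreover, because the third coordinate of every $\mathbf{u}_x$ equals $1$, any $\mathbf{c}\in \mathcal{C}(F)^{\perp}$ satisfies $\sum_x c_x = 0$ in $\mathrm{GF}(2)$, so $\mathcal{C}(F)^{\perp}$ contains only even-weight codewords and $A_5^{\perp}=0$. Hence $d^{\perp}\geq 6$, and to finish I must establish $A_6^{\perp}>0$. I would compute $A_6^{\perp}$ in closed form from the first six Pless power moments in \eqref{eq:PPM}, substituting the weight distribution of $\mathcal{C}(F)$ from Stage~1 together with the known $A_0^{\perp}=1$ and $A_1^{\perp}=\cdots=A_5^{\perp}=0$; the resulting triangular system determines $A_6^{\perp}$ uniquely, and a direct check that the expression is strictly positive for every odd $n$ (the hypothesis $s=1$ together with the spectrum $\{0,\pm 2^{(n+1)/2}\}$ forces $F$ to be almost bent, so $n$ is necessarily odd) completes the proof.
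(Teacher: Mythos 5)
Your Stage~1 is correct, and it takes a genuinely different (and arguably cleaner) route than the paper: the paper obtains the three nonzero multiplicities by solving the first three Pless power moments in \eqref{eq:PPM}, whereas you count the Walsh values directly via Parseval ($N_++N_-=(2^n-1)2^{n-s}$) and track how each $(a,b)$ splits over $c=0,1$. Both give the same table. One small logical slip: the frequencies summing to $2^{2n+1}$ is automatic because you are counting triples $(a,b,c)$, so it proves nothing about injectivity; injectivity of $(a,b,c)\mapsto \mathbf{c}_{a,b,c}$ follows instead from the weight formula, which shows $\mathrm{wt}=0$ only for $(a,b,c)=(0,0,0)$.

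The genuine gap is in Stage~2. The lemma assumes only that the Walsh coefficients lie in $\{0,\pm 2^{(n+s)/2}\}$; it does \emph{not} assume that $F$ is differentially two-valued with $\{0,2^s\}$. Yet your computation of $A_4^{\perp}$ sets $\delta(x_1+x_2,F(x_1)+F(x_2))=2^s$ for every pair $\{x_1,x_2\}$, which is exactly that unstated hypothesis. For $s=1$ the Walsh condition does force $\delta\in\{0,2\}$ (the identity $\sum_{a\neq 0,b}\delta(a,b)\bigl(\delta(a,b)-2\bigr)=0$ with $\delta$ even and nonnegative), but for $s\ge 2$ it does not, so as written your derivation of $A_4^{\perp}$ does not follow from the stated hypotheses. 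The paper avoids this entirely by reading $A_4^{\perp}$ off the fourth Pless power moment, using only the weight distribution of $\mathcal{C}(F)$ and $A_1^{\perp}=A_2^{\perp}=A_3^{\perp}=0$. Your double count can be repaired without the extra hypothesis: the Walsh condition yields $\sum_{a\neq 0,b}\delta(a,b)^2=(2^n-1)2^{n+s}$ via the fourth moment of $\mathcal{W}_F$, and summing $\lambda_{\{x_1,x_2\}}=(\delta-2)/2$ over all pairs with this identity reproduces the stated value of $A_4^{\perp}$. Finally, for $s=1$ the paper obtains $d^{\perp}\in\{4,6\}$ from \cite[Theorem 9]{CCZ98} and concludes $d^{\perp}=6$ from $A_4^{\perp}=0$; your alternative of computing $A_6^{\perp}$ from six power moments is feasible (your parity argument for $A_5^{\perp}=0$ is correct), but the positivity check of $A_6^{\perp}$ is left as a promissory note, so that case is also incomplete as written.
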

\begin{proof}
Let $\mathbf{c}(a,b,c)=\left (\mathrm{Tr}\left ( aF(x)+bx \right )+c \right )_{x\in \mathrm{GF}(2^n)}$,
where $a,b\in \mathrm{GF}(2^n)$ and $c\in \mathrm{GF}(2)$.
Then
\begin{align*}
\mathrm{wt}(\mathbf{c}(a,b,c))=&\frac{1}{2}\sum_{x\in \mathrm{GF}(2^n)} \left(1 - (-1)^{\mathrm{Tr}\left ( aF(x)+bx \right )+c } \right )\\
=& 2^{n-1}-\frac{(-1)^c}{2} \sum_{x\in \mathrm{GF}(2^n)} (-1)^{\mathrm{Tr}\left ( aF(x)+bx \right ) }\\
=&\begin{cases}
2^{n-1}-\frac{1}{2} \mathcal W_{F}(a, b), & a\neq 0,
\cr 2^{n-1}, & a=0, b\neq 0,
\cr 2^{n}, & a=b=0, c=1,
\cr 0, & a=b=0.
\end{cases}
\end{align*}
Then, $\mathrm{wt}(\mathbf{c}(a,b,c))\in \{0, 2^n, 2^{n-1}, 2^{n-1}\pm 2^{\frac{n+s-2}{2}}\}$,
and $\mathrm{wt}(\mathbf{c}(a,b,c))=0$ if and only if $a=b=c=0$.
Thus, the dimension of $\mathcal C(F)$ is equal to $2n+1$.
By Theorem 9 in \cite{CCZ98}, the minimal distance  $d^{\perp} =4 \text{ or } 6$.
Let $i_1=2^{n-1}- 2^{\frac{n+s-2}{2}}$, $i_2=2^{n-1}$, and $i_3=2^{n-1}+ 2^{\frac{n+s-2}{2}}$.
Note that $A_{2^{n}}=1$. The first three Pless power moments in (\ref{eq:PPM}) give
\begin{align*}
\left\{
  \begin{array}{l}
    A_{i_1} + A_{i_2} +A_{i_3} = 2^{2n+1}-2,   \\
    i_1 A_{i_1} + i_2  A_{i_2} + i_3 A_{i_3}  =   2^{2n+1-1} \cdot 2^n -2^n,  \\
   i_1^2 A_{i_1} + i_2^2  A_{i_2} + i_3^2 A_{i_3}  =2^{2n+1-2} \cdot 2^n(2^n+1)-2^{2n}.
  \end{array}
\right.
\end{align*}
Solving this system of equations, one gets
\begin{align*}
&A_{2^{n-1}- 2^{\frac{n+s-2}{2}}}=2^{n-s}(2^n-1),\\
&A_{2^{n-1}}= (2^n-1)(2^{n+1}-2^{n-s+1}+2),\\
&A_{2^{n-1}+ 2^{\frac{n+s-2}{2}}}=2^{n-s}(2^n-1).
\end{align*}
Using the fourth Pless power moment in (\ref{eq:PPM}), one has
\begin{align*}
A_4^{\perp}=\frac{2^{n-2}(2^n-1)(2^{s-1}-1)}{3}.
\end{align*}
Since $d^{\perp}=4$ or $6$, one obtains
\begin{align*}
d^{\perp}=\begin{cases}
4, & s\ge 2,
\cr 6, & s=1.
\end{cases}
\end{align*}
It completes the proof.
\end{proof}

Combining Theorem \ref{thm:wtd-ds-dg} and Lemma \ref{lem:wtd-dgpra}, we deduce the following.
\begin{theorem}
Let $F(x)$ over $\mathrm{GF}(2^n)$ be differentially two-valued with $\{0, 2^s\}$
and have
Walsh coefficients in  $ \{0, 2^{\frac{n+s}{2}}, -2^{\frac{n+s}{2}}\}$.
Then, $\mathcal C(F)$ holds a  $2$-$(2^n,k,\lambda)$ design for the
following pair $(k, \lambda)$:
\begin{itemize}
\item $(k, \lambda)=\left (2^{n-1}\pm  2^{\frac{n+s-2}{2}},\left ( 2^{n-s-1}\pm 2^{\frac{n-s-2}{2}} \right ) \left ( 2^{n-1}\pm 2^{\frac{n+s-2}{2}} -1\right ) \right )$, and
\item $(k, \lambda)=\left (2^{n-1}, (2^{n-1}-1)(2^{n}-2^{n-s}+1) \right )$.
\end{itemize}
\end{theorem}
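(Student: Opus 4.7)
The plan is to combine two ingredients already established in the excerpt: Theorem \ref{thm:wtd-ds-dg}, which guarantees that $\mathcal C(F)$ supports $2$-designs for every Hamming weight under the stated hypotheses, and Lemma \ref{lem:wtd-dgpra}, which pins down the weight distribution of $\mathcal C(F)$ under the assumption that its Walsh coefficients lie in $\{0, \pm 2^{(n+s)/2}\}$. Together these say that for every nonzero weight $k$ appearing in $\mathcal C(F)$, the pair $(\mathcal P(\mathcal C(F)), \mathcal B_k(\mathcal C(F)))$ is a nontrivial $2$-$(2^n,k,\lambda)$ design, and to extract $(k,\lambda)$ I only need to read off $A_k$ and solve for $\lambda$ using the basic design identity.

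Concretely, first I invoke Theorem \ref{thm:wtd-ds-dg} to obtain that $(\mathcal P(\mathcal C(F)), \mathcal B_k(\mathcal C(F)))$ is a $2$-design for each $k$, where the block multiset is $\mathcal B_k(\mathcal C(F))=\{\{\mathrm{Supp}(\mathbf c)\colon \mathbf c\in\mathcal C(F),\ \mathrm{wt}(\mathbf c)=k\}\}$ (no division by $q-1$ is needed since $q=2$). Thus the number of blocks equals $b_k=A_k(\mathcal C(F))$. The identity $\binom{\nu}{t}\lambda=\binom{k}{t}b_k$ with $\nu=2^n$ and $t=2$ then yields
\begin{equation*}
\lambda = \frac{k(k-1)\,A_k(\mathcal C(F))}{2^n(2^n-1)}.
\end{equation*}
Second, I substitute the values of $A_k(\mathcal C(F))$ supplied by Lemma \ref{lem:wtd-dgpra}, namely $A_k=2^{n-s}(2^n-1)$ for $k=2^{n-1}\pm 2^{(n+s-2)/2}$ and $A_k=(2^n-1)(2^{n+1}-2^{n-s+1}+2)$ for $k=2^{n-1}$.

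The remaining work is purely arithmetic. For $k=2^{n-1}\pm 2^{(n+s-2)/2}$, the formula collapses to $\lambda = k(k-1)/2^{s}$, and factoring $k/2^{s}=2^{n-s-1}\pm 2^{(n-s-2)/2}$ produces exactly the first listed pair $(k,\lambda)$. For $k=2^{n-1}$, cancelling $2^n(2^n-1)$ against $2^{n-1}(2^n-1)$ and simplifying $(2^{n+1}-2^{n-s+1}+2)/2=2^n-2^{n-s}+1$ yields $\lambda=(2^{n-1}-1)(2^n-2^{n-s}+1)$, the second listed pair. No other nonzero weight appears besides the all-one codeword (which gives the trivial design on the full point set and is not listed), so the enumeration is complete.

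The only mild subtlety is ensuring that the hypotheses of Lemma \ref{lem:wtd-dgpra} are met under the theorem's assumptions; but the Walsh-spectrum condition $\{0,\pm 2^{(n+s)/2}\}$ is precisely what that lemma requires, so no extra work is needed there. Thus the ``hard part'' is really just the bookkeeping in the two weight cases, and I expect no substantive obstacle beyond keeping the $\pm$ signs in the parameters $k$ and $\lambda$ consistent.
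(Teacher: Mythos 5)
Your proposal is correct and is essentially the paper's own argument: the paper derives this result simply by "combining Theorem \ref{thm:wtd-ds-dg} and Lemma \ref{lem:wtd-dgpra}," and your computation of $\lambda$ via $\binom{\nu}{2}\lambda=\binom{k}{2}A_k$ is exactly the bookkeeping the paper leaves implicit. The arithmetic in both weight cases checks out.
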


To show the existence of the $2$-designs in Theorem \ref{thm:wtd-ds-dg}, we describe some functions over $\mathrm{GF}(2^n)$ which are differentially two-valued with $\{0, 2^s\}$
and have Walsh coefficients in $\{0, \pm 2^{\frac{n+s}{2}}\}$.

\begin{enumerate}
\item The first family of differentially two-valued   monomials with Kasami exponents: $F(x)=x^{2^{2i}-2^i+1}$, where  $n$ and $i$ are positive integers, $n\neq 3i$,  $s=\gcd(n,i)$, and  $\frac{n}{s}$ is odd.
Then  $F(x)$ is over $\mathrm{GF}(2^n)$ and differentially two-valued with $\{0, 2^s\}$,
and has  Walsh coefficients in $\{0, \pm 2^{\frac{n+s}{2}}\}$ \cite{BCC10,HP08}.
\item The second family of differentially two-valued functions  discovered by  Bracken, Tan, and Tan \cite{BTT12}:
$F(x)=\alpha x^{2^i+1}+ \alpha^{2^m} x^{2^{2m}+2^{m+i}}$, where $n=3m$,  $m$ and $i$ are two positive integers,
$3\nmid m$, $3|(m+i)$,  $s= \gcd(m, i)$,  $2 \nmid \frac{m}{s}$, and  $\alpha$ is a primitive element of $\mathrm{GF}(2^{n})$.
Then  $F(x)$ is over $\mathrm{GF}(2^{n})$ and differentially two-valued with $\{0, 2^s\}$,
and has Walsh coefficients in $\{0, \pm 2^{\frac{3m+s}{2}}\}$ \cite{BTT12}.
\end{enumerate}

When $s\ge 2$, the original Assmus-Mattson Theorem says that the codes $\mathcal C(F)$
and $\mathcal C(F)^{\perp}$ for $F(x)=x^{2^{2i}-2^i+1}$
and $F(x)=\alpha x^{2^i+1}+ \alpha^{2^m} x^{2^{2m}+2^{m+i}}$ support only $1$-designs.
Magma program shows that, in general, the codes $\mathcal C(F)$ and $\mathcal C(F)^{\perp}$
are not $2$-transitive or $2$-homogeneous.
However, by our generalization of the Assmus-Mattson theorem, these codes support $2$-designs.
This is the third example showing that Theorem \ref{thm-designGAMtheorem} is more powerful
than the original Assmus-Mattson Theorems (i.e.,
Theorems \ref{thm-AMTheoremExt} and \ref{thm-AMTheorem}).

\section{Summary and concluding remarks}\label{sec:conc}

The main contributions of this paper are the following:
\begin{itemize}
\item The first one is the general theory for punctured and shorted codes of linear codes
      supporting $t$-design documented in Section \ref{sec-pscode1}. The general theory
      led to several classes of binary codes with interesting parameters and known weight
      distributions, which were presented in Sections \ref{sec-pscode2} and \ref{sec-pscode3}.
      Some of the codes are distance-optimal and some have the best known parameters. These
      codes can be used for secret sharing \cite{ADHK98,YD06}.
      The general theory also played an important role in later sections.
\item The second is the characterization of $t$-designs supported by a linear code via the
      weight distributions of punctured and shortened codes of the code, which was documented
      in Theorem \ref{thm:tdesign-wtcode}. This characterization shows the importance of the
      weight distribution of linear codes in constructing $t$-designs from linear codes.
\item The third is the generalized Assmus-Mattson theorem described in Theorem
      \ref{thm-designGAMtheorem}, which outperformed the original Assmus-Mattson Theorem
      in the three cases treated in this paper.
\item The fourth is the link between some $2$-designs and differentially $\delta$-uniform
      functions, which was presented in Section \ref{sec:dede}. With is link, some $2$-designs
      and some Steiner systems $S(2, 4, 2^n)$ were constructed.
\end{itemize}

It would be interesting to use the generalized Assmus-Mattson theorem (i.e., Theorem
\ref{thm-designGAMtheorem}) to obtain more $t$-designs that cannot be produced with the original
Assmus-Mattson theorem. The three cases dealt with in this paper are the only known ones to the
best knowledge of the authors.

\end{document}